\documentclass[sigconf]{acmart}
% \settopmatter{printacmref=false} 

% See https://authors.acm.org/proceedings/production-information/accepted-latex-packages
\usepackage[linesnumbered, ruled]{algorithm2e}
\usepackage{graphicx}
\usepackage{amsmath}
\usepackage{float}
\usepackage{xspace}
\usepackage{hyperref}
\usepackage[noabbrev, capitalize, nameinlink]{cleveref}
\usepackage{listings}
\usepackage{fancyhdr}
\usepackage{multirow}
\usepackage{subcaption}
\usepackage[many]{tcolorbox}
\usepackage{appendix}

\pagestyle{fancy}
\fancyhf{}
\fancyfoot[RE]{\thepage}

\newcommand{\Semijoin}[0]{\ltimes}
\newcommand{\TreeStruct}{\texttt{LargestRoot}\xspace}
\newcommand{\SafeSubJoin}{\texttt{SafeSubjoin}\xspace}
\newcommand{\StoL}{\texttt{Small2Large}\xspace}

\newcommand{\UseBF}{\texttt{ProbeBF}\xspace}
\newcommand{\LogicalUseBF}{\texttt{LogicalProbeBF}\xspace}

\newcommand{\CreateBF}{\texttt{CreateBF}\xspace}
\newcommand{\LogicalCreateBF}{\texttt{LogicalCreateBF}\xspace}

\newcommand{\robustmetric}{RF\xspace}

\newcommand{\NoPT}{w/o RPT \xspace}
\newcommand{\Yann}{Yannakakis\xspace}
\newcommand{\YannAlg}{Yannakakis algorithm\xspace}
\newcommand{\PT}{Predicate Transfer\xspace}
\newcommand{\pt}{PT\xspace}

\newcommand{\BF}{Bloom filter\xspace}
\newcommand{\BFs}{Bloom filters\xspace}
\newcommand{\RPT}{Robust Predicate Transfer\xspace}

\newcommand{\rpt}{RPT\xspace}

\newcommand{\duckdb}{DuckDB\xspace}
\newcommand{\tpch}{TPC-H\xspace}
\newcommand{\job}{JOB\xspace}
\newcommand{\tpcds}{TPC-DS\xspace}
\newcommand{\dsb}{DSB\xspace}

\newcommand{\rleftarrows}{\mathrel{\raise.75ex\hbox{\oalign{%
  $\hfil\scriptstyle\relbar$\cr
  \vrule width0pt height.5ex$\scriptstyle\smash\leftarrow$\cr}}}}
\newcommand{\rightlarrows}{\mathrel{\raise.75ex\hbox{\oalign{%
  $\scriptstyle\rightarrow$\hfil\cr
  $\scriptstyle\vrule width0pt height.5ex\relbar$\cr}}}}
\newcommand{\leftrarrows}{\mathrel{\raise.75ex\hbox{\oalign{%
  $\scriptstyle\leftarrow$\cr
  \vrule width0pt height.5ex$\hfil\scriptstyle\relbar$\cr}}}}
\newcommand{\lrightarrows}{\mathrel{\raise.75ex\hbox{\oalign{%
  $\scriptstyle\relbar$\hfil\cr
  $\scriptstyle\vrule width0pt height.5ex\smash\rightarrow$\cr}}}}
\newcommand{\Rrelbar}{\mathrel{\raise.75ex\hbox{\oalign{%
  $\scriptstyle\relbar$\cr
  \vrule width0pt height.5ex$\scriptstyle\relbar$}}}}

\makeatletter
\def\leftrightarrowsfill@{\arrowfill@\leftrarrows\Rrelbar\lrightarrows}
\def\rightleftarrowsfill@{\arrowfill@\rleftarrows\Rrelbar\rightlarrows}
\newcommand{\xleftrightarrows}[2][]{\ext@arrow 3399\leftrightarrowsfill@{#1}{#2}}
\newcommand{\xrightleftarrows}[2][]{\ext@arrow 3399\rightleftarrowsfill@{#1}{#2}}
\makeatother

\bibliographystyle{plainnat}

\begin{document}

\title{Debunking the Myth of Join Ordering: Toward Robust SQL Analytics}

\author{Junyi Zhao}
\affiliation{
  \institution{Tsinghua University}
  \city{Beijing}
  \country{China}}
\email{zhaojy20@mails.tsiinghua.edu.cn}

\author{Kai Su}
\affiliation{
  \institution{Tsinghua University}
  \city{Beijing}
  \country{China}}
\email{suk23@mails.tsinghua.edu.cn}

\author{Yifei Yang}
\affiliation{
  \institution{University of Wisconsin-Madison}
  \city{Madison}
  \country{USA}}
\email{yyang673@wisc.edu}

\author{Xiangyao Yu}
\affiliation{
  \institution{University of Wisconsin-Madison}
  \city{Madison}
  \country{USA}}
\email{yxy@cs.wisc.edu}

\author{Paraschos Koutris}
\affiliation{
  \institution{University of Wisconsin-Madison}
  \city{Madison}
  \country{USA}}
\email{paris@cs.wisc.edu}

\author{Huanchen Zhang}
\authornote{Huanchen Zhang is also affiliated with the Shanghai Qi Zhi Institute. Corresponding author}
\affiliation{
  \institution{Tsinghua University}
  \city{Beijing}
  \country{China}}
\email{huanchen@tsinghua.edu.cn}

\acmConference[SIGMOD 25']{International Conference on Management of Data}{June 22-27}{Berlin, Germany}

\begin{abstract}
    Join order optimization is critical in achieving good query performance. Despite decades of research and practice, modern query optimizers could still generate inferior join plans that are orders of magnitude slower than optimal. Existing research on robust query processing often lacks theoretical guarantees on join-order robustness while sacrificing query performance. In this paper, we rediscover the recent \PT technique from a robustness point of view. We introduce two new algorithms, \TreeStruct and \SafeSubJoin, and then propose \RPT (\rpt) that is provably robust against arbitrary join orders of an acyclic query. We integrated \RPT with \duckdb, a state-of-the-art analytical database, and evaluated against all the queries in \tpch, \job, and \tpcds benchmarks. Our experimental results show that \rpt improves join-order robustness by orders of magnitude compared to the baseline. With \rpt, the largest ratio between the maximum and minimum execution time out of random join orders for a single acyclic query is only $1.6\times$ (the ratio is close to 1 for most evaluated queries). Meanwhile, applying \rpt also improves the end-to-end query performance by $\approx$$1.5\times$ (per-query geometric mean). We hope that this work sheds light on solving the practical join ordering problem.

\end{abstract}

\maketitle

\section{Introduction}
\label{sec:intro}

A query optimizer is a critical and perhaps most difficult component to develop in a relational database management system (RDBMS). Despite decades of research and practice, modern query optimizers are still far from reliable~\cite{leis2015HowGood}. Among the many challenges, join ordering is the crown jewel of query optimization. Determining an optimal join order requires not only an efficient algorithm to search the enormous plan space but also an accurate cardinality estimation of the intermediate results. The latter is extremely difficult despite recent efforts to bring machine learning techniques to the problem~\cite{2020are_we_ready, lehmann2023IsLearned}. The reality is that the optimizers today constantly generate plans that are orders of magnitude slower than optimal~\cite{2014optimizationsolved?, 2019tutorial_robust, 2021survey_optimizer}.

Prior research on robust query processing typically approaches the problem in two ways. The first is to prefer plans with more stable performance against cardinality estimation uncertainties during query optimization~\cite{2002LEC, 2005RCE, 2007plan_diagram, 2008strict_plan_diagram}. Such a ``conservative'' plan, however, often sacrifices query performance, and there is no theoretical guarantee of the plan's robustness. Another approach (i.e., re-optimization) is to collect the true cardinalities of intermediate results and reinvoke the optimizer at query execution time to generate better (remaining) plans~\cite{1998reopt, 2000eddies, 2004pop, Perron19, 2023reopt_zhao, justen2024polar}. Nonetheless, the overhead of materializing the intermediate results at pre-defined re-optimization points often offsets the benefit of switching to a more efficient plan.

Fortunately, the seminal \YannAlg offers encouraging theoretical results~\cite{yannakakis1981YA}. The algorithm guarantees a complexity linear to the input + output size for any acyclic query regardless of its join order. The key idea is to perform a full semi-join reduction on the input relations (i.e., the semi-join phase) before joining them (i.e., the join phase) so that the remaining tuples must appear in the query's final output. Despite the appealing theoretical guarantee, \YannAlg received few adoptions because of the costly semi-join operation.

The recent \PT (PT) algorithm proposes to speed up the above semi-joins by building \BFs instead of full hash tables~\cite{yang2023PT}. The original paper focused on the impressive performance advantages of the technique with an order-of-magnitude improvement over the default query plans on a prototype system. Although \PT was inspired by the \YannAlg, it fails to inherit the strong theoretical guarantee for acyclic queries because the algorithm does not ensure a full reduction of the input relations.

In this paper, we rediscover \PT from a robustness point of view. We propose \textbf{\RPT} (\rpt) with two new algorithms on top of the original PT to guarantee join-order robustness. We first introduce the \TreeStruct algorithm, which finds a \emph{join tree} of an acyclic query by constructing a maximum spanning tree on its weighted join graph, to warrant a full reduction in the semi-join phase (aka transfer phase in \rpt). To guarantee the robustness in the join phase of \rpt, we propose the \SafeSubJoin algorithm to verify the ``safety'' of a join order (i.e., its runtime cost is at most a constant factor away from the optimal) if the query is not $\gamma$-acyclic.

We implemented the \RPT algorithm in \duckdb, a state-of-the-art in-process analytical database management system. The modifications to \duckdb were non-invasive: we introduced two new operators for building and probing \BFs and inserted an \rpt optimization step/submodule into the optimizer's workflow. Our evaluation includes the three most widely used benchmarks for analytical workloads: \tpch~\cite{TPCH}, \job~\cite{JOB}, and \tpcds~\cite{TPCDS}. We measure the \emph{join-order robustness} of a query using the performance gap between executing different random join orders. The smaller the gap, the more robust the query.

The experimental results are promising. Compared to the baseline (i.e., \duckdb without \rpt integrated), \rpt improves the robustness factor (i.e., ratio between the maximum and minimum execution time out of 200 random join orders) by orders of magnitude for acyclic queries (which accounts for $94\%$ of the queries in the benchmarks). \rpt allows most queries to have a robustness factor close to 1, and the largest performance gap between the best and worst join orders is only $2.8\times$ among all the acyclic queries in \tpch, \job, and \tpcds. We then zoomed in and verified the robustness of the \TreeStruct algorithm. Furthermore, applying \rpt improves the end-to-end execution time per query by $\approx$$1.5\times$ (geometric mean) over the baseline. We also concluded that it is not worthwhile to consider bushy plans for \rpt because they brought little performance gain compared to left-deep in our evaluation.

%The implications of our results might be fundamental to designing future query engines and optimizers.
The implications of our results could impact the design of future query engines and optimizers. With \RPT, join order optimization is no longer a critical challenge for acyclic queries because of \rpt's strong theoretical guarantee and practical efficiency. Future optimizers could limit their search space to left-deep plans (or simply pick a random join order) and become much more tolerant against cardinality estimation errors. Despite our promising results in achieving practical join order robustness, whether an instance-optimal join algorithm exists for cyclic queries remains an open problem.

We make three primary contributions in this paper. First, we propose two new algorithms (with rigorous proofs) to make \PT robust against arbitrary join orders. Second, we show that our \RPT algorithm is easy to integrate by implementing it in \duckdb, a state-of-the-art analytical system. Finally, we discover through experiments that \rpt exhibits outstanding robustness while improving the overall query performance at the same time, a big step toward solving the practical join ordering problem.

\section{Preliminaries}
\label{sec:prelim}

In this section, we first discuss the challenges and prior efforts in solving the join ordering problem. We then describe the \YannAlg and \PT in detail.

% We start this section by discussing the challenges of join ordering in query optimization. We then discuss prior work on robust query processing, which is a promising solution to the join ordering challenge. Finally, we provide a brief overview of Predicate Transfer.

\begin{figure*}[t!]
  \begin{subfigure}[t]{.25\linewidth}%
    \center
    \includegraphics[width=\linewidth]{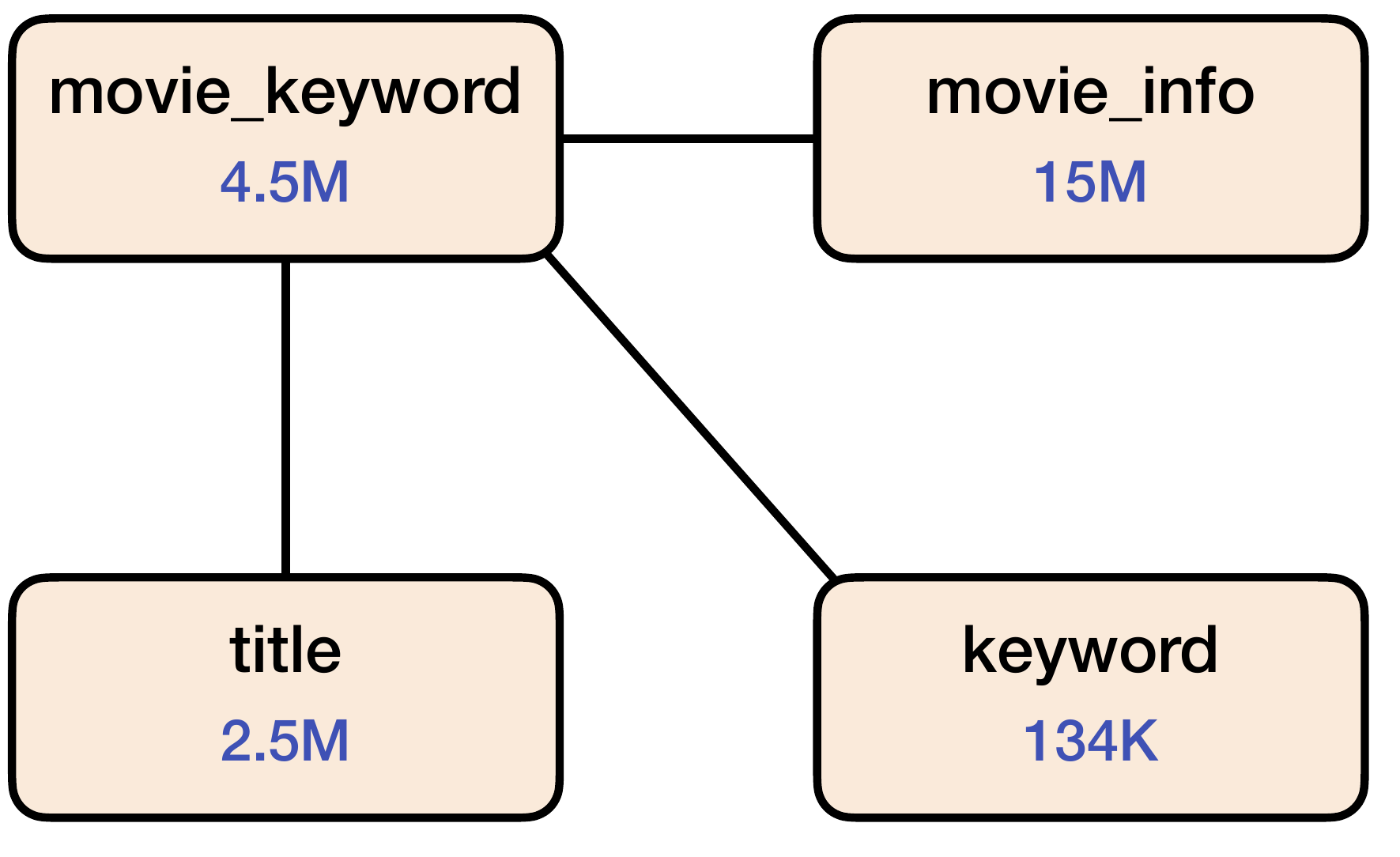}
    \caption{Join Graph}
    \label{fig:yannakakis-join-graph}
  \end{subfigure}
  \hfill
  \begin{subfigure}[t]{.42\linewidth}%
    \center
    \includegraphics[width=\linewidth]{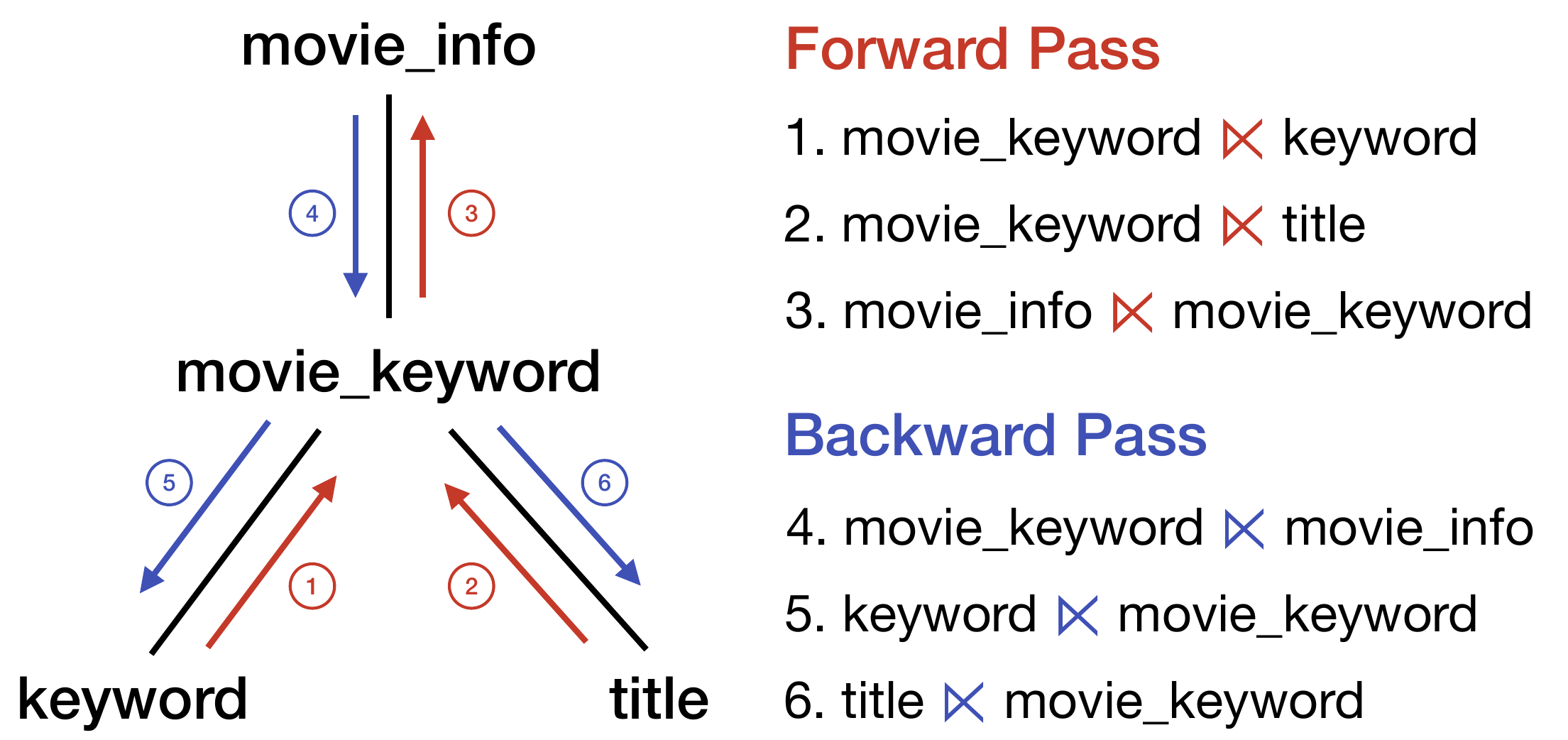}
    \caption{Semi-Join Phase}
    \label{fig:yannakakis-semi-join}
  \end{subfigure}
  \hfill
  \begin{subfigure}[t]{.28\linewidth}%
    \center
    \includegraphics[width=\linewidth]{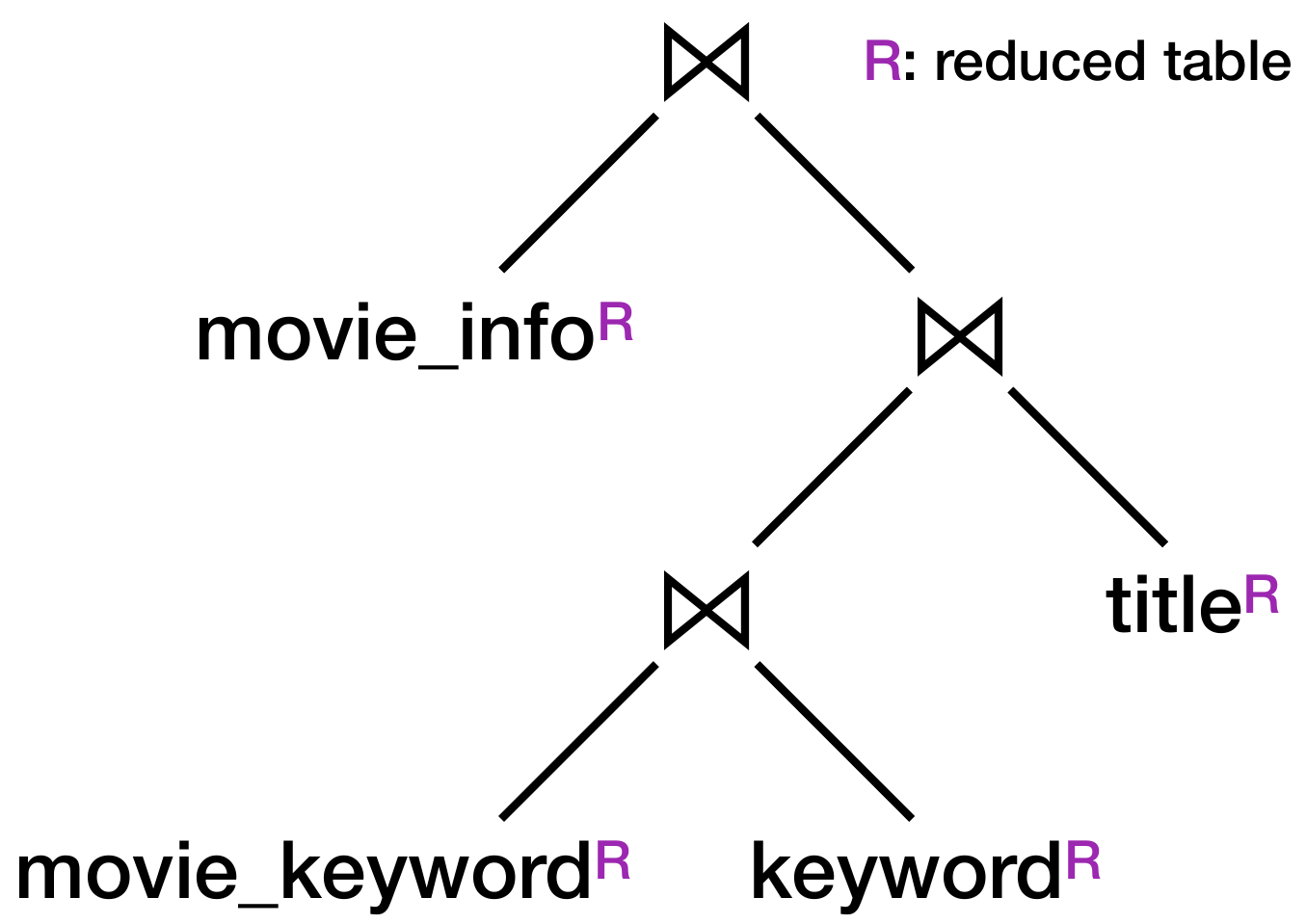}
    \caption{Join Phase}
    \label{fig:yannakakis-join}
  \end{subfigure}
  \caption{
    \textbf{\YannAlg on JOB 3a.}
  }
  \label{fig:yannakakis}
\end{figure*}

\subsection{Join Order Optimization}
\label{sec:prelim:jo}

Optimizing join orders is one of the most important tasks in query optimization. A bad join order leads to large intermediate results and can be orders of magnitude slower than the optimal plan~\cite{selinger1979,2019tutorial_robust,leis2015HowGood}. Obtaining an optimal join order in a modern query optimizer requires accurate cardinality estimation and efficient plan enumeration. Both remain difficult after over 40 years of research.

Cardinality estimation (CE) predicts the number of tuples produced by each operator in a query plan. Obtaining accurate estimations of the join cardinalities is extremely difficult. Without detailed statistics, the query optimizer typically makes the following assumptions: (1) Uniformity: values in a column are uniformly distributed within the global min/max; (2) Independence: values in different columns are uncorrelated; (3) Inclusion: every value from the probe side of the join must appear in the build side. These assumptions are rarely valid in real-world applications. Although having further statistics (e.g., histograms on joint distributions) can improve the accuracy of join cardinality estimation, it is prohibitively expensive to maintain comprehensive cross-column statistics. Even worse, studies show that a small estimation error will propagate exponentially with respect to the number of joins~\cite{1991errorpropagation,robustoptimization}. Leis et al.~\cite{leis2015HowGood} reported that none of the optimizers in real-world DBMSs (including the commercial ones) can estimate join cardinalities accurately: most of them under-estimate by 2-4 orders of magnitude when the number of joins $\ge 5$. Recent proposals tackle the CE problem using machine learning and deep learning techniques~\cite{2019MSCN, 2019Naru, 2019light_ML, halford2019bayesian, 2020deepdb, zhu2020flat, 2020deep, park2020quicksel, shetiya2020astrid, yang2020neurocard, liu2021fauce, wu2021unified}, but none of them so far has shown evidence of robust estimation.
% \todo{cite MSCN, NeuroCard, DeepDB, and more. Check out this survey: https://dl.acm.org/doi/pdf/10.1145/3514221.3526154}

Plan enumeration refers to the process of searching equivalent join orders and finding the query plan that has the minimal cost. Plan enumeration has been proven to be NP-hard~\cite{1984NP}. Prior work developed efficient algorithms based on dynamic programming~\cite{moerkotte2006DPccp, 2008dphyper}, and they are sufficient when the number of joins is small (e.g., < 10). However, because the search space grows exponentially with respect to the number of joins, it becomes impractical to perform an exhaustive search for a query with a large number of joins. Optimizers must fall back to heuristics-based approaches (e.g., the genetic algorithm in PostgreSQL~\cite{Postgres} and the greedy algorithm in DuckDB~\cite{duckdb}), sacrificing plan optimality for a reasonable optimization complexity.

A query execution is \textbf{robust} if its performance is never too far from the optimal even if the cardinality estimations are way off (which is inevitable)~\cite{2019tutorial_robust, robustoptimization}. Under the context of join ordering, it means that the risk of choosing a catastrophic join order due to wild CE errors is low. There are typically two ways to improve the robustness of SQL execution. The first is to favor a ``\textbf{robust plan}'' rather than the cost-optimal one to take into account the uncertainty during query optimization~\cite{2002LEC, 2005RCE, 2007plan_diagram, 2008strict_plan_diagram}. For example, the optimizer would estimate cardinalities using intervals (rather than single values)~\cite{proactive} or probability distributions~\cite{2005RCE} and choose plans that have stable costs within certain confidence intervals. However, such a robust plan may not exist, and the plan chosen often exhibits a noticeable performance hit compared to the optimal~\cite{robustoptimization}.

The second approach to improving plan robustness is through \textbf{re-optimization}~\cite{1998reopt, 1999reopt_shared_nothing, 2000eddies, 2004pop, 2007pop_parallel, 2016planbouquets, Perron19, 2023reopt_zhao, justen2024polar}.
% \hzcmt{check if all these citations are about re-optimization}
The main idea is to correct CE mistakes while executing the query. Re-optimization must define specific materialization points in the query plan (usually at pipeline breakers) and collect statistics to obtain the true cardinalities at those points. If there is a large gap between the true cardinality and the estimated one, the system will re-invoke the optimizer, hoping to generate a better plan for the remaining operations. Although re-optimization enables self-correction at run time, materializing intermediate results is often costly and might compromise the end-to-end query performance.

\subsection{\YannAlg \& \PT}
\label{sec:prelim:pt}

An alternative thought of approaching join-order robustness is to design a join algorithm with bounded intermediate result sizes. Given a join query $Q$, let $N$ be the total number of tuples in all the input relations, and let $OUT$ be the number of tuples in the query output. The classic \textbf{\YannAlg}~\cite{yannakakis1981YA} guarantees a query complexity of $O(N + OUT)$\footnote{Considering the query size to be constant.}, which is the same as simply scanning the input and writing the output. Therefore, \YannAlg is instance optimal. The key idea is to pre-filter the tuples in the input relations that will not appear in the final output. The pre-filtering is realized via a series of semi-join reductions. A semi-join  $R \ensuremath{\ltimes} S$ outputs tuples from the left relation that have a match in the right relation. More formally, $R \ensuremath{\ltimes} S = \pi_{\text{attr(}R\text{)}}(R \bowtie S)$. In other words, a semi-join uses the right table as a filter to eliminate unmatched tuples in the left table.

Given a \emph{join graph} of a query where each vertex is a table scan, and each edge represents an equi-join (e.g., \cref{fig:yannakakis-join-graph}), the \YannAlg first picks an arbitrary vertex as root and obtains a \emph{join tree} (e.g., \cref{fig:yannakakis-semi-join}) via the GYO ear removal algorithm \cite{gyo}. The algorithm requires that the join graph is \emph{acyclic}  ($\alpha$-acyclic to be precise~\cite{yannakakis1981YA}) so that a join tree always exists. \YannAlg then proceeds to the \textbf{semi-join phase} \cite{usingsemi}, consisting of a \emph{forward pass} and a \emph{backward pass}. In the forward pass, the algorithm traverses the join tree from leaf to root (e.g., post-order traversal). For each node $R$, suppose its children are $S_1, S_2, \cdots, S_n$. The algorithm performs semi-join reduction on $R$ using all of its children (i.e., \texttt{for} $i = 1, 2, \cdots, n: R \ensuremath{\ltimes} S_i$). An example is shown in \cref{fig:yannakakis-semi-join}. Once the forward pass reaches the root, the algorithm starts the backward pass from root to leaf (e.g., level-order traversal). For each node $R$ with its parent $P$, $R \ensuremath{\ltimes} P$ is performed. The backward pass ends when all the leaf nodes are visited.

After this, the \YannAlg enters the \textbf{join phase}, where normal binary joins (e.g., hash joins) are carried out on the reduced tables, as shown in \cref{fig:yannakakis-join}. Each binary join must map to an edge in the join tree from the semi-join phase to guarantee a non-decreasing intermediate result. Because the semi-join phase ensures that \emph{all} tuples that will not contribute to the query output are removed (i.e., a full reduction), the join phase is proven to complete in $O(OUT)$ time.

Although \YannAlg exhibits appealing theoretical guarantees, few modern database management systems adopt it because the traditional hash-table-based implementation of the semi-joins makes the algorithm slow. The recent \textbf{\PT} (PT) technique proposed by Yang et al.~\cite{yang2023PT} solves this performance problem by using \BFs to conduct approximate semi-joins in the \YannAlg. Specifically, for each $R \ensuremath{\ltimes} S$ in the semi-join phase (it is called the \PT phase in PT), PT builds a \BF $\mathcal{B}_S$ with the join keys in $S$ and then uses the tuples in $R$ to probe $\mathcal{B}_S$. If the probe returns false for a tuple $t$ in $R$, $t$ is eliminated. Otherwise, $t$ is inserted into a different \BF $\mathcal{B}_R$ (could use a different join key) to prepare for the next semi-join in either the forward or backward pass.

Compared to the original \YannAlg, \PT trades a \emph{small} accuracy loss (caused by false positives of the \BFs) for a faster semi-join reduction. An inaccurate pre-filtering result does not affect the algorithm's correctness because the false positives will be removed during the subsequent join phase. Besides performance improvement, \PT generalizes \YannAlg to arbitrary join graphs, including cyclic ones. Instead of converting an acyclic join graph into a join tree, \PT transforms any join graph into a DAG (i.e., a \emph{transfer graph}) using a simple heuristic that assigns the direction of each edge from the smaller table to the larger one. Unfortunately, \PT does not inherit the strong theoretical guarantee from \YannAlg for acyclic queries because it could generate transfer schedules that lead to incomplete semi-join reductions. We will propose a new algorithm to fix this in the next section. For cyclic joins, although \PT improved the query performance empirically in many cases, there is no theoretical guarantee on the intermediate result sizes.

\section{Toward Join-Order Robustness}
\label{sec:modeling}

This section introduces new algorithms with analyses to make \PT robust for acyclic queries. In \cref{sec:modeling:transfer}, we propose the \TreeStruct algorithm in the transfer phase (i.e., the counterpart of the semi-join phase in \Yann) that not only guarantees a full reduction but also minimizes the \BF construction time. \cref{sec:modeling:join} discusses approaches to guarantee that the join order selected in the join phase is ``safe'' (i.e., there is no intermediate result blowup).

\subsection{Generating a Robust Transfer Schedule}
\label{sec:modeling:transfer}

The transfer phase in the original \PT algorithm~\cite{yang2023PT} adopts \StoL, a simple heuristic-based algorithm to build the transfer graph. As described in \cref{sec:prelim:pt}, \StoL assigns the direction for each edge in the (undirected) join graph from the smaller table to the larger table to form a DAG. \PT then generates a \emph{transfer schedule} (i.e., the forward and backward passes of \BFs) by following the edges in this DAG. The \StoL algorithm, however, does not guarantee a full reduction for acyclic queries. As shown in \cref{fig:small2large-example}, for example, consider the natural join $R(A,B) \Join S(A,C) \Join T(B, D)$ where $|R| < |S| < |T|$. In this case, \StoL will generate a transfer graph that leads to a forward pass of $S \Semijoin_b R$ and $T \Semijoin_b R$ followed by a backward pass of $R \Semijoin_b S$ and $R \Semijoin_b T$. This transfer schedule fails to ``connect'' $S$ and $T$: if $S$ has a predicate, this filter information can never reach $T$ via the transfer of \BFs (and vice versa), leading to an incomplete reduction.

\begin{figure}[t!]
    \center
    \includegraphics[width=\linewidth]{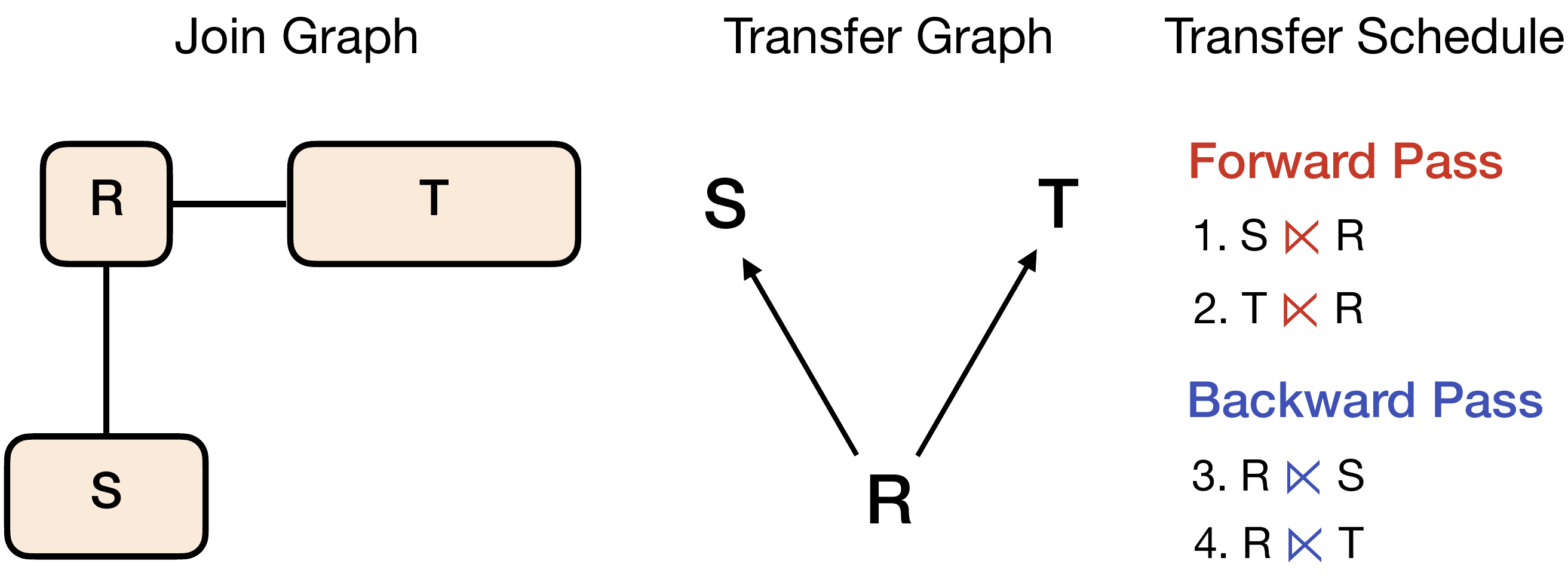}
    \caption{An example of the \StoL algorithm in the original \PT}
    \label{fig:small2large-example}
\end{figure}

Although \StoL cannot pre-filter all the non-result tuples, pushing larger tables toward the end of the transfer schedule is insightful because a smaller table is likely a more selective filter. The new \TreeStruct will preserve this strategy while guaranteeing a full reduction. Before diving into our new algorithm, let us define the concepts of a \emph{join tree} and \emph{acyclicity} precisely.

Without loss of generality, we only consider natural joins with a connected join graph in this section\footnote{For equality predicates such as $R.A = S.B$, we treat $A$ and $B$ as the same attribute in this context. If the join graph has multiple components, we can generalize the concept of join tree to join forest}. For a natural join query $q$, its \emph{join graph} $G_q$ is an undirected graph where the vertices are the relations in $q$. If two relations have attributes in common, they are connected by an edge in $G_q$. A \emph{join tree} $T_q$ is a spanning tree of $G_q$ such that for every attribute $A$, the relations containing $A$ induce a \emph{connected} subgraph $T^A_q$ of $T_q$. The join tree is then used to define the \emph{acyclicity} of a query:

\begin{definition}[$\alpha$-acyclicity \cite{yannakakis1981YA}]\label{def:acyclic}
A natural join query $q$ is {\em acyclic} if and only if there exists a \emph{join tree} of $q$.
\end{definition}

Acyclicity is crucial for \YannAlg to achieve the $O(N + OUT)$ complexity because it guarantees a non-decreasing intermediate result in the join phase. If the subgraph for an attribute $A$ is \emph{not} connected, a tuple may survive in the first join involving $A$ but later get eliminated by the second join using $A$. This breaks the above non-decreasing property. An acyclic natural join satisfies the following lemma:

\begin{lemma}[\cite{Maier1983}]\label{thm:mst}
Let $q$ be an acyclic natural join query. For each edge $(R,S)$ in the join graph $G_q$, where $R$ and $S$ are the vertices (i.e., relations), define the weight of the edge $w(R,S)$ as the number of shared attributes between $R$ and $S$: $w(R,S)=\left|\operatorname{attr}(R)\cap\operatorname{attr}(S)\right|$. 
Then, a subgraph of $G_q$ is a join tree of $q$ if and only if it is a maximum spanning tree for $G_q$.
\end{lemma}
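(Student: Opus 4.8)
The plan is to establish both implications at once by computing the weight of an arbitrary spanning tree $T$ of $G_q$ through a double-counting argument, and then observing that ``$T$ is a maximum spanning tree'' and ``$T$ is a join tree'' reduce to the very same equality condition. For a spanning tree $T$ and an attribute $A$, let $V_A$ denote the set of relations whose attribute set contains $A$, put $n_A = |V_A|$, and let $T^A$ be the subgraph of $T$ induced on $V_A$ (this is the object appearing in the definition of a join tree). The first step is the identity
\begin{equation*}
w(T) \;=\; \sum_{(R,S)\in T}\bigl|\operatorname{attr}(R)\cap\operatorname{attr}(S)\bigr| \;=\; \sum_{A}\bigl|E(T^A)\bigr|,
\end{equation*}
obtained by exchanging the order of summation: an edge $(R,S)$ of $T$ contributes $1$ to the term for attribute $A$ precisely when both $R$ and $S$ contain $A$, i.e.\ precisely when $(R,S)$ is an edge of the induced subgraph $T^A$.

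Second, since $T$ is a tree, each $T^A$ is a forest on $n_A$ vertices, hence $|E(T^A)| \le n_A - 1$, with equality if and only if $T^A$ is connected. Summing over all attributes gives $w(T) \le \sum_A (n_A - 1)$, and — because each summand is individually bounded by $n_A-1$ — equality holds if and only if $T^A$ is connected for \emph{every} attribute $A$. By \cref{def:acyclic} together with the definition of a join tree, this last condition is exactly the assertion that $T$ is a join tree of $q$. So far: a spanning tree $T$ is a join tree $\iff w(T) = \sum_A(n_A-1)$.

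Third, I would invoke acyclicity to identify the constant $\sum_A(n_A-1)$ with the maximum possible weight. Since $q$ is acyclic, \cref{def:acyclic} guarantees that some join tree $T_0$ exists, and the previous step gives $w(T_0) = \sum_A(n_A-1)$; combined with the bound $w(T)\le \sum_A(n_A-1)$ for every spanning tree $T$, this shows $\sum_A(n_A-1)$ is precisely the maximum weight of a spanning tree of $G_q$. Therefore $T$ is a maximum spanning tree $\iff w(T)=\sum_A(n_A-1) \iff T$ is a join tree, which is the claim.

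There is no heavy computation here, so the ``hard part'' is really just a matter of care at two points: (i) making the double-counting rigorous — in particular, noting that every edge of $T$ joining two $A$-containing relations necessarily lies in the induced subgraph $T^A$, so that $|E(T^A)|$ is exactly the per-attribute contribution; and (ii) the observation that the equality condition \emph{decouples over attributes}, so that a globally maximum-weight $T$ is forced to make every single $T^A$ a spanning tree of $V_A$. It is also worth stating explicitly that acyclicity enters only in the third step, to ensure that the common upper bound $\sum_A(n_A-1)$ is attained; for a cyclic $q$ a maximum spanning tree need not satisfy the connectivity condition, and the equivalence fails.
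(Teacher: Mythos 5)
Your proposal is correct and takes essentially the same route as the paper's own justification (which is only an intuition sketch, since the lemma is cited from Maier): decompose $w(T)=\sum_A |E(T^A)|$ by double counting, bound each term by $n_A-1$ with equality exactly when $T^A$ is connected, and use acyclicity to see the common bound is attained. Your write-up is simply more explicit than the paper's sketch about the converse direction (maximum spanning tree $\Rightarrow$ join tree) and about the fact that acyclicity is needed only to guarantee the bound $\sum_A (n_A-1)$ is achievable.
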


The intuition behind the lemma is that for a spanning tree $T$ of $G_q$, $T$'s total weight equals the summation of the edge count of each attribute-induced subgraph. $T$ is a join tree means that every attribute-induced subgraph $T^A$ is connected. This is equivalent to saying that every $T^A$ is a subtree, and it is impossible for any $T^A$ to have more edges (otherwise $T$ will not be a tree). $T$ must be a maximum spanning tree (MST). Note that the weights defined on the edges are not considered heuristics for join costs. They are used to transform the problem of finding a join tree into the problem of finding an MST in the join graph.

\SetAlgoNlRelativeSize{0}
\begin{algorithm}[t!]
    \KwIn{join graph $G_q$}
    \KwOut{tree $T$}
    $T \gets \varnothing$; $\mathcal{R} \gets$ all relations; $\mathcal{R}' \gets \{R_{max}\}$\;

    \While{$\mathcal R' \neq \mathcal R$}{
    Find an edge $e = \{R,S\} \in E(G_q)$ with the largest weight
    such that $R\in\mathcal R\setminus \mathcal R',S\in\mathcal R'$. Choose the edge with the largest $R$ to break ties\;\label{step:choose}
    
    Add $e$ to $T$ with direction from $R$ to $S$\;
    
    $\mathcal R'\gets \mathcal R'\cup\{R\}$\;\label{step:expand_r}
    }

    return $T$\;
    \caption{\TreeStruct}
    \label{alg:LargestRoot}
\end{algorithm}

We now know that for an acyclic query, a join tree guarantees a full (semi-join) reduction of the query, and we can find a join tree by constructing a maximum spanning tree on its weighted join graph. We next introduce our \TreeStruct algorithm. As shown in \cref{alg:LargestRoot}, we use Prim's algorithm to construct a maximum spanning tree $T$ on the join graph $G_q$. The edges in $T$ point from leaf to root, indicating a forward pass schedule. Because the algorithm starts with the largest relation $R_{max}$ in $\mathcal{R}'$, $R_{max}$ is the root of $T$ (hence the name \TreeStruct). And because of \cref{thm:mst}, $T$ is a join tree if query $q$ is acyclic, guaranteeing a full reduction in the transfer phase. 

Placing the largest relation at the root of the join tree is important, especially for queries following a star schema. It is more efficient to filter the much larger fact table using the dimension tables first before building a \BF on the fact table. In addition, \TreeStruct pushes larger relations toward the root by including them early in $T$ in the tie-breaking strategy in \cref{step:choose}. This allows larger relations to get filtered first by probing other \BFs before building their own, thus minimizing the total \BF construction time in the transfer phase. Notice that \cref{step:choose} in \TreeStruct does not specify a tie-breaking policy for choosing $S \in \mathcal{R}'$. In reality, most edges have weight 1 because relations typically join on only one attribute. Although the choice of $S$ does not compromise the theoretical guarantee of the algorithm producing an MST, it could affect the shape of the join tree. In general, a flatter tree allows more parallelism in building the \BFs, while a deeper tree might allow filtering irrelevant tuples out earlier. Both the tie-breaking policies for $R$ and $S$ do not affect the strong theoretical guarantee (i.e., a full reduction) of \TreeStruct.

Unlike \YannAlg, \TreeStruct also applies to cyclic queries. The algorithm's output is still a spanning tree with the largest relation at the root, but it is not a join tree. In this case, the transfer schedule generated by \TreeStruct does not guarantee a fully reduced instance for the subsequent join phase. Still, it transfers any predicate to all relations at least once and is effective empirically, as we will show in the experiments in \cref{sec:eval}.

%-----------------------------------------------------------------------------------------------------------------------

\subsection{Choosing a Safe Join Order}
\label{sec:modeling:join}

Once the transfer phase generates a fully reduced instance of the database, the algorithm enters the join phase to produce the final output. According to \YannAlg, the join order is derived from the join tree used in the semi-join phase by performing the joins bottom up. Although such a (almost fixed) join order guarantees the asymptotic complexity of \YannAlg (i.e.,$O(N + OUT)$), it prevents the optimizer from exploring more join orders that potentially have smaller costs. Ideally, we want to leverage the cost models in the optimizer to search for cheaper plans, but we want to constrain the optimizer to only consider join orders with intermediate results always upper bounded by the output size. Such a ``safe'' join order provides a (theoretical) robustness guarantee: its runtime cost is at most a constant factor away from the optimal. In other words, even ill-behaved data distributions will not cause the runtime to deviate more than some bounded quantity.

\begin{definition}[\cite{safesubjoin}]\label{def:safe}
Let $q$ be an acyclic natural join query. A subjoin $q'$ of $q$ is {\em safe} if for every fully reduced instance $I$, we have $q'(I) = \pi_{\operatorname{attr}(q')}(q(I))$.
\end{definition}

The above definition ensures that if a subjoin $q'$ is safe, then the output of $q'$ is a projection of the final output, and thus $|q'(I)| \leq |q(I)|$. If every subjoin of a join order is safe, then the cumulative intermediate result size is within a constant factor of $|q(I)|$ (i.e., the optimal). It is straightforward to see that subjoins that involve Cartesian products can be unsafe. But unsafe subjoins are not restricted to Cartesian products. Consider the natural join $q = R(A,B,C) \Join S(A,B) \Join T(B,C)$. Let $I$ be the fully reduced instance:
$R = \{(1,1,1), (2,1,2), \dots, (n,1,n)\}$, $S = \{(1,1),(2,1), \dots, (n,1)\}$, and $T = \{(1,1),(1,2), \dots, (1,n)\}$.
Then subjoin $q' = S(A,B) \Join T(B,C)$ is unsafe because $|q'(I)| = n^2$, while $|q(I)| = n$. Therefore, any query plan that joins $S$ with $T$ first -- even on a fully reduced instance -- will create a quadratic blowup on the intermediate result.

One approach to avoid unsafe join orders is to \emph{identify the class of acyclic queries} for which any join order that does not involve Cartesian products is safe.

\begin{definition}[$\gamma$-acyclicity~\cite{acyclic_degrees}]\label{def:gamma}
A natural join query $q$ is $\gamma$-acyclic if and only if there is no $\gamma$-cycle in $q$. This is equivalent to (1) $q$ is $\alpha$-acyclic, and (2) we cannot find three relations $R, S, T$ with attributes $x, y, z$ that form a $\gamma$-cycle of size 3: $R(x, y), S(y, z), T(x, y, z)$.
\end{definition}

\begin{lemma}[\cite{acyclic_degrees}]
Every connected join expression\footnote{No Cartesian products, binary joins only.} $\theta$ of $q$ is monotone (i.e., no tuple gets removed while executing any binary join in $\theta$) if and only if $q$ is $\gamma$-acyclic.
\end{lemma}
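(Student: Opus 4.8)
The plan is to prove the two implications, with ``$\gamma$-acyclic $\Rightarrow$ monotone'' being the substantive one. I would first record a reformulation: because the \emph{set} of tuples produced by a connected join expression is independent of the association order, an internal node of $\theta$ that splits the relations into two connected groups $\mathcal S_L,\mathcal S_R$ computes, on any instance $I$, exactly $(\bowtie_{R\in\mathcal S_L}R)\bowtie(\bowtie_{R\in\mathcal S_R}R)$. Writing $X=\operatorname{attr}(\mathcal S_L)\cap\operatorname{attr}(\mathcal S_R)$ for the join attributes of that node, the node is monotone on $I$ iff $\pi_X(\bowtie_{R\in\mathcal S_L}R)=\pi_X(\bowtie_{R\in\mathcal S_R}R)$. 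Thus the statement becomes a claim about all bipartitions of all connected relation subsets of $q$ together with this projection identity on fully reduced instances.

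For the forward direction, assume $q$ is $\gamma$-acyclic, fix a fully reduced $I$ and a connected join expression $\theta$, and induct on the number of relations. The step uses two ingredients. (1) A structural claim: in a $\gamma$-acyclic query every bipartition $(\mathcal S_L,\mathcal S_R)$ of a connected relation set into connected parts is \emph{covered by a single edge}, i.e.\ there exist $R_1\in\mathcal S_L$ and $R_0\in\mathcal S_R$ with $X\subseteq\operatorname{attr}(R_1)\cap\operatorname{attr}(R_0)$. (2) The classical fact that a pairwise-consistent instance of an $\alpha$-acyclic query is globally consistent~\cite{yannakakis1981YA}. Since a fully reduced $I$ is pairwise consistent, $\gamma$-acyclicity is hereditary, and the subqueries on $\mathcal S_L,\mathcal S_R$ are $\alpha$-acyclic, every tuple of each relation in $\mathcal S_L$ (resp.\ $\mathcal S_R$) participates in $\bowtie_{R\in\mathcal S_L}R$ (resp.\ $\bowtie_{R\in\mathcal S_R}R$). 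Combining with (1): $\pi_X(\bowtie_{R\in\mathcal S_L}R)=\pi_X(R_1)$ and $\pi_X(\bowtie_{R\in\mathcal S_R}R)=\pi_X(R_0)$ (projections only shrink, and global consistency gives the reverse inclusions), while pairwise consistency of $R_1,R_0$ forces $\pi_X(R_1)=\pi_X(R_0)$ since $X$ lies in both schemas. Hence the top node is monotone, and the induction hypothesis handles the nodes inside $\theta_L$ and $\theta_R$.

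For the reverse direction I would argue the contrapositive. If $q$ is not $\gamma$-acyclic then, by \cref{def:gamma}, either $q$ is $\alpha$-cyclic or it contains relations $R(x,y),S(y,z),T(x,y,z)$ forming a size-3 $\gamma$-cycle. In the latter case take the fully reduced instance $R=\{(1,1),(2,1)\}$, $S=\{(1,1),(1,2)\}$, $T=\{(1,1,1),(2,1,2)\}$ (extended to the remaining relations of $q$ so the instance stays fully reduced, which is possible since only pairwise, hence global, consistency is needed): then $R\bowtie S=\{(1,1,1),(1,1,2),(2,1,1),(2,1,2)\}$ while $T=\{(1,1,1),(2,1,2)\}$, so the connected join expression that joins $R$ with $S$ and then with $T$ drops the tuple $(1,1,2)$ and is not monotone. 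In the $\alpha$-cyclic case a chordless cycle of $q$ admits the analogous witness, with the triangle $R(x,y),S(y,z),T(z,x)$ and $R=\{(a,1),(b,1)\}$, $S=\{(1,c),(1,d)\}$, $T=\{(c,a),(d,b)\}$ as the base pattern (fully reduced, $R\bowtie S$ has four tuples, and joining with $T$ deletes two).

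The main obstacle is the structural claim (1). I expect to prove its contrapositive: an uncovered bipartition supplies attributes $a,b\in X$ carried by distinct relations $R_a\neq R_b$ in $\mathcal S_L$ and $S_a\neq S_b$ in $\mathcal S_R$, and the join-graph edges $R_a\text{--}S_a$, $R_b\text{--}S_b$ together with the connecting paths inside $\mathcal S_L$ and $\mathcal S_R$ form a short cyclic configuration; passing to a minimal such configuration (and using a relation that carries a pair of $X$-attributes) yields either an $\alpha$-cycle or a size-3 $\gamma$-cycle $R(x,y),S(y,z),T(x,y,z)$, so $q$ is not $\gamma$-acyclic. Making this case analysis airtight --- and checking that the classical pairwise-to-global consistency theorem applies verbatim to the induced subqueries that arise --- is where the real work lies; the rest is bookkeeping (order-independence, heredity of $\gamma$-acyclicity, restriction of pairwise consistency).
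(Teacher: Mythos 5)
First, a point of reference: the paper does not prove this lemma at all---it is imported from the degrees-of-acyclicity literature \cite{acyclic_degrees}---so there is no in-paper argument to compare against; you are effectively reproving Fagin's classical theorem. Your skeleton is sound: the node-wise reformulation (a binary join of two connected groups is monotone iff the two sub-joins have equal projections onto their shared attribute set $X$), the observation that on a fully reduced instance every base tuple survives into every connected sub-join, and the use of consistency of a single covering pair $R_1\in\mathcal S_L$, $R_0\in\mathcal S_R$ with $X\subseteq\operatorname{attr}(R_1)\cap\operatorname{attr}(R_0)$ do combine to give monotonicity at each node, and heredity of $\gamma$-acyclicity plus full reducedness of the restricted instance make the induction go through. (Minor simplification: if ``fully reduced'' is read as globally consistent, the participation of every tuple in each connected sub-join follows directly by projecting $q(I)$; the pairwise-to-global consistency theorem is not needed.)

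The gap is that the two steps you defer are exactly where the content of the theorem lies. Your structural claim (1)---that in a $\gamma$-acyclic query the shared attribute set $X$ of any connected bipartition is wholly contained in a single relation on each side---is the load-bearing combinatorial fact, and you only gesture at it (``a minimal uncovered configuration yields an $\alpha$-cycle or a size-3 $\gamma$-cycle''). It is provable (e.g., for $\mathcal S_L=\{e_1,e_2\}$, $\mathcal S_R=\{f\}$ an uncovered $X$ gives $a\in e_1\cap f\setminus e_2$, $b\in e_2\cap f\setminus e_1$, $w\in e_1\cap e_2$, and one checks that either $w\in f$, or some edge contains $\{a,b,w\}$ --- both producing the 3-pattern of \cref{def:gamma} --- or $\{a,b,w\}$ is a non-conformal triangle, contradicting $\alpha$-acyclicity), but the general induction over larger bipartitions is not carried out, and without it the forward direction is only an outline. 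Similarly, in the converse, ``in the $\alpha$-cyclic case a chordless cycle of $q$ admits the analogous witness'' is too quick: $\alpha$-cyclicity of a hypergraph means non-conformality or a non-chordal primal graph, your triangle instance covers only the non-conformal 3-clique, and for longer chordless cycles or larger non-conformal cliques you must both generalize the pattern and verify that the remaining relations of $q$ (which may contain subsets of the cycle attributes) can be populated so that the entire instance stays fully reduced while the chosen connected expression still drops intermediate tuples. So: right decomposition and correct local arguments, but as submitted this is a plan whose two hard steps---the covering property of $\gamma$-acyclic bipartitions and the general cyclic witness---remain unproved.
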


\begin{theorem}\label{thm:gamma_safe}
Every subjoin (without Cartesian products) for natural join query $q$ is safe if and only if $q$ is $\gamma$-acyclic.
\end{theorem}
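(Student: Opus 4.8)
The plan is to factor the theorem through the monotonicity property of the preceding lemma. Concretely, I would show that \emph{every subjoin of $q$ without Cartesian products is safe} holds if and only if \emph{every connected join expression of $q$ is monotone on every fully reduced instance}, and then invoke the cited lemma, which identifies the latter condition with $\gamma$-acyclicity. Throughout, $q$ is $\alpha$-acyclic with a connected join graph, as assumed in \cref{sec:modeling}; recall that a subjoin $q'$ has no Cartesian product precisely when the relations it uses induce a connected subgraph of $G_q$, in which case $q'$ is evaluable by a connected join expression. I would also use once and for all the inclusion $\pi_{\operatorname{attr}(q')}(q(I)) \subseteq q'(I)$, valid for every instance $I$ since any tuple of $q(I)$ restricts to a tuple of $q'(I)$. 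So ``$q'$ is safe'' reduces to showing that, on every fully reduced $I$, every tuple of $q'(I)$ extends to a full output tuple.

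First, assume every Cartesian-product-free subjoin is safe; I would show every connected join expression $\theta$ is monotone on every fully reduced $I$. A binary join in $\theta$ joins $q_1(I)$ with $q_2(I)$, where the two subexpressions use disjoint relation sets and $q_1$, $q_2$, and their union $q_1 \cup q_2$ are all connected subjoins (the nodes of a Cartesian-product-free expression induce connected subgraphs), hence all safe. Then any $t \in q_1(I) = \pi_{\operatorname{attr}(q_1)}(q(I))$ equals $\pi_{\operatorname{attr}(q_1)}(u)$ for some $u \in q(I)$, and $\pi_{\operatorname{attr}(q_1 \cup q_2)}(u)$ is a tuple of $(q_1 \cup q_2)(I) = q_1(I) \Join q_2(I)$ that restricts to $t$, so $t$ is not dropped; symmetrically for $q_2(I)$. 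Thus $\theta$ is monotone on $I$.

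Conversely, assume every connected join expression is monotone on every fully reduced instance, and let $q'$ be a connected subjoin on a relation set $\mathcal{S}'$, $I$ a fully reduced instance, $t' \in q'(I)$. I would build a connected join expression $\theta$ of $q$ that first evaluates $q'$ (joining the relations of $\mathcal{S}'$ along a spanning tree of their connected induced subgraph, which produces $q'(I)$ since the natural join is order-independent) and then joins in the remaining relations one at a time, always choosing one adjacent in $G_q$ to the current set --- possible because $G_q$ is connected --- keeping $\theta$ Cartesian-product-free. By hypothesis $\theta$ is monotone on $I$, so tracking $t'$ through the joins performed after $q'(I)$ has been formed, it survives each of them and yields $t \in q(I)$ with $\pi_{\operatorname{attr}(q')}(t) = t'$. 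Hence $q'(I) \subseteq \pi_{\operatorname{attr}(q')}(q(I))$, and with the trivial inclusion $q'$ is safe. The theorem then follows from the lemma.

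The step needing the most care is getting the quantifier over instances to line up: ``safe'' is defined only over fully reduced instances, so the monotonicity statement must be read the same way --- on a non-reduced instance even a path query $R(A,B) \Join S(B,C)$ can drop a tuple during a join. If one wanted the ``only if'' half to be self-contained rather than to rest on the precise statement of the cited lemma, the route is to show directly that a size-$3$ $\gamma$-cycle $R(x,y), S(y,z), T(x,y,z)$ with $x \notin \operatorname{attr}(S)$ and $z \notin \operatorname{attr}(R)$ forces an unsafe subjoin: take a relation $W$ of $n \geq 2$ tuples over $\operatorname{attr}(q)$ in which $x$ and $z$ always coincide, $y$ is constant, and every other attribute takes a fixed value, set $P(I) := \pi_{\operatorname{attr}(P)}(W)$ for \emph{every} relation $P$ of $q$, and verify that $q(I) = W$, that $I$ is fully reduced, and that $R \Join S$ then produces $n^2$ tuples against only $n$ in $\pi_{\operatorname{attr}(R) \cup \operatorname{attr}(S)}(q(I))$; the delicate point there is that defining \emph{all} relations (not just $R,S,T$) as projections of $W$ is exactly what keeps $I$ fully reduced for the whole query.
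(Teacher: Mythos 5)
Your proposal is correct and follows essentially the same route as the paper: it reduces the theorem to the equivalence ``every Cartesian-product-free subjoin is safe iff every connected join expression is monotone on fully reduced instances'' and then invokes the cited lemma identifying that condition with $\gamma$-acyclicity. Your write-up is somewhat more detailed than the paper's (explicit tuple extension in the forward direction rather than a cardinality comparison, and an explicit construction of the extended join expression in the converse), but the decomposition and key lemma are identical.
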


\begin{proof}
    It is sufficient to show that every subjoin is safe if and only if every connected join expression is monotone.

    Consider any connected join expression $\theta'$ for subjoin $q'$, $\theta'_1$ for subjoin $q'_1$, and $\theta'_2$ for subjoin $q'_2$,
    where $\theta'=\theta'_1\Join \theta'_2$.
    Because every subjoin without Cartesian products is safe, we have
    $q'(I)=\pi_{\operatorname{attr}(q')}(q(I))$, 
    $q'_1(I)=\pi_{\operatorname{attr}(q'_1)}(q(I))$, and
    $q'_2(I)=\pi_{\operatorname{attr}(q'_2)}(q(I))$.
    Because $\operatorname{attr}(q'_i) \subseteq \operatorname{attr}(q')$ for $i = 1, 2$,
    we have $|\pi_{\operatorname{attr}(q')}(q(I))| \ge |\pi_{\operatorname{attr}(q_i')}(q(I))|$.
    Therefore, $\theta'$ is monotone.

    For the other direction,
    consider any connected join expression $\theta'$ of a subjoin $q'$.
    Extend $\theta'$ to a complete join expression $\theta$ of $q$.
    Because $\theta'$ is part of $\theta$ and every connected join expression of $q$ is monotone,
    we have $q'(I)=\pi_{\operatorname{attr}(q')}(q(I))$ for any fully reduced instance $I$.
    Therefore, $q'$ is safe.
\end{proof}

\cref{thm:gamma_safe} gives a strong \textbf{robustness guarantee}: if a query is $\gamma$-acyclic, we can fully trust the optimizer for join ordering on a fully-reduced instance (i.e., the join phase) because it can never pick an unsafe join order. $\gamma$-acyclic queries are a subset of $\alpha$-acyclic (i.e., acyclic) queries according to \cref{def:gamma}. To quickly check for $\gamma$-acyclicity in practice, it is sufficient (not necessary) to show that no two relations in the join graph are directly connected by more than one edge (i.e., no composite-key joins).

For queries that are acyclic but not $\gamma$-acyclic, we must \emph{supervise the optimizer} to check whether a given subjoin is safe. 
A safe subjoin can be characterized by the following lemma:

\begin{lemma}[\cite{safesubjoin}]\label{lemma:safe}
Let $q$ be an acyclic natural join query. A subjoin $q'$ of $q$ is safe if and only if there exists some join tree of $q$ such that the relations in $q'$ are connected.
\end{lemma}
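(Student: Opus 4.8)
The plan is to prove the two directions of \cref{lemma:safe} separately; the ``if'' direction essentially replays the correctness of the semi-join phase of the \YannAlg, and I would attack the harder ``only if'' direction by contraposition. For ($\Leftarrow$), suppose $T$ is a join tree of $q$ in which the relations of $q'$ induce a connected subtree $T'$. I would first note that $T'$ is itself a join tree of $q'$: for any attribute $A$ shared by two relations $R_1,R_2\in q'$, the unique $R_1$--$R_2$ path in $T$ consists only of relations containing $A$ (running intersection in $T$) and stays inside $T'$ (as $T'$ is connected and contains its endpoints), so the $A$-relations of $q'$ are connected in $T'$. The inclusion $\pi_{\operatorname{attr}(q')}(q(I))\subseteq q'(I)$ holds for every instance, so it remains to show $q'(I)\subseteq\pi_{\operatorname{attr}(q')}(q(I))$ for a fully reduced $I$. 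Fix $t\in q'(I)$ and order the relations outside $q'$ as $P_1,\dots,P_k$ so that each $P_i$ is $T$-adjacent to exactly one relation $P_{j(i)}$ of $q'\cup\{P_1,\dots,P_{i-1}\}$ (possible because $T$ is a tree and $q'$ spans a connected subtree). I would extend $t$ one relation at a time, keeping the invariant that after step $i$ we hold a tuple over $\operatorname{attr}(q'\cup\{P_1,\dots,P_i\})$ lying in the join of those relations: since $I$ is fully reduced, every tuple of every relation participates in $q(I)$, so $U\Semijoin V=U$ for every edge $(U,V)$ of $T$, and hence the current tuple restricted to $\operatorname{attr}(P_{j(i)})$ has a match $s_i\in P_i$; a running-intersection argument (the shared attributes of $P_i$ with $q'\cup\{P_1,\dots,P_{i-1}\}$ all lie in $P_{j(i)}$) shows $s_i$ is consistent with the current tuple, so it can be glued on. After $k$ steps we obtain a tuple of $q(I)$ projecting onto $t$, so $q'$ is safe.

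For ($\Rightarrow$) I would prove the contrapositive: if the relations of $q'$ fail to span a connected subtree in \emph{every} join tree of $q$, then $q'$ is unsafe, which I would witness by a fully reduced instance $I$ with $|q'(I)|>|q(I)|$. The starting observation is that in any fixed join tree $T$, the minimal subtree of $T$ spanning $q'$ has an internal vertex $P\notin q'$ --- otherwise that subtree would consist only of relations of $q'$ and would witness their connectivity --- and deleting $P$ splits $q'$ into parts $q'_1,q'_2$. Using the maximum-spanning-tree characterization of join trees (\cref{thm:mst}) to control which spanning trees qualify, I would distill this into an attribute-level obstruction stable across all join trees: relations $R_1\in q'_1$, $R_2\in q'_2$ and an external relation $P$ such that the only attributes through which $R_1$ and $R_2$ can interact inside $q'$ are ones that $P$ would otherwise constrain. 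From such an obstruction I would build a ``fan-out'' instance generalizing the $R(A,B,C)\bowtie S(A,B)\bowtie T(B,C)$ example from \cref{sec:modeling:join}: pick constants so that $q'_1$ and $q'_2$ cross-join freely over the shared attribute, producing $\Theta(n^2)$ tuples in $q'(I)$, while $P$ pairs the two sides up one-to-one in $q(I)$, leaving only $\Theta(n)$ tuples.

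The hard part is this ``only if'' direction, and in particular two points. First, turning ``disconnected in \emph{every} join tree'' into a single concrete obstruction --- the relation $P$ together with the attribute pattern around it --- rather than a tree-by-tree statement; this is where the maximum-spanning-tree / ear-removal structure of acyclic hypergraphs must be used carefully. Second, verifying that the fan-out instance is genuinely fully reduced, so that the blow-up cannot be dismissed as an artifact of dangling tuples. Since \cref{lemma:safe} is taken from \cite{safesubjoin}, an alternative is simply to import their construction; the sketch above is the route I would reconstruct.
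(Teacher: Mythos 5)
The paper never proves \cref{lemma:safe}: it imports the statement from \cite{safesubjoin}, so there is no in-paper argument to compare yours against. Judged on its own merits, your ``if'' direction is correct and essentially complete. Starting from a join tree $T$ in which the relations of $q'$ induce a connected subtree, ordering the remaining relations so that each newly added $P_i$ has exactly one $T$-neighbor $P_{j(i)}$ already present, using full reduction to obtain a match in $P_i$, and invoking the running-intersection property of $T$ (every attribute $P_i$ shares with the current, connected, set already appears in $P_{j(i)}$) to glue consistently is exactly the standard Yannakakis-style extension argument; together with the trivial inclusion $\pi_{\operatorname{attr}(q')}(q(I))\subseteq q'(I)$ it yields safety. (Your preliminary observation that $T'$ is a join tree of $q'$ is true but not actually needed for this.)

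The ``only if'' direction, however, is a plan rather than a proof, and the gap sits exactly at the two points you yourself flag. First, you never show how to pass from ``the relations of $q'$ are disconnected in \emph{every} join tree of $q$'' to a single, join-tree-independent obstruction $(R_1,R_2,P)$: the observation that in one fixed tree the minimal subtree spanning $q'$ has an internal non-$q'$ vertex is a per-tree statement, and a witness instance built from one tree's separator could a priori be harmless if some \emph{other} join tree connects $q'$; ruling that out (e.g., via \cref{thm:mst} or the Beeri--Fagin--Maier--Yannakakis structure theory) is the combinatorial heart of the lemma and is missing. Second, the counterexample instance is described only by analogy with the $R(A,B,C)\Join S(A,B)\Join T(B,C)$ example: you neither specify in general how the relations of $q$ outside $R_1,R_2,P$ are populated nor verify that the instance is fully reduced, which is essential because \cref{def:safe} quantifies only over fully reduced instances, and global consistency genuinely constrains the construction. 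Deferring to ``import the construction from \cite{safesubjoin}'' concedes the point: as a standalone proof, the harder half of the equivalence remains unestablished.
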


For the example natural join $q = R(A,B,C) \Join S(A,B) \Join T(B,C)$, there is only one join tree for $q$: $S - R - T$. Hence, both $R \Join S$ and $R \Join T$ are safe subjoins, but $S \Join T$ is not. Using \cref{lemma:safe}, we developed the \SafeSubJoin algorithm to detect whether a subjoin $q'$ is safe. As shown in \cref{alg:SafeSubJoin}, \SafeSubJoin first computes a maximum spanning tree $T'$ for $q'$ using the \TreeStruct algorithm. It then continues to run another instance of \TreeStruct by modifying the initialization step as $T \gets T'$; $\mathcal{R} \gets$ all relations in $q$; $\mathcal{R}' \gets$ all relations in $q'$. \SafeSubJoin returns true if the resulting spanning tree $T$ is a maximum spanning tree of $q$ (i.e., a join tree of $q$).

\SetAlgoNlRelativeSize{0}
\begin{algorithm}[t!]
    \KwIn{natural join $q$, subjoin $q'$}
    \KwOut{True or False}
    $T' \gets$ \TreeStruct($G_{q'}$)\;
    $T \gets$ \TreeStruct($G_{q}$) with the initialization step as:
    $T \gets T'$; $\mathcal{R} \gets$ all relations in $q$; $\mathcal{R}' \gets$ all relations in $q'$\;
    \eIf{$T$ is a maximum spanning tree of $q$}{
        return True\;
    }{
        return False\;
    }    
    \caption{\SafeSubJoin}
    \label{alg:SafeSubJoin}
\end{algorithm}

\section{Integration with \duckdb}
\label{sec:impl}

We describe how to integrate the \emph{\RPT} (\rpt) algorithm into \duckdb (v0.9.2)~\cite{2019duckdb}, a fast data analytics system, in this section. We first describe \duckdb's execution model and its optimizer briefly in \cref{sec:impl:duckdb}. We next introduce the new \BF operators in \cref{sec:impl:bf}. Finally, we introduce the new \RPT module that inserts the \BF operators into the query plan based on the transfer schedule obtained by running the \TreeStruct algorithm in \cref{sec:modeling:transfer}.

% As discussed in the previous study~\cite{yang2023PT}, \PT has demonstrated superior performance compared to the default FlexPushDownDB~\cite{fpdb}. However, the extent of this performance improvement hinges on the effectiveness of the Bloom filter relative to the hash join, suggesting potential variability when applied to more advanced database systems. Furthermore, like the \YannAlg, \PT is expected to be insensitive to join order, which is discussed only to a limited extent in prior research.

% Therefore, in this paper, we implement \PT in the advanced database system DuckDB~\cite{duckdb} and re-evaluate its performance. This section provides a detailed explanation of our approach to integrating \PT into DuckDB.

\subsection{\duckdb Preliminaries}
\label{sec:impl:duckdb}

\duckdb is a state-of-the-art in-process analytical database management system. It adopts a push-based vectorized execution engine, where each pipeline (i.e., a sequence of physical operators) processes tuples in batches (i.e., a data chunk, default batch size = 2048) to amortize the interpretation overhead and improve CPU parallelism. As shown in \cref{fig:duckdb-pipeline}, each physical operator can be in one of the following three roles: \emph{source}, \emph{operator}, and \emph{sink}, depending on its position within the pipeline. The source implements the \texttt{GetData} function to retrieve a new data chunk at the beginning of a pipeline. Intermediate operators implement the \texttt{Execute} interface that computes on an input data chunk and then outputs the result chunk. The sink operator is located at the end of a pipeline and is typically a pipeline breaker. Its interface consists of three functions: \texttt{Sink}, \texttt{Combine}, and \texttt{Finalize}. \texttt{Sink} is called to receive and buffer the data chunks until incoming data is exhausted. Next, \texttt{Combine} and \texttt{Finalize} are called to perform some final computations to get ready to distribute data to the next pipeline (or final output). \texttt{Combine} is called once per thread, while \texttt{Finalize} is called when all threads are finished.

\begin{figure}[t!]
    \centering
    \includegraphics[width=\linewidth]{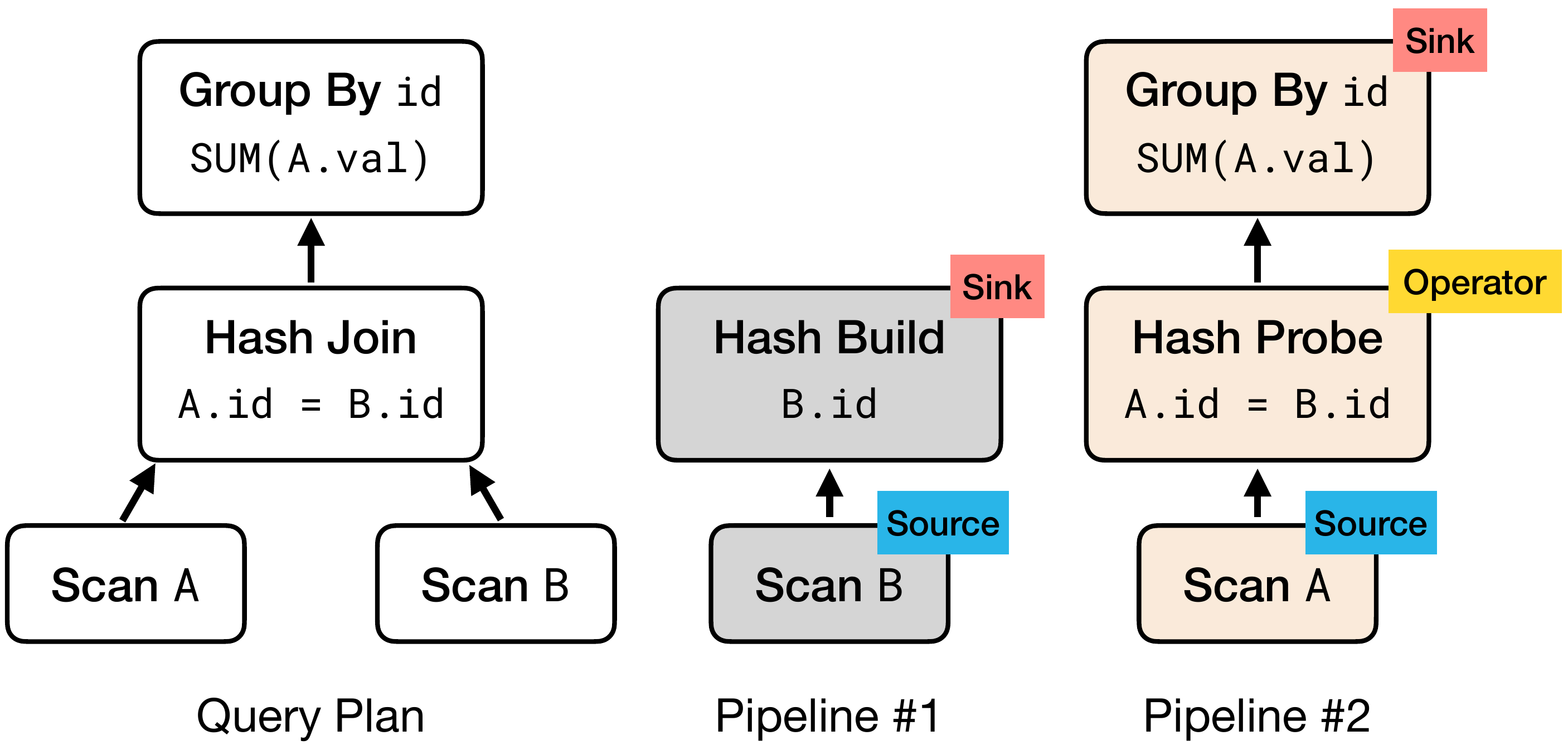}
    \caption{Example pipelines \& operator roles in \duckdb~\cite{2023DuckDBPPT}}
    \label{fig:duckdb-pipeline}
\end{figure}

\duckdb's optimizer includes separate logical and physical optimization phases, as shown in \cref{fig:duckdb-optimizer}. Logical optimization performs a sequence of steps such as expression rewrite and filter pushdown, each of which is a separate submodule in the logical optimizer. \duckdb's join order submodule uses dynamic programming for join order optimization~\cite{2008dphyper} and falls back to a greedy algorithm for large/complex join graphs.

\begin{figure}[t!]
    \centering
    \includegraphics[width=\linewidth]{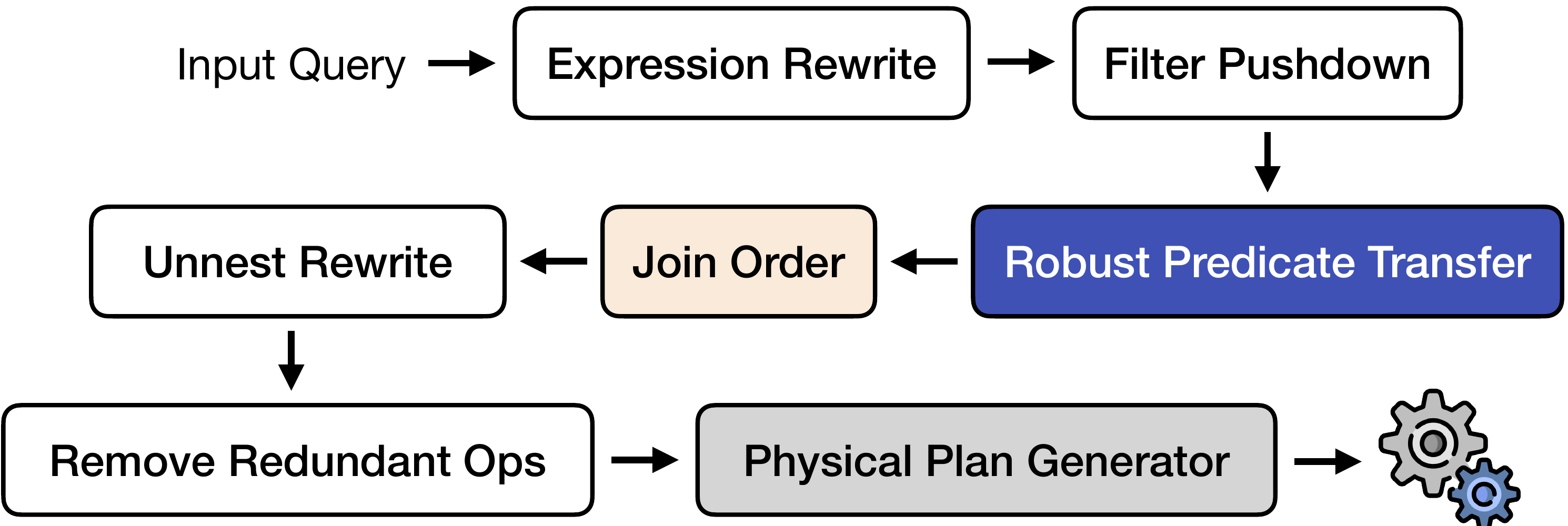}
    \caption{Workflow of \duckdb's optimizer}
    \label{fig:duckdb-optimizer}
\end{figure}

%==============================================================================================
\subsection{Bloom Filter Operators}
\label{sec:impl:bf}

To implement \RPT in \duckdb, we introduce two new physical operators based on \BFs: \CreateBF and \UseBF. We use the \BF implementation from Apache Arrow 16.0~\cite{Arrow}. It is a blocked \BF~\cite{putze2007cache} with operations accelerated using AVX2 instructions. Because a vectorized probe to the \BF returns a bit vector while \duckdb uses a selection vector to mark valid entries in a data chunk, we implemented an efficient bit-to-selection vector conversion according to~\cite{bitvector2selvec}.

\CreateBF is a physical operator that gathers/buffers the input data chunks and creates one or more \BFs on given columns. Its logical counterpart \LogicalCreateBF will be used in the logical optimization. \CreateBF can act as both a \emph{sink} and a \emph{source} (more about this in \cref{sec:impl:rpt}). In the \texttt{Sink} function, we receive input data chunks and keep them in thread-local buffers. No computation is needed for \texttt{Combine}. At \texttt{Finalize}, we traverse each thread-local data buffer to create a \BF for each given column. The false positive rate (FPR) of the \BF is set to $2\%$ (Arrow's default). When using \CreateBF as a source, we implement \texttt{GetData} by assigning each thread a disjoint range of chunk IDs in the data buffers for parallel scanning.

\UseBF is another physical operator that outputs the \BF result for each tuple in the input data chunk. Similarly, it has a logical counterpart \LogicalUseBF. \UseBF is used as an intermediate operator. The \texttt{Execute} function takes in a data chunk, uses the tuples to probe the Bloom filter(s) in a vectorized fashion, and outputs the data chunk with an updated selection vector.

%==============================================================================================

\subsection{\RPT Module}
\label{sec:impl:rpt}

\begin{figure}[t!]
    \centering
    \includegraphics[width=\linewidth]{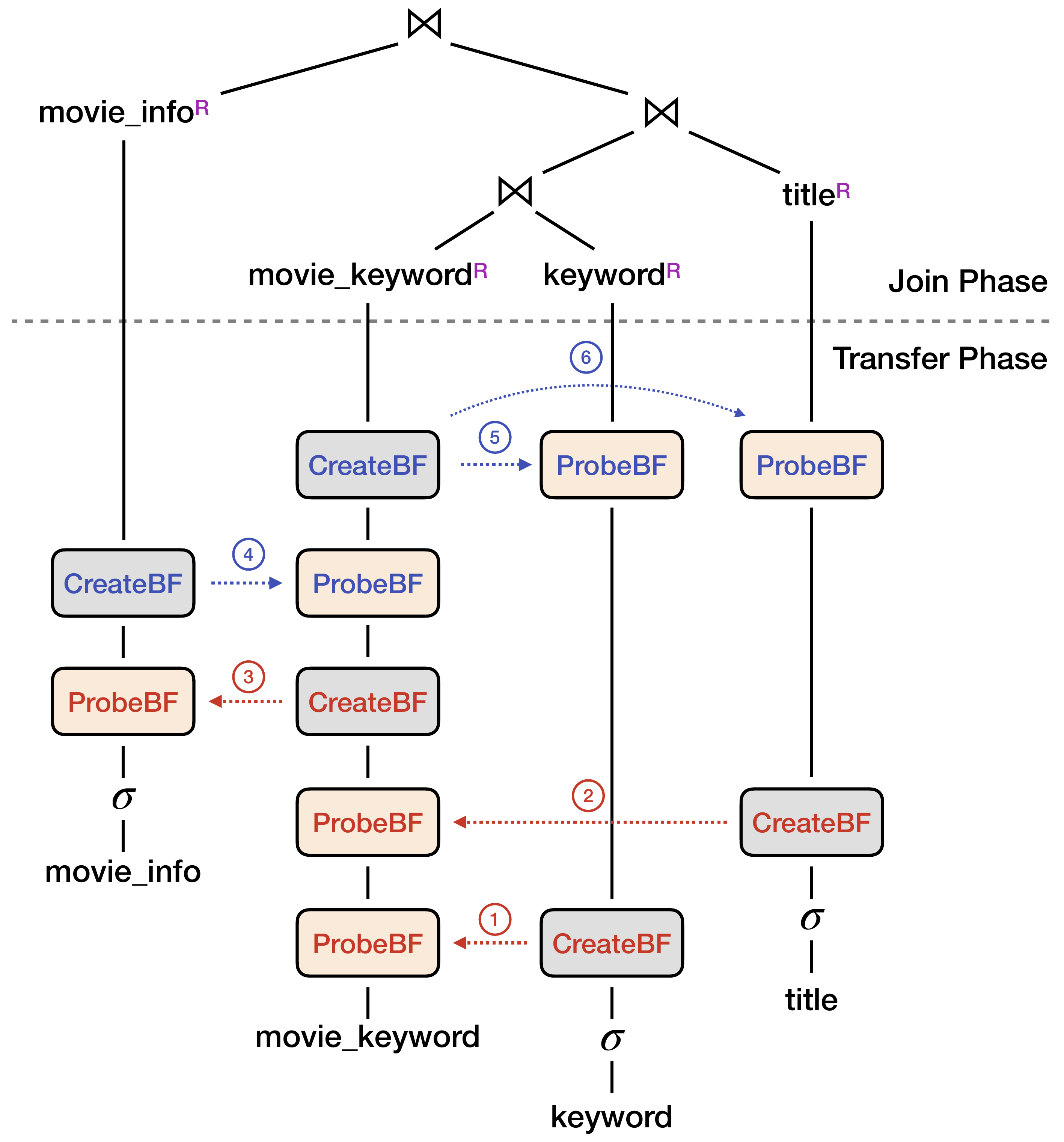}
    \caption{Query plan of JOB 3a with \RPT integrated \textnormal{-- red denotes forward pass while blue denotes backward pass.}}
    \label{fig:pt-plan}
\end{figure}

We introduce the \RPT module in \duckdb's logical optimizer to insert \LogicalCreateBF and \LogicalUseBF operators into the query plan, as shown in \cref{fig:duckdb-optimizer}. The RPT module constructs a join graph from the input plan and runs the \TreeStruct algorithm to obtain a transfer schedule (including forward and backward passes). For each semi-join $R \Semijoin S$ in the transfer schedule, we insert a \LogicalCreateBF for $S$ and a \LogicalUseBF for $R$ using $S$'s \BF. These logical operators are later replaced by \CreateBF and \UseBF in the physical plan generator.

Take the query plan for JOB 3a as an example. Suppose the transfer schedule generated by \TreeStruct is the same as that in \cref{fig:yannakakis-semi-join}. Then \cref{fig:pt-plan} shows the physical plan after inserting \CreateBF and \UseBF. The solid black lines represent the flow of data chunks (from the bottom up) while the dashed red/blue arrows indicate the transfer of \BFs (via shared memory). Each \CreateBF first acts as a \emph{sink} operator that buffers the data chunks at the end of the pipeline and creates a \BF. Then \CreateBF functions as the \emph{source} operator of the next pipeline, where it feeds the buffered data chunks to subsequent operators such as \UseBF and hash join.

We also implemented optimizations to prune unnecessary \BF operations. In particular, if the build-side relation in a primary-foreign-key join has not been filtered before, we can omit the pair of \CreateBF and \UseBF because the semi-join is trivial (i.e., it does not eliminate any tuple). We can also skip the entire backward pass if the transfer order aligns with the join order in the join phase.

\section{Evaluation}
\label{sec:eval}

We evaluate the robustness of \rpt-integrated \duckdb in this section\footnote{Our source code can be found in https://github.com/zzjjyyy/PredTransDuckDB}. We conduct the experiments on a physical machine with two $\text{Intel}^\text{\textregistered} \text{Xeon}^\text{\textregistered}$ Platinum 8474C @ 2.1GHz, 512GB DDR5 RAM, and 8TB Samsung 870 QVO SATA III 2.5" SSD. The operating system is Debian 12.5. We compare vanilla \duckdb (labeled as \duckdb) against \duckdb equipped with \rpt (labeled as \rpt) using four standard benchmarks: \tpch (SF = 100)~\cite{TPCH}, Join Order Benchmark (\job)~\cite{JOB}, \tpcds (SF = 100)~\cite{TPCDS}, and \dsb (SF = 100)~\cite{DSB}. We run the experiments under \duckdb's main-memory setting where the tables are pre-loaded and decompressed in the buffer pool. We examine the case where the base tables and intermediate results do not fit in memory in~\cref{sec:eval:on-disk}. We execute the queries using a single thread except for the multi-threaded experiments in \cref{sec:eval:multi-thread}.

\subsection{End-to-end Robustness}
\label{sec:eval:end-to-end}

In the following experiments, we modified \duckdb's optimizer to generate random join orders. For each evaluated query, we generate $N$ random \emph{left-deep} plans and $N$ random \emph{bushy} plans, where $N$ is proportional to the number of joins $m$ in that query. Specifically, we set $N = 20$ for the simplest 3-join queries and $N = 1000$ for the most complex query (i.e., Query 29 from JOB) with 17 joins, and therefore $N = 70m - 190$ for $3 \le m \le 17$. To produce a left-deep plan, we randomly pick a base table that is joinable\footnote{Has an edge in the join graph, i.e., no Cartesian product.} with the current (intermediate) table as the right-most leaf at each iteration. For bushy plans, we randomly remove two joinable tables from the candidate set (which initially contains all base tables) and insert their intermediate table back at each iteration until the set contains only one element (i.e., the final plan).

\begin{figure*}[t!]
    \centering
    \begin{subfigure}{0.34\linewidth}
        \includegraphics[width=\linewidth]{./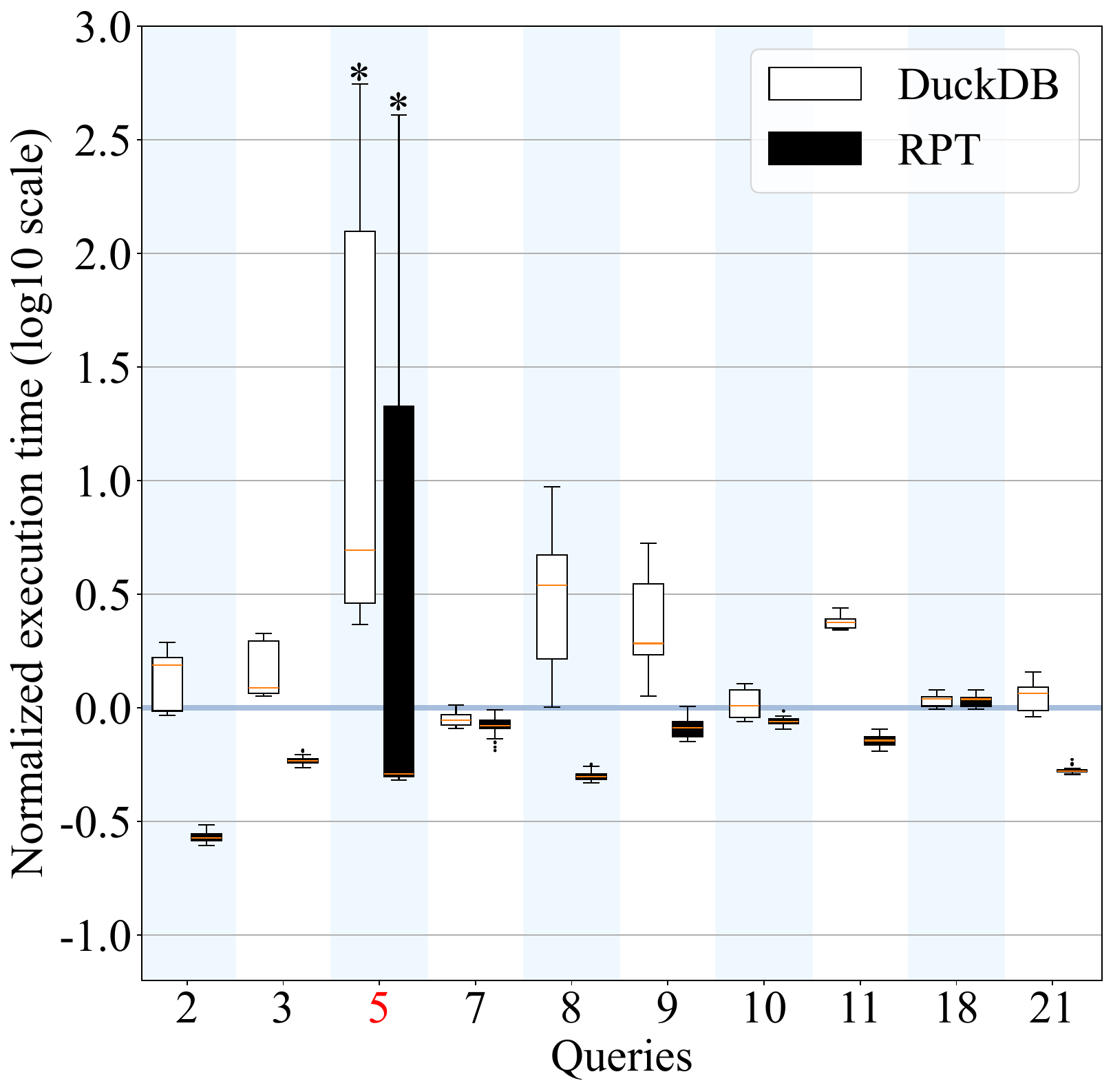}
        \caption{\tpch}
        \label{fig:eval-left-deep-tpch}
    \end{subfigure}
    \begin{subfigure}{0.64\linewidth}
        \includegraphics[width=\linewidth]{./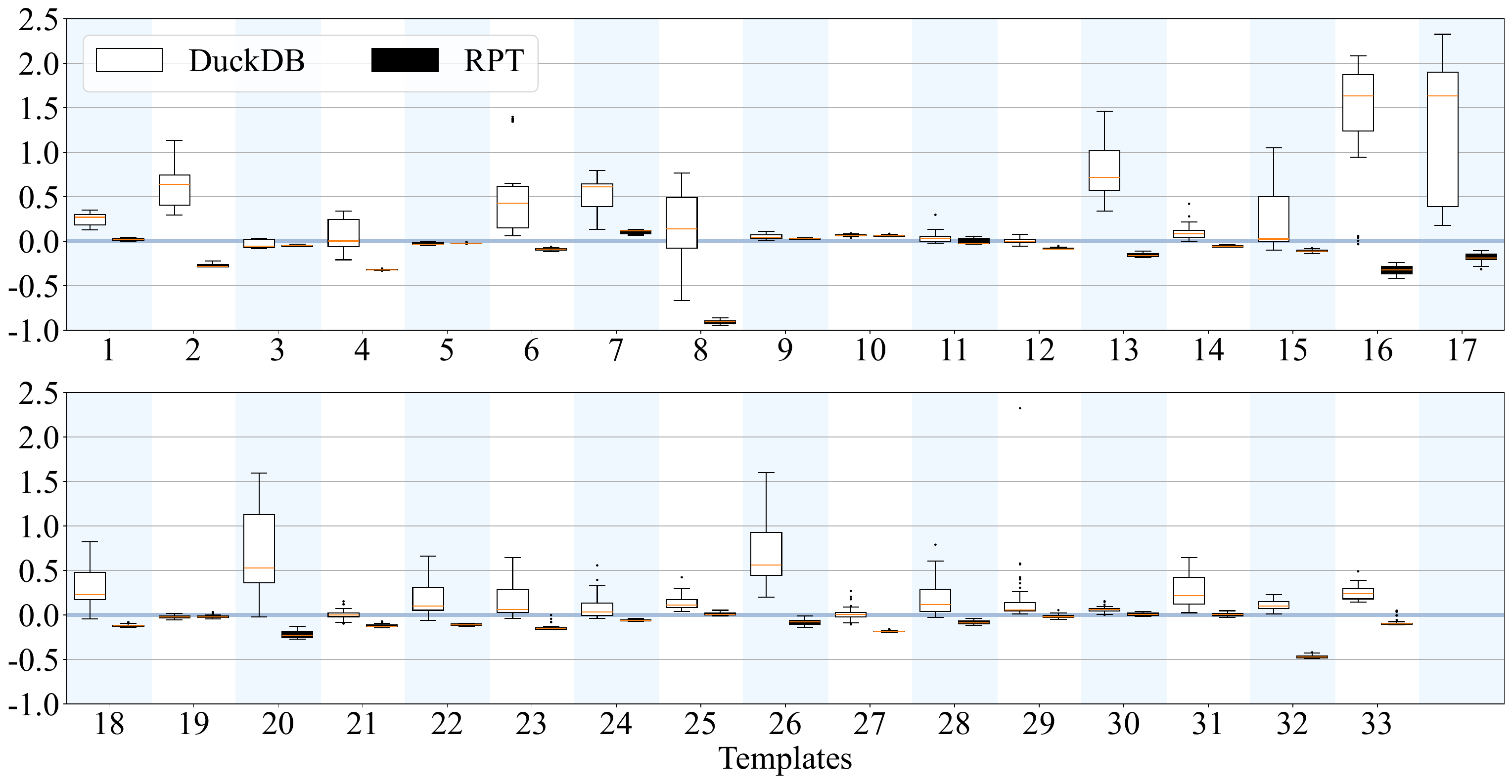}
        \caption{\job}
        \label{fig:eval-left-deep-job}
    \end{subfigure}
    \begin{subfigure}{0.96\linewidth}
        \includegraphics[width=\linewidth]{./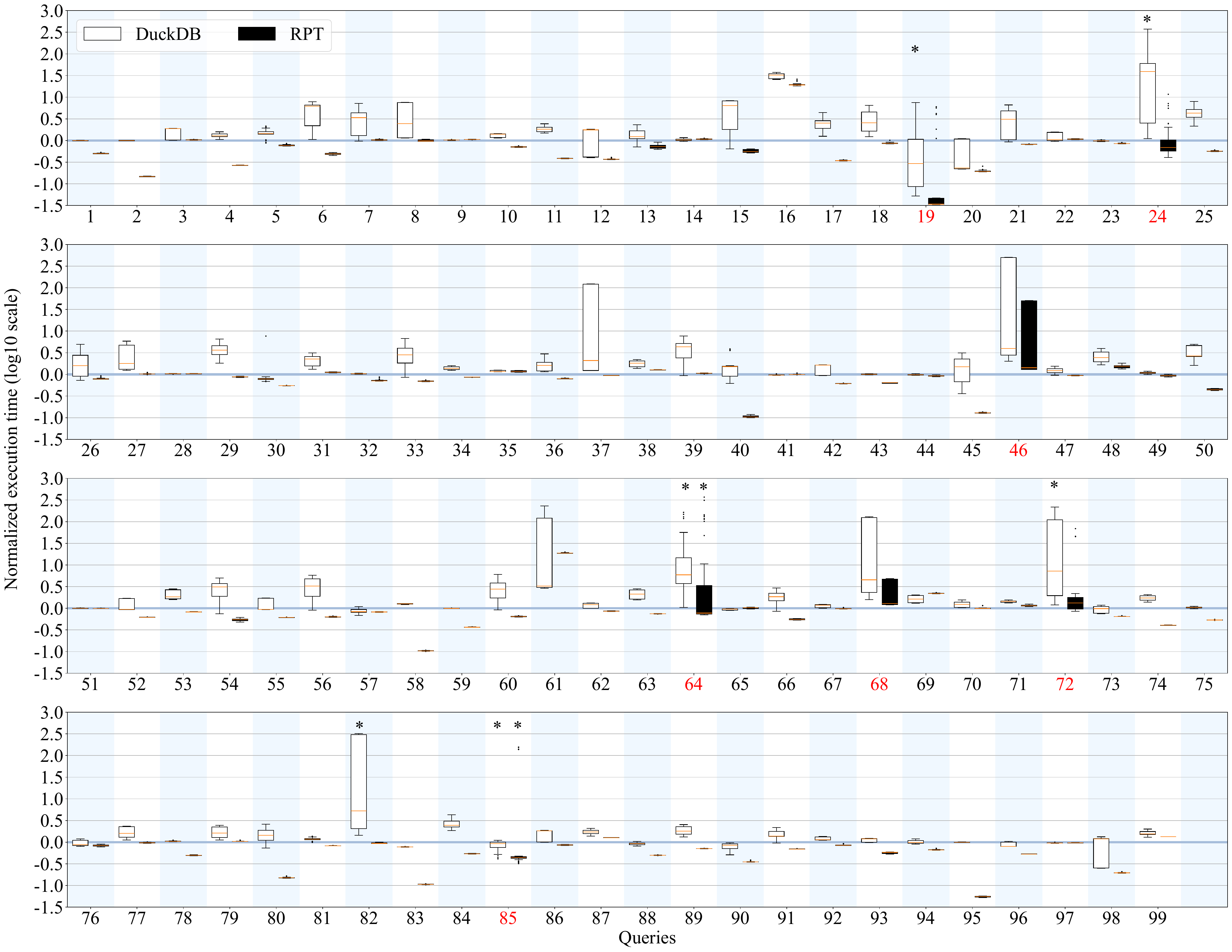}
        \caption{\tpcds}
        \label{fig:eval-left-deep-tpcds}
    \end{subfigure}
    \caption{Distribution of the execution time of random left-deep plans for each query in \tpch, \job, and \tpcds \textnormal{-- Normalized by the execution time of default \duckdb. The figure is log-scaled. The box denotes 25- to 75-percentile (with the orange line as the median), while the horizontal lines denote min and max (excluding outliers). `*' indicates timeouts. Cyclic queries are in red.}}
    \label{fig:eval-left-deep}
\end{figure*}

\begin{table}[t!]
\begin{center}
    \caption{Robustness Factors for left-deep joins.}
    \begin{tabular}{p{33pt}|p{12pt}p{12pt}p{12pt}|p{12pt}p{12pt}p{13pt}|p{12pt}p{12pt}p{12pt}}
    \toprule
    \centering \robustmetric & \multicolumn{3}{|c|}{\tpch} & \multicolumn{3}{|c|}{\job} & \multicolumn{3}{|c}{\tpcds} \\
                             & Avg & Min & Max             & Avg    & Min  & Max       & Avg  & Min & Max            \\
    \midrule
    \centering \duckdb       & 2.7 & 1.2 & 9.3             & 30.4   & 1.1  & 371       & 7.2  & 1.0 & 224            \\
    \centering \textbf{\rpt} & 1.3 & 1.2 & 1.5             & 1.2    & 1.0  & 1.6       & 1.1  & 1.0 & 1.5            \\
    \bottomrule
    \end{tabular}
    \label{tab:RF-left-deep}
\end{center}
\end{table}

\begin{table}[t!]
\begin{center}
    \caption{Robustness Factors for bushy joins.}
    \begin{tabular}{p{33pt}|p{12pt}p{12pt}p{12pt}|p{12pt}p{12pt}p{13pt}|p{12pt}p{12pt}p{12pt}}
    \toprule
    \centering \robustmetric & \multicolumn{3}{|c|}{\tpch} & \multicolumn{3}{|c|}{\job} & \multicolumn{3}{|c}{\tpcds} \\
                             & Avg & Min & Max             & Avg    & Min  & Max       & Avg & Min & Max              \\
    \midrule
    \centering \duckdb         & 5.1 & 1.2 & 13.7            & 120   & 1.1  & 1747       & 35.0 & 1.0 & 1226            \\
    \centering \textbf{\rpt}          & 1.8 & 1.2 & 3.0             & 1.6   & 1.1  & 7.7        & 1.8  & 1.0 & 4.2             \\
    \bottomrule
    \end{tabular}
    \label{tab:RF-bushy}
\end{center}
\end{table}

\begin{table}[t!]
\begin{center}
    \caption{Average speedups over \duckdb (optimizer's plan)}
    \begin{tabular}{c|c|c|c|c}
    \toprule
    \centering Speedup        & \tpch          & \job               & \tpcds        & \dsb          \\
    \midrule
    \centering Bloom Join     & $1.15\times$   & $1.13\times$       & $1.05\times$  & $1.06\times$  \\
    \centering \pt            & $1.45\times$   & $1.46\times$       & $1.27\times$  & $1.18\times$  \\
    \centering \textbf{\rpt}  & $1.44\times$   & $1.46\times$       & $1.56\times$  & $1.54\times$  \\
    \bottomrule
    \end{tabular}
    \label{tab:rpt-perf}
\end{center}
\end{table}

\begin{figure*}[t!]
    \centering
    \begin{subfigure}{0.34\linewidth}
        \includegraphics[width=\linewidth]{./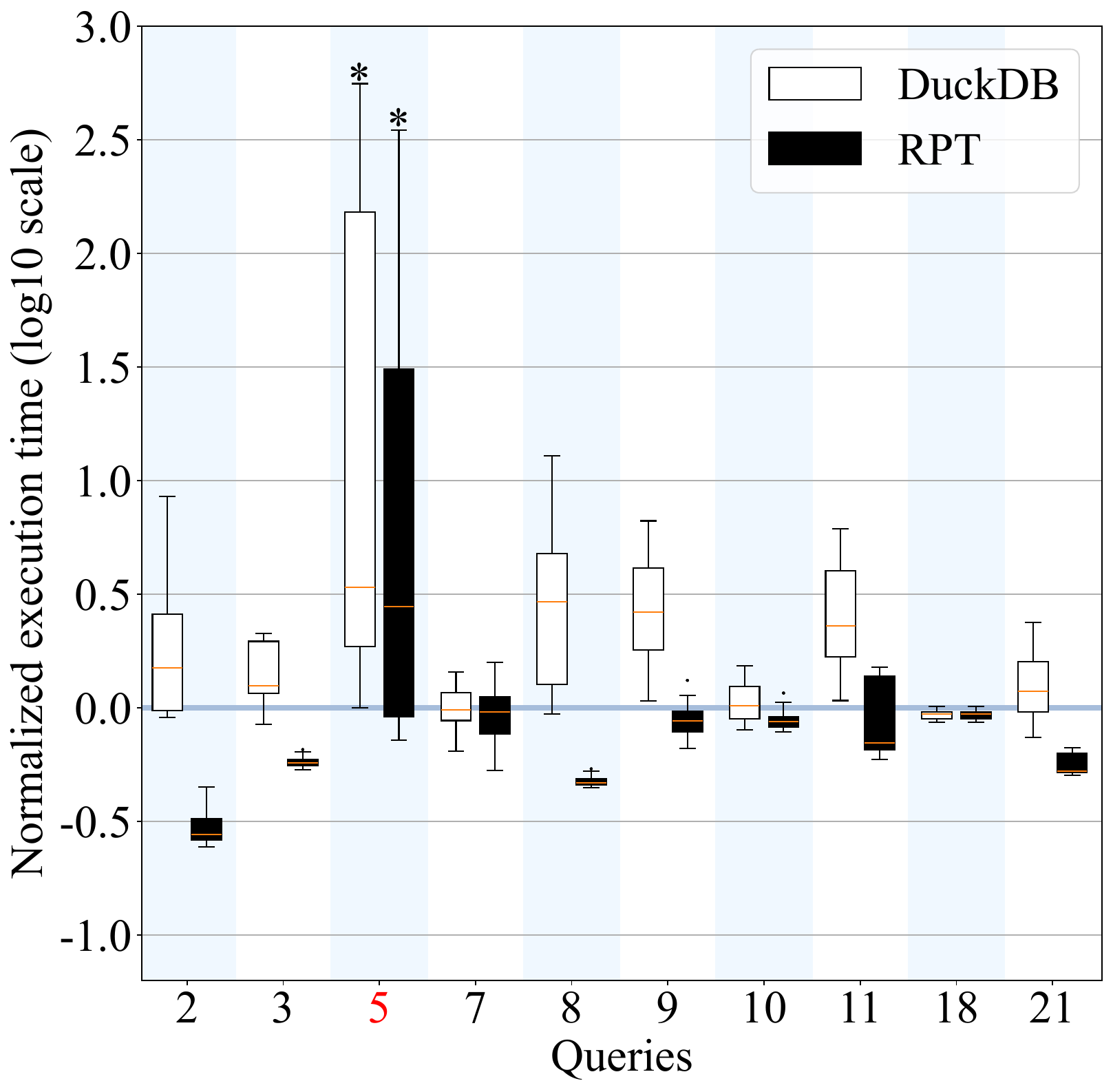}
        \caption{\tpch}
        \label{fig:eval-bushy-tpch}
    \end{subfigure}
    \begin{subfigure}{0.64\linewidth}
        \includegraphics[width=\linewidth]{./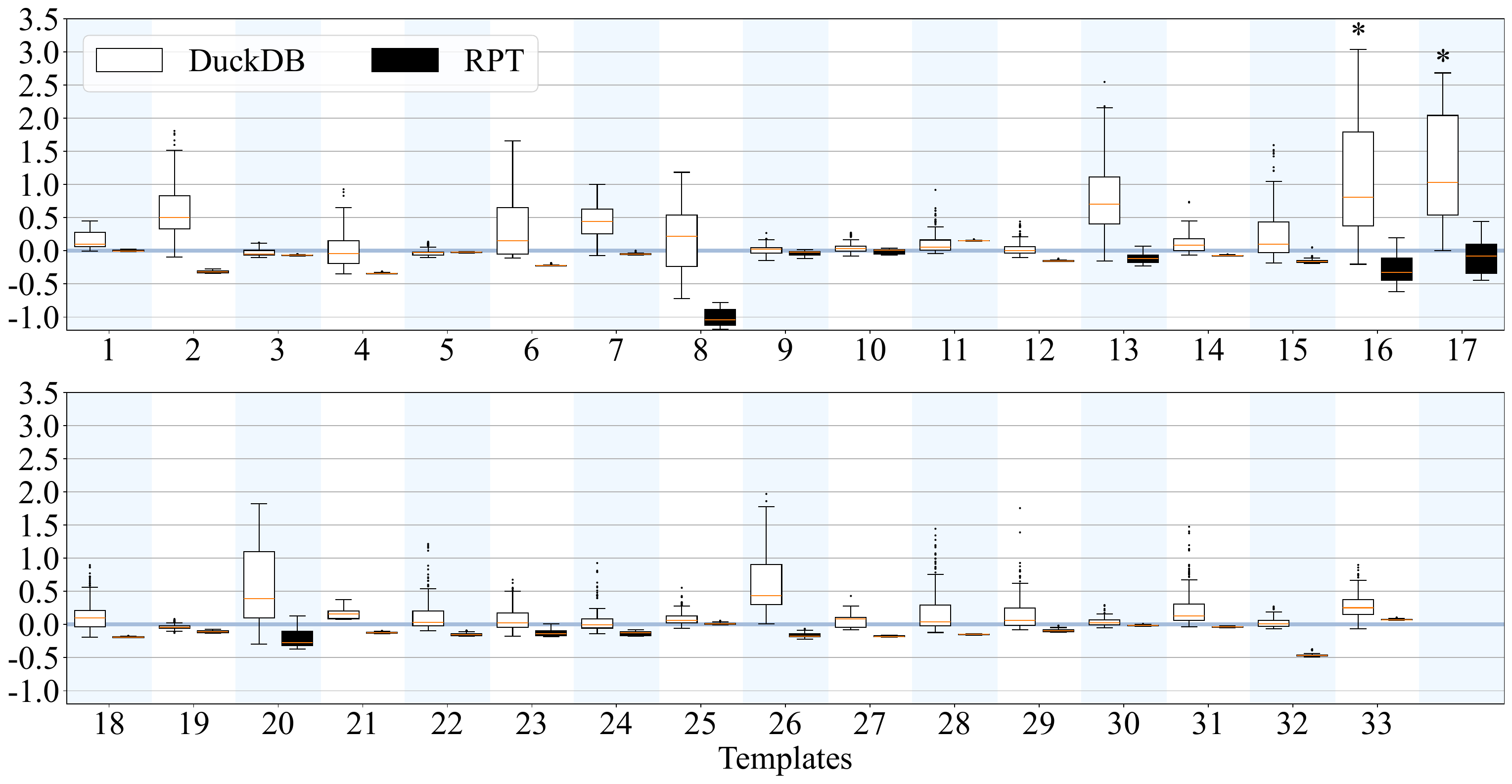}
        \caption{\job}
        \label{fig:eval-bushy-job}
    \end{subfigure}
    \caption{Distribution of the execution time of random bushy plans for each query in \tpch and \job \textnormal{-- Normalized by the execution time of default \duckdb. The figure is log-scaled. The box denotes 25- to 75-percentile (with the orange line as the median), while the horizontal lines denote min and max (excluding outliers). `*' indicates timeouts. Cyclic queries are in red.}}
    \label{fig:eval-bushy}
\end{figure*}

\subsubsection{Acyclic Queries (left-deep)}
\cref{fig:eval-left-deep} shows the distribution of the end-to-end execution time of the random left-deep plans for each query. We omit queries in \tpch with less than two joins because they are trivial in terms of join ordering. For \job queries, we present one result for each of the 33 query templates. The execution times (for the baseline and \rpt) for each query are normalized by the time $t_{opt}$ of \duckdb running its default optimizer's plan. The figure is in \emph{log scale} with the normalization line (i.e., horizontal zero) highlighted. We set the timeout to $1000 \times t_{opt}$. The `*' above a bar indicates that at least one of the random plans incurs a timeout for this query. Cyclic queries are marked by red query numbers.

We observe impressive join order robustness when using \rpt in \duckdb for \emph{all} acyclic queries. To quantify this, we define the \emph{Robustness Factor} (\robustmetric) as the ratio between the maximum and the minimum execution time. \cref{tab:RF-left-deep} presents the average, min, and max RFs for \duckdb and \rpt in each benchmark. The average RF for \duckdb with \rpt is consistently near 1 with the max (i.e., worst-case) RF = 1.473 for Query 13 in \tpcds. This is orders-of-magnitude more robust compared to the baseline. 

Additionally, applying \rpt improves the end-to-end query performance for most queries in the benchmarks (most \rpt boxes are below 0 in \cref{fig:eval-left-deep}). \cref{tab:rpt-perf} presents the average speedups of \rpt over the default \duckdb running its optimizer's plan (i.e., $t_{opt}$). We also included Bloom Join~\cite{bloomjoin} and the original \PT~\cite{yang2023PT} (\pt) as references\footnote{The execution time of each query can be found in Appendix A}. Besides robustness guarantees, applying \rpt reduces the execution time per query by $\approx1.5\times$ on average (geometric mean). Bloom join only achieves a marginal speedup over the baseline, and it does not improve join-order robustness\footnote{The full robustness results for Bloom join and \pt can be found in Appendix B}. \rpt outperforms the original \pt in \tpcds and \dsb thanks to the \texttt{\TreeStruct} algorithm. More importantly, \rpt guarantees query robustness. \cref{fig:RPTisBetter} shows selected queries from \job and \tpcds where the performance of the original \pt is sensitive to different join orders. The root cause is that the transfer schedules produced by \pt can lead to an incomplete reduction in the semi-join phase, as discussed in~\cref{sec:modeling:transfer}.

\begin{figure}[t!]
    \centering
    \includegraphics[width=\linewidth]{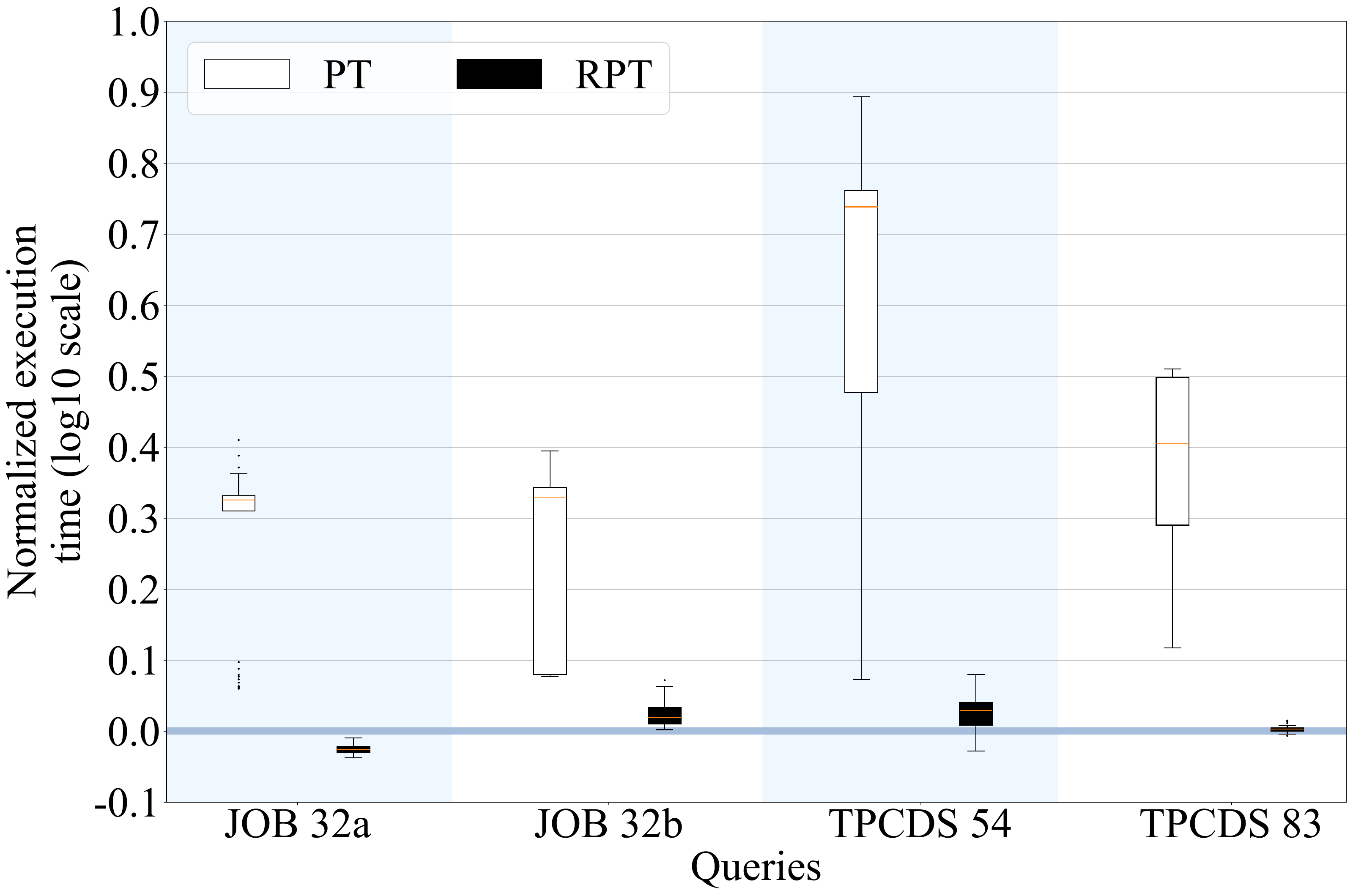}
    \caption{The distribution of the execution time of \pt and \rpt with random left deep plans for the selected query in \job and \tpcds \textnormal{-- Normalized by the execution time of \rpt with the optimizer's join order. The figure is log-scaled.}}
    \label{fig:RPTisBetter}
\end{figure}

The above results are encouraging. They show that join order optimization might no longer be a critical challenge at least for acyclic queries (which are the majority) if we implement joins using \rpt. In fact, the execution time of \rpt using the optimizer's join order is within the horizontal lines (i.e., min to max excluding outliers) for \emph{every} acyclic query in \cref{fig:eval-left-deep}. Future optimizers could, therefore, become much more efficient: they can better tolerate cardinality estimation errors, and they require simpler join enumeration algorithms because a left-deep plan is already good enough. 

A few acyclic queries (13, 29, and 48) in \tpcds have slightly larger variances than the others in \cref{fig:eval-left-deep-tpcds}. Query 13 and 48 include predicates that cannot be pushed down before the tables are joined in \duckdb, e.g., ($R.a$ < 100 AND $S.b$ < 200) OR ($R.a$ > 500 AND $S.b$ > 400). It is preferable to join $R$ and $S$ earlier if the predicate is selective. Query 29 is acyclic but not $\gamma$-acyclic. According to the analysis in \cref{sec:modeling:join}, certain join orders are unsafe. Although these special cases exhibit adequate robustness with random plans, they can still benefit from the optimizer.

\subsubsection{Acyclic Queries (bushy)}
We show the distribution of the end-to-end execution time of random bushy plans in \cref{fig:eval-bushy} with the robustness factors summarized in \cref{tab:RF-bushy}. We omit results for individual queries of \tpcds in \cref{fig:eval-bushy}\footnote{They can be found in Appendix C}. When including bushy plans, \rpt exhibits similar robustness measures against random join orders as in the left-deep case, with the average RF < 1.8 and the max RF = 7.7 for Query 17e in \job.

We notice a slight robustness degradation for a few queries (e.g., \tpch Q7 and \job 16b, 17e) when switching from left-deep to bushy plans. They share the common reason that the optimizer mistakenly placed the larger table on the build side of hash joins in the worst plans (out of random bushy plans). As shown in \cref{fig:wrong-build-side}, for example, picking the wrong build side for the top hash join alone in JOB 17e slows down the query by $37\%$. Such a mistake is unlikely in a left-deep plan because each base table (i.e., build side) is typically filtered heavily in the transfer phase of \rpt while the size of the intermediate result (i.e., probe side) increases monotonically in the join phase.

\begin{figure*}[t!]
    \centering
    \includegraphics[width=\linewidth]{./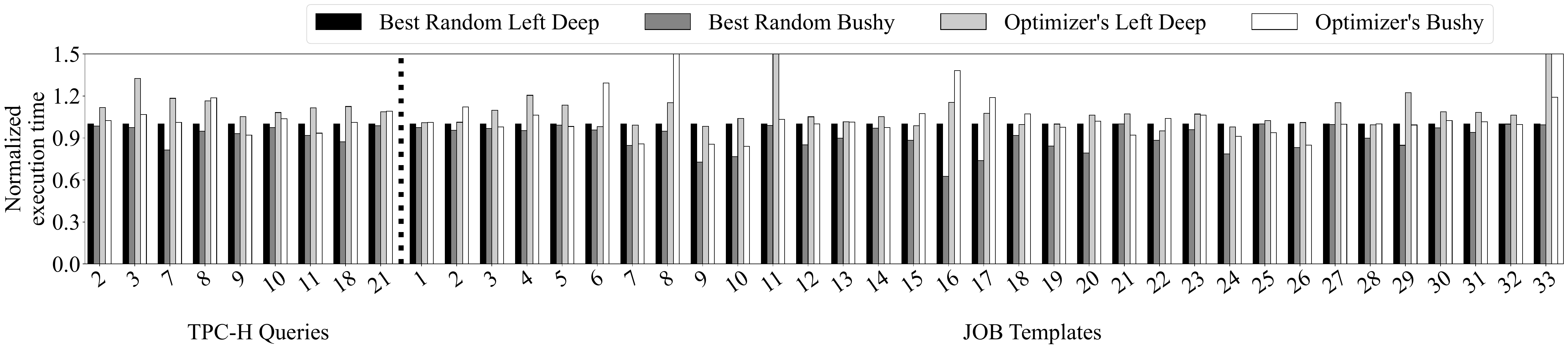}
    \caption{Speed up of bushy over left-deep plans. \textnormal{-- We draw the minimum execution time of \rpt out of random left-deep/bushy plans as well as the execution time of \rpt with the optimizer's left-deep/bushy plan for each query in \tpch and \job.}}
    \label{fig:left-bushy}
\end{figure*}

\begin{figure}[t!]
    \centering
    \includegraphics[width=\linewidth]{./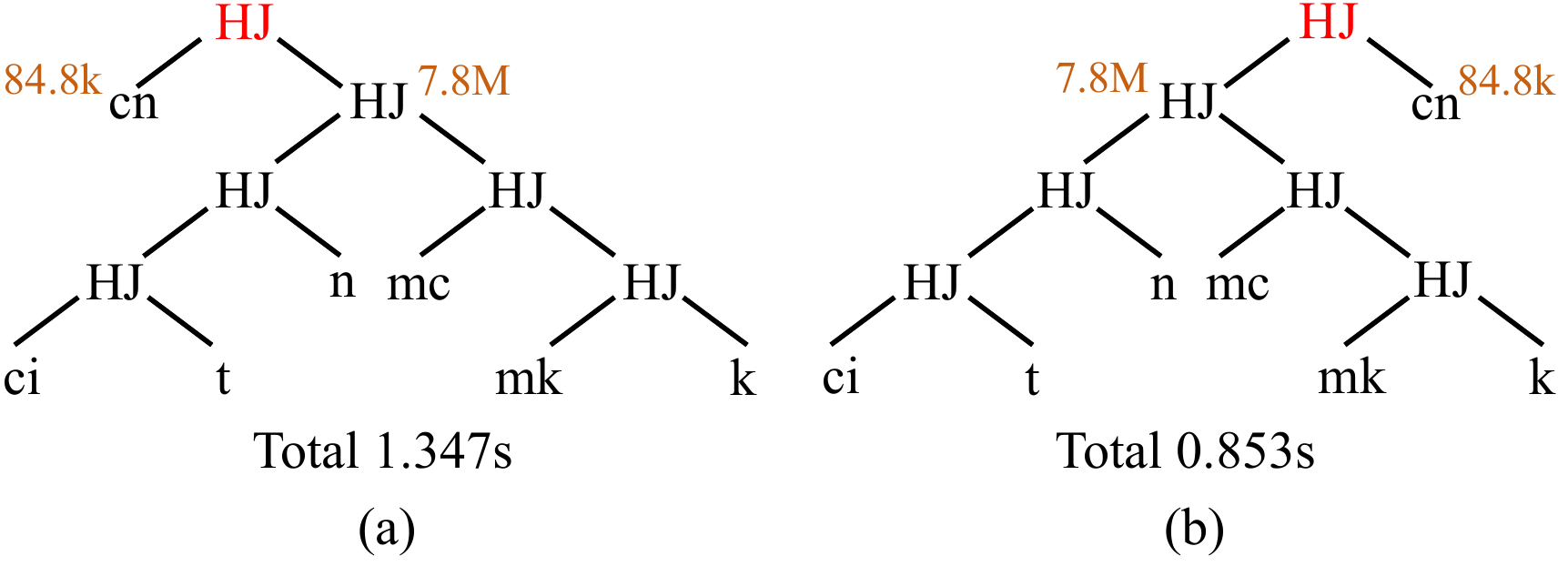}
    \caption{Slowdown caused by picking the wrong build side of a hash join (JOB 17e) \textnormal{-- (a) is a random plan with the incorrect build side for the top HJ; (b) is the fixed plan with the build side and probe side flipped.}}
    \label{fig:wrong-build-side}
\end{figure}

To demonstrate the performance gain of considering bushy plans, we select the best (i.e., with minimum execution time) random left-deep plan and the best random bushy plan for each query and compare their performance in \cref{fig:left-bushy}. We also include the left-deep plan and the bushy plan produced by \duckdb's optimizer for each query (labeled as Optimizer's Left Deep and Optimizer's Bushy, respectively) as references in the figure. We observe that considering bushy plans in the join phase of \rpt only speeds up the end-to-end execution by $6\%$ and $11\%$ for \tpch and \job, respectively compared to left-deep. Most optimizer's plans are slightly slower than the best ones from our randomly generated join orders, but the relative speedups of considering bushy plans remain small ($10\%$ for \tpch and $5\%$ for \job).
The semi-join reduction carried out in the transfer phase of \rpt significantly reduces the benefit of exploring a larger plan enumeration space. Therefore, we conclude that it is unnecessary to explore bushy plans when applying \RPT because bushy plans could sacrifice robustness for modest performance improvement.

\subsubsection{Cyclic Queries}
\rpt does not provide robustness guarantees for cyclic queries, as shown by the red-labeled queries in \cref{fig:eval-left-deep} and \cref{fig:eval-bushy} (i.e., \tpch Q5, \tpcds 19, 24, 46, 64, 68, 72, and 85). Although \rpt improves the execution time in most cases, the performance gap between the best and worst plans for a cyclic query is still huge. We propose that a robust execution engine in the future should adopt a hybrid approach to handle joins: executing the cyclic part of the query using worst-case optimal joins while processing the rest with \RPT.

\subsubsection{Case Study}
We present a case study on \job 2a to better illustrate the robustness guarantees brought by \rpt. \cref{fig:case} shows the best and worst left-deep plans for the baseline and \rpt (join phase only) with the size of each base or intermediate table marked. Without \rpt, the worst join order produces $179\times$ more intermediate tuples than the best. The worst join order suffers from the ``diamond problem'' described in \cite{birler2024robust}: small input $\rightarrow$ large intermediate result $\rightarrow$ small output, thus wasting computation. In comparison, the ratio of total intermediate results between the worst and best plans reduces to $1.2\times$ with \rpt. No matter what the join order is, the size of each intermediate table is bounded by the output size (i.e., 7.8k) and is monotonically increasing as the query executes.

\begin{figure}[t!]
    \centering
    \includegraphics[width=\linewidth]{./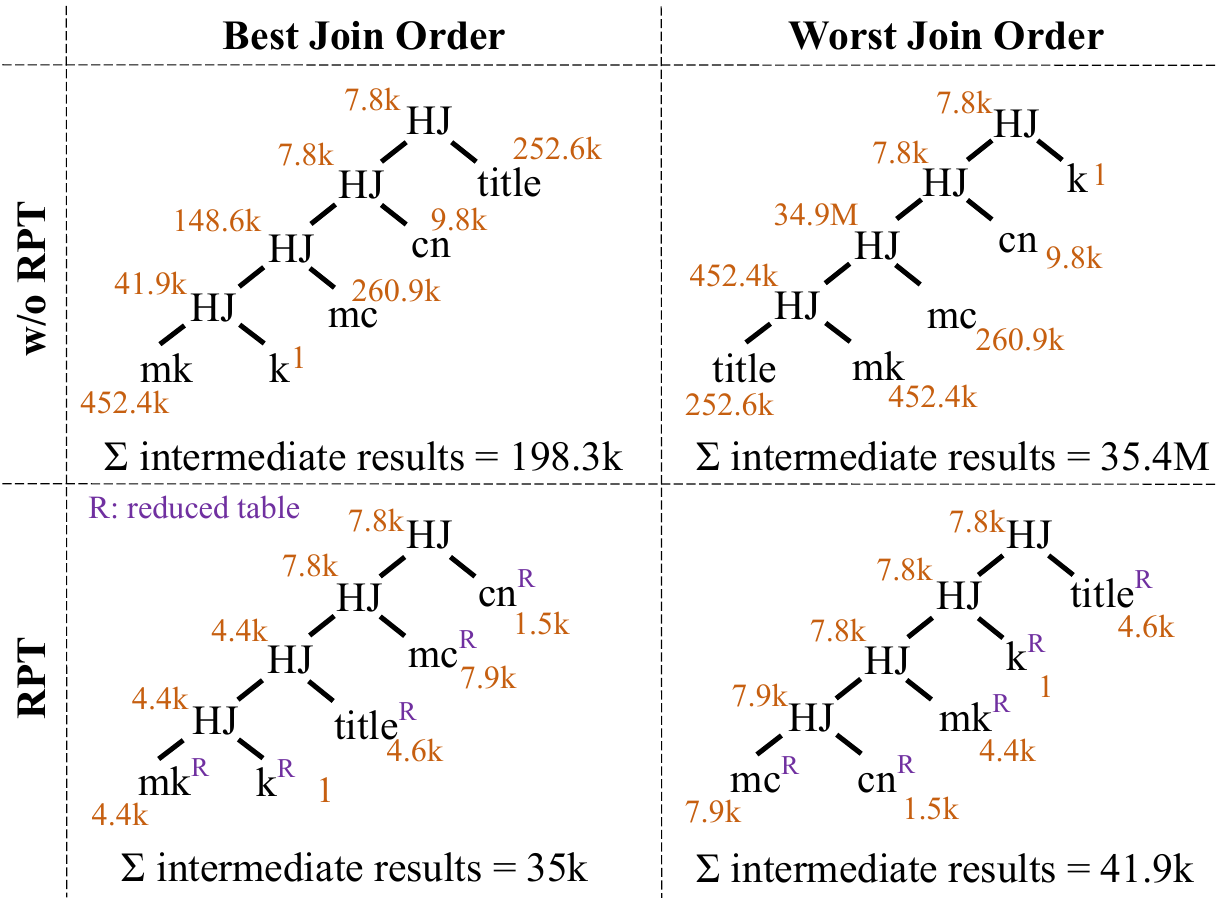}
    \caption{Case study on the robustness of JOB 2a \textnormal{-- We consider the reduced tables in \rpt (i.e., filtered base tables after the transfer phase) as intermediate results here.}}
    \label{fig:case}
\end{figure}

We also notice that even the best plan from the baseline must process a much larger ($\approx$$5\times$) intermediate result than any \rpt plan. This is because \rpt has a strict complexity supremacy over the baseline. \cref{fig:worst-supremacy} shows an example where query $R \Join S \Join T$ outputs nothing but any baseline plan (\NoPT) must process $N^2/2$ tuples, a quadratic explosion compared to \rpt plans. This example can be extended to create an exponential explosion as the number of tables increases. In comparison, \YannAlg guarantees that the size of $\sum$intermediate results for \rpt can be at most $n\times$ the output size, where $n$ is the number of joins.

\begin{figure}[t!]
    \centering
    \includegraphics[width=0.8\linewidth]{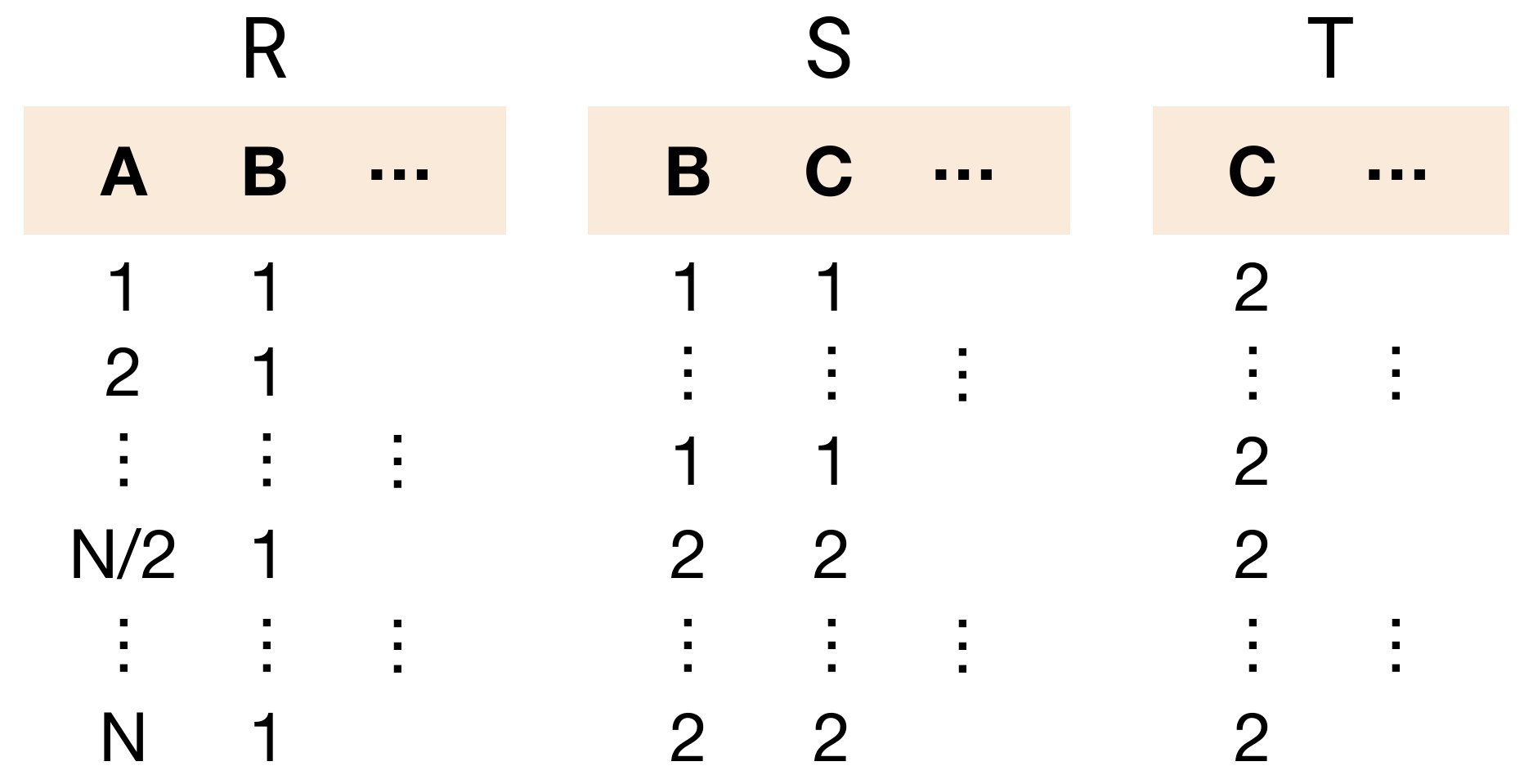}
    \caption{An example query (with an empty output) where any \NoPT plan must process $N^2/2$ tuples.}
    \label{fig:worst-supremacy}
\end{figure}

\subsection{Robustness of \TreeStruct}
\label{sec:eval:largestroot}

\begin{figure*}[t!]
    \centering
    \begin{subfigure}{0.39\linewidth}
        \includegraphics[width=\linewidth]{./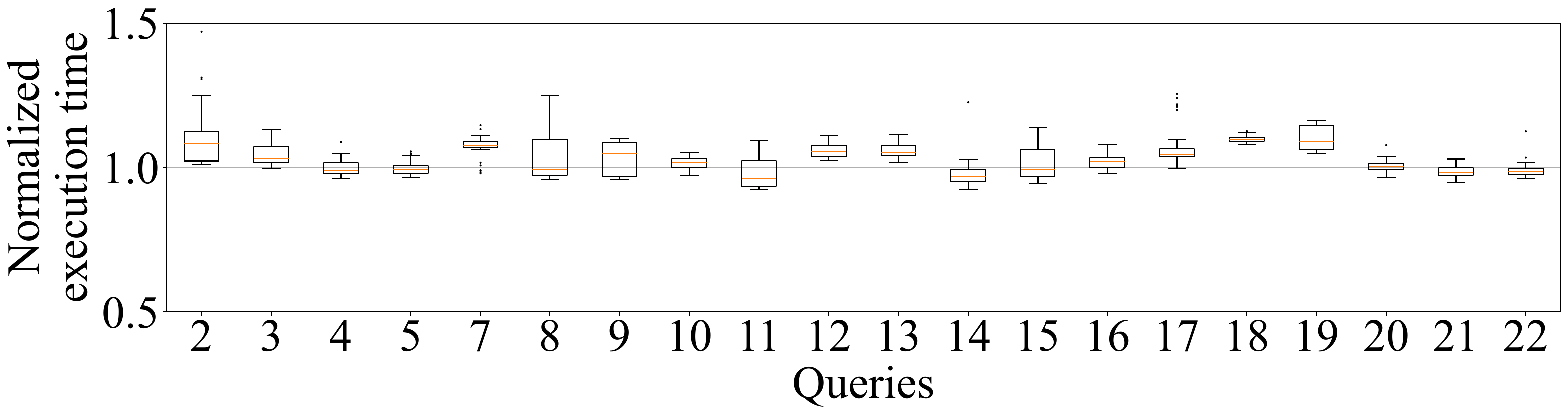}
        \caption{TPC-H}
        \label{fig:transfer-robustness-tpch}
    \end{subfigure}
    \begin{subfigure}{0.57\linewidth}
        \includegraphics[width=\linewidth]{./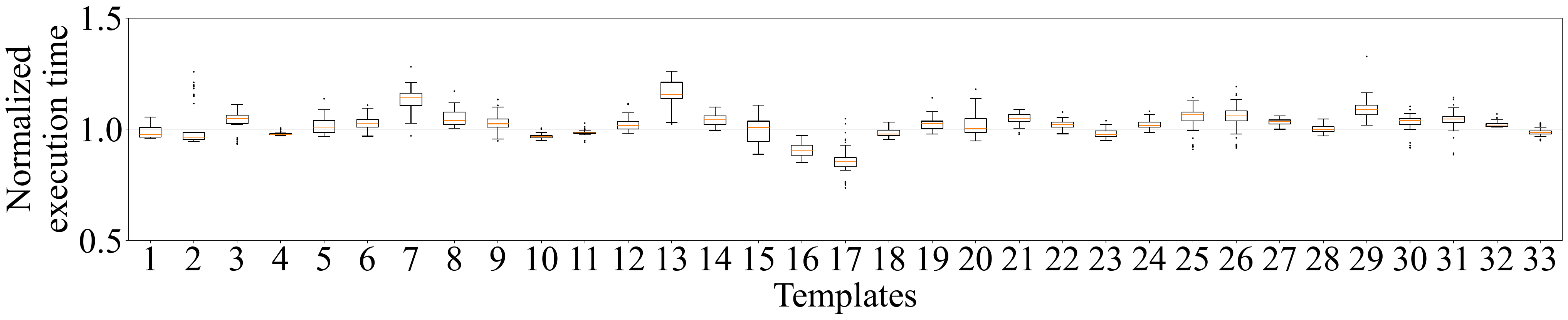}
        \caption{JOB}
        \label{fig:transfer-robustness-job}
    \end{subfigure}
    \caption{Distribution of the execution time of 50 random \TreeStruct transfer graphs for each query in \tpch and \job \textnormal{-- The box denotes 25- to 75-percentile (with the orange line as the median), while the horizontal lines denote min and max (excluding outliers).}}
    \label{fig:transfer-robustness}
\end{figure*}

We next zoom in to evaluate the robustness of the transfer phase in \rpt (i.e., the \TreeStruct algorithm). We modified \TreeStruct to generate 50 random join trees, but each of them still has the largest relation as the root. Specifically, we replaced the original \cref{step:choose} in \TreeStruct with ``Find an edge $e = \{R,S\} \in E(G_q)$ such that $R\in\mathcal R\setminus \mathcal R',S\in\mathcal R'$''. We fix the join order in each run to be the one produced by \duckdb's default optimizer. Other experiment settings follow those in \cref{sec:eval:end-to-end}.

\Cref{fig:transfer-robustness} shows the distribution of the end-to-end execution time with random \TreeStruct transfer graphs for each query in \tpch and \job. The 50 execution times for each query are normalized by the query time achieved using the unmodified \TreeStruct. We observe that the performance of the queries is robust against different transfer graphs (i.e., join trees for acyclic queries) as long as the algorithm keeps the largest relation at the root. Additionally, we notice that most boxes in \cref{fig:transfer-robustness} are above 1.0 (i.e., slower than the original \TreeStruct), indicating that the edge-picking heuristic used in \cref{step:choose} of \TreeStruct is effective in speeding up the transfer phase of \rpt.

\subsection{Robustness with Multi-Threaded Execution}
\label{sec:eval:multi-thread}

We repeat the left-deep experiments (i.e., \cref{fig:eval-left-deep}) in \cref{sec:eval:end-to-end} with 32 threads to investigate how multi-threaded execution affects the robustness of \rpt. As shown in \cref{fig:eval-32-threads}, \rpt still exhibits outstanding query robustness with orders-of-magnitude improvement over the baseline on the Robustness Factor (\robustmetric). Compared to \cref{fig:eval-left-deep}, we notice that the variance of the execution times across different left-deep plans increases for some of the queries when switching from single-threaded to multi-threaded execution. This is because some random left-deep plans placed a relatively small (reduced) table on the probe side of the long (probing) pipeline, which does not have enough data chunks to distribute across 32 parallel threads to fully utilize the computation. The problem is orthogonal to the robustness guarantees offered by \rpt.

\begin{figure*}[t!]
    \centering
    \begin{subfigure}{0.37\linewidth}
        \includegraphics[width=\linewidth]{./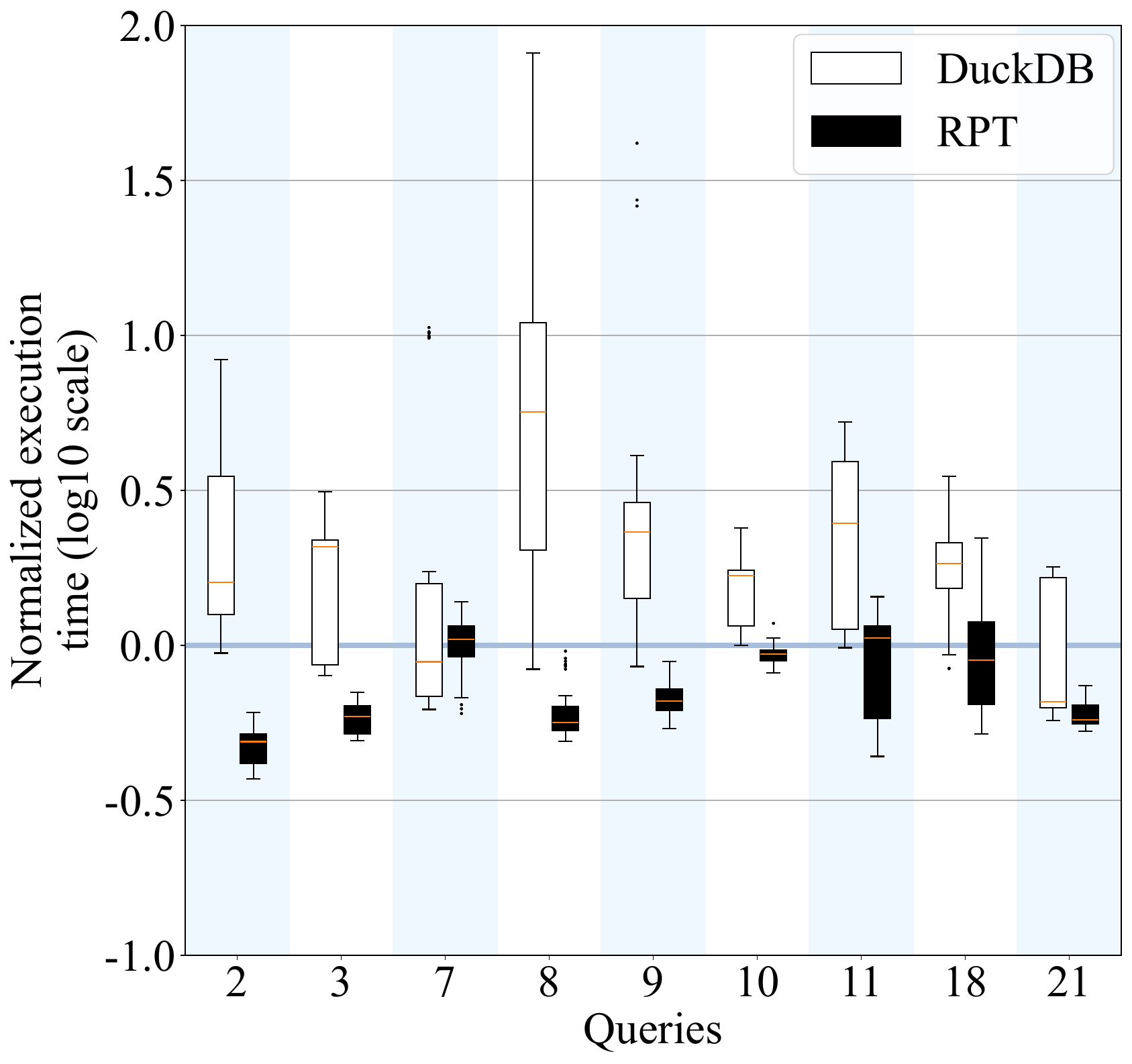}
        \caption{\tpch (32 threads)}
        \label{fig:eval-32-threads-tpch}
    \end{subfigure}
    \begin{subfigure}{0.62\linewidth}
        \includegraphics[width=\linewidth]{./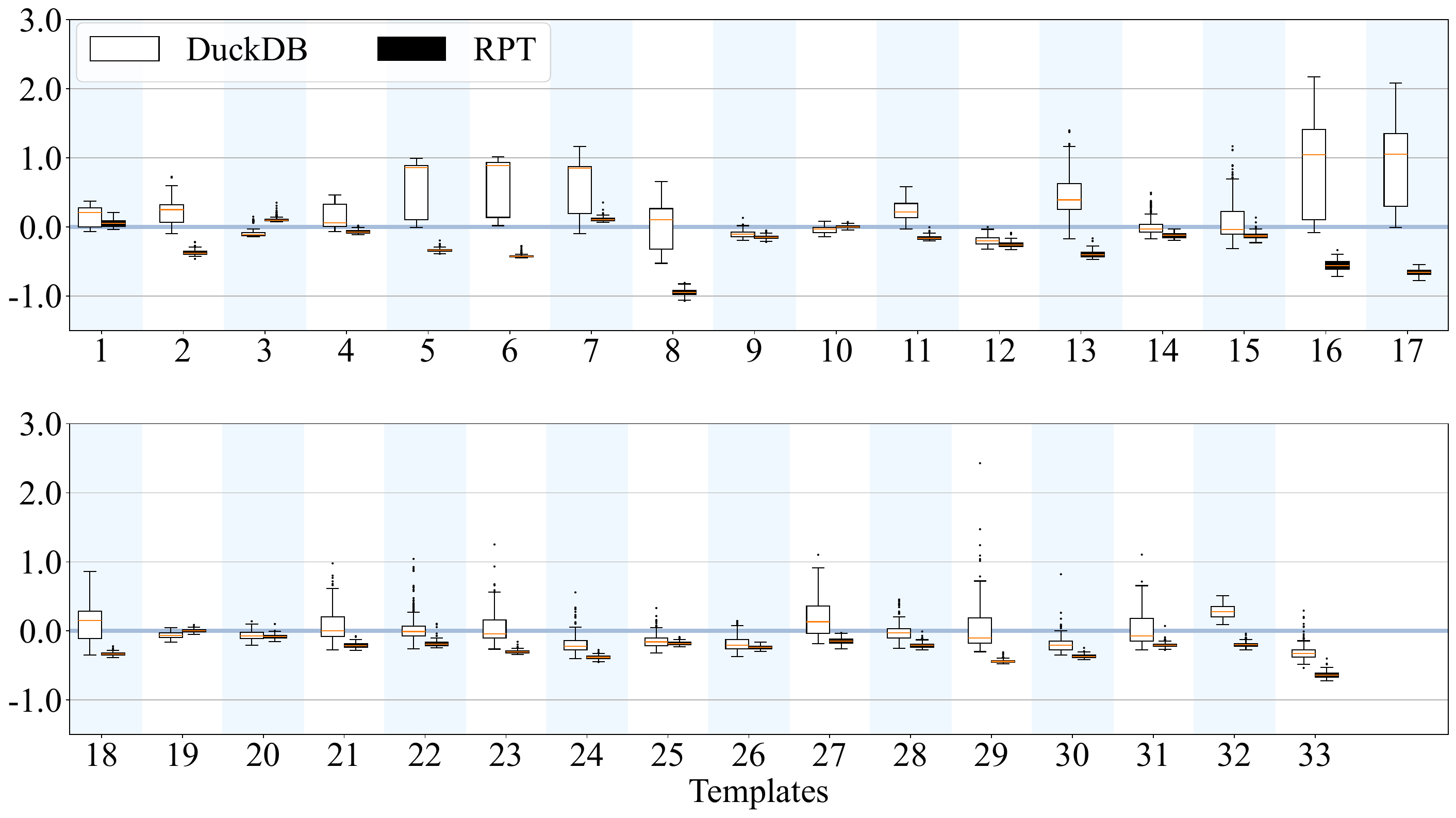}
        \caption{\job (32 threads)}
        \label{fig:eval-32-threads-job}
    \end{subfigure}
    \caption{Distribution of multi-threaded execution time of random left-deep plans for each acyclic query in \tpch and \job \textnormal{-- Normalized by the execution time of default \duckdb. The figure is log-scaled. The box denotes 25- to 75-percentile (with the orange line as the median), while the horizontal lines denote min and max (excluding outliers).}}
    \label{fig:eval-32-threads}
\end{figure*}

\subsection{Performance with Data On Disk}
\label{sec:eval:on-disk}

We extend our evaluation to the case where (1) the base tables reside on disk (labeled as ``on-disk''), and (2) some intermediate results of \rpt do not fit in memory (labeled as ``+spill''). The intermediate results of \rpt refer to the materialized data chunks that contain the remaining tuples after the forward pass in the semi-join phase. We evaluate the optimizer's plan for \duckdb and \rpt for each query in \tpch and \job. For ``+spill'', we configure the available memory to be $\approx50\%$ of \rpt's peak memory usage for each query and make sure that the spilled data reside on disk. As shown in~\cref{fig:on-disk}, \rpt still archives an average (geometric mean) speedup of $1.3\times$ and $1.5\times$ over the default \duckdb for the ``on-disk'' and ``on-disk+spill'' cases, respectively. Although the backward pass in the semi-join phase of \rpt incurs repeated data accesses, the overhead is small. This is because (1) the volume of the materialized data after the forward pass is small due to the selective semi-join filters, and (2) the backward-pass scans on the materialized data are sequential.

\begin{figure*}[t!]
    \centering
    \begin{subfigure}{0.46\linewidth}
        \includegraphics[width=\linewidth]{./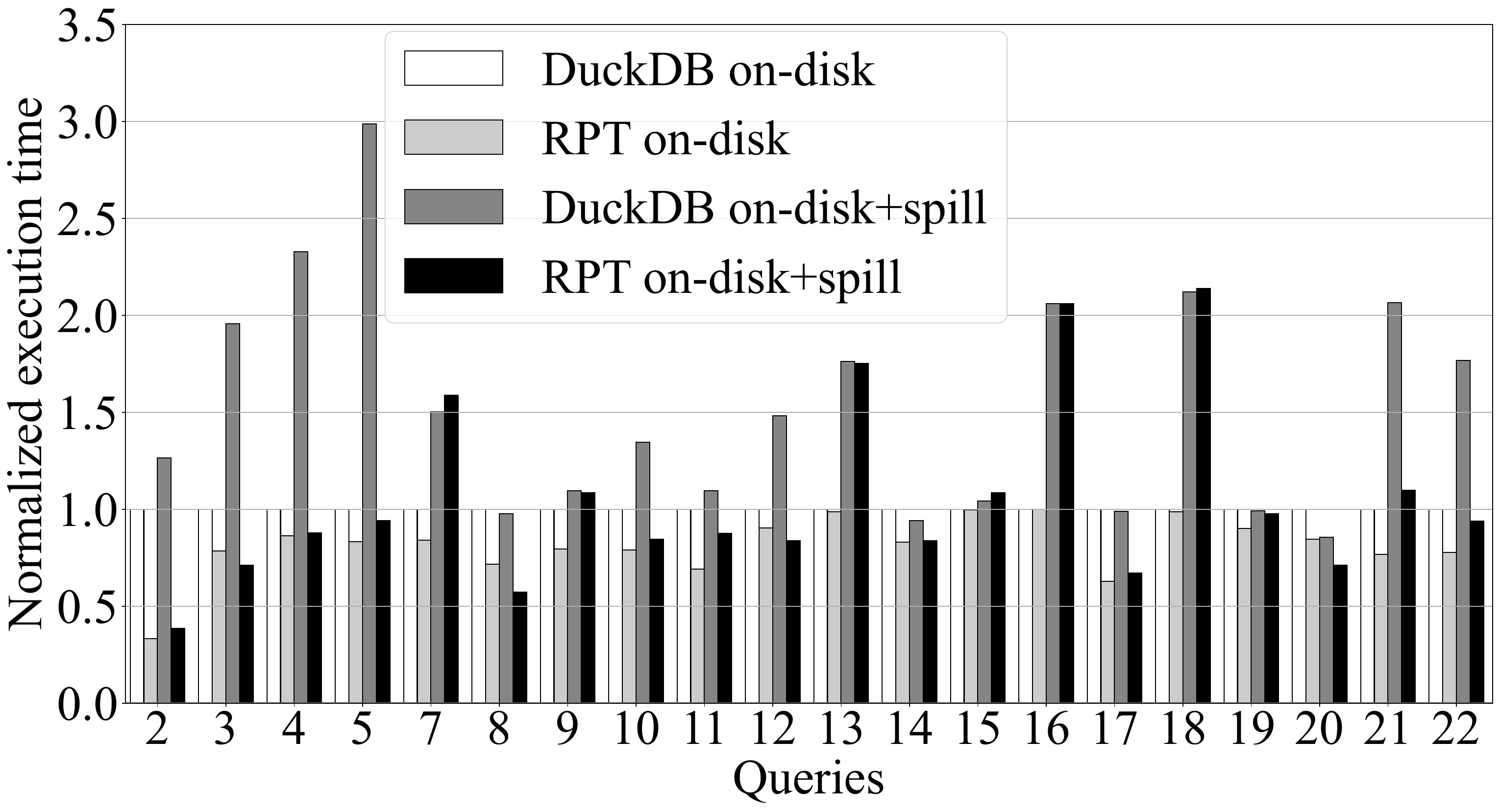}
        \caption{\tpch}
    \end{subfigure}
    \begin{subfigure}{0.51\linewidth}
        \includegraphics[width=\linewidth]{./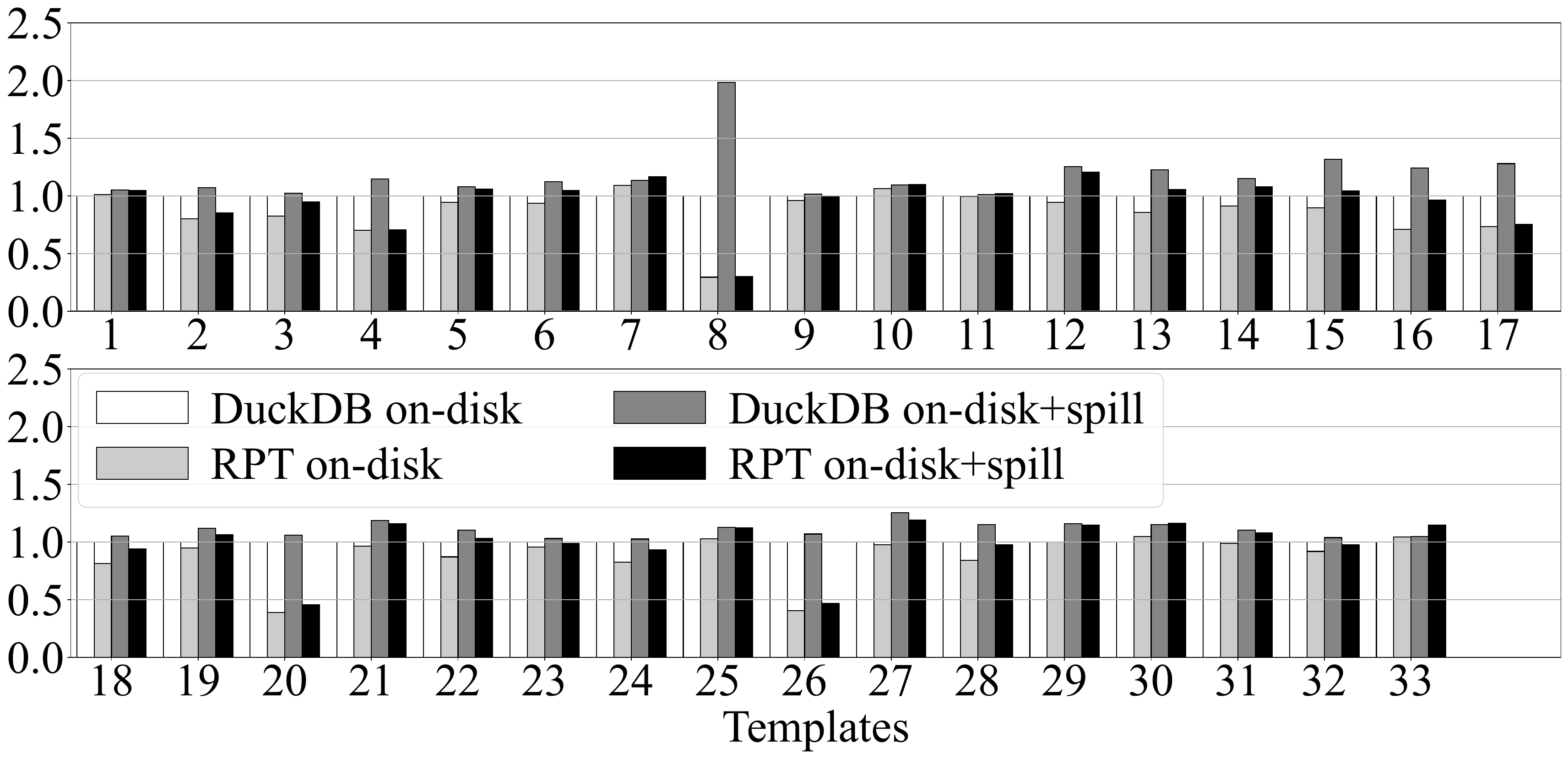}
        \caption{\job}
    \end{subfigure}
    \caption{Comparison of the execution time of \duckdb and \rpt (with optimizer's plan) for each query in \tpch and \job when the base tables reside on disk (on-disk) + the intermediate results do not fit in memory (+spill) \textnormal{-- Normalized by the execution time of default \duckdb with base tables on disk.}}
    \label{fig:on-disk}
\end{figure*}

\subsection{Performance of Bloom Filters}
\label{sec:eval:bloomfilter}

\begin{figure}[t!]
    \centering
    \includegraphics[width=\linewidth]{./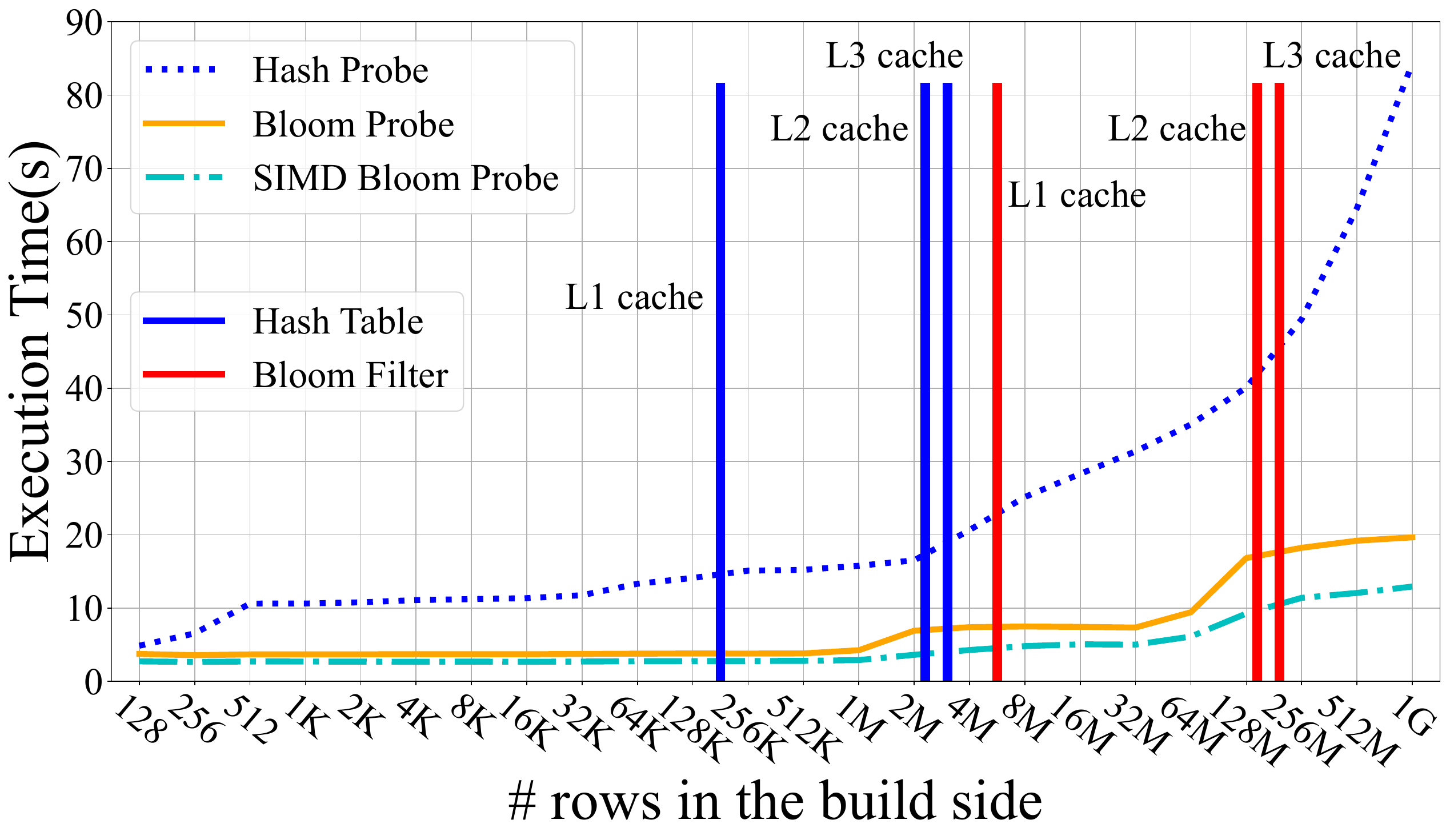}
    \caption{Microbenchmark on Bloom Probe vs. Hash Probe \textnormal{-- We fix the probe side to 1 billion entries while varying the size of the build side on the x-axis.}}
    \label{fig:bf-probe}
\end{figure}

We presented in \cref{fig:eval-left-deep} that \rpt improves the overall query performance by $\approx$$1.5\times$ besides robustness, and our performance breakdown shows that the \BF operations in the transfer phase of \rpt account for on average $28\%$, $12\%$, and $46\%$ of the total execution time in \tpch, \job, and \tpcds, respectively. In this section, we evaluate the performance gap between \BF probes and hash table probes through a microbenchmark. We create a synthetic dataset with two single-column tables. We fix the size of the probe-side table to 1 billion rows while varying the size of the build-side table. The integer values of each column are uniformly distributed between $0$ and $2^{30}$.

\cref{fig:bf-probe} reports the execution time of performing 1 billion probes on hash tables or \BFs with different sizes. We use \duckdb's vectorized hash table implementation for hash probes and our modified version of Arrow's blocked \BF for Bloom probes. The blue (red) vertical lines denote the points where the size of the hash table (\BF) exceeds L1, L2, and L3 caches. We observe that the SIMD version of Bloom probes outperforms vectorized hash probes by $2-7\times$. The performance gap grows as the size of the hash table / \BFs increases, indicating a potentially greater performance advantage of \rpt on larger datasets.

\section{Related Work}
\label{sec:related}

\subsection{Sideways Information Passing (SIP)}

Sideways Information Passing (SIP) refers to techniques that optimize join operations by transmitting predicate information to the target table to facilitate tuple pre-filtering in a database. Existing SIP techniques can be categorized as Bloom join~\cite{bloomjoin, distributedbloomjoin, optimizingdistributedbloomjoin, zhu2017LIP} and semi-join reduction~\cite{usingsemi}. In Bloom join, a \BF is generated on the build side of a hash join and passed to the probe side to filter tuples before accessing the hash table. Semi-join reduction, on the other hand, applies a semi-join operation to pre-filter tuples before conducting the actual hash join.

Lookahead Information Passing (LIP)~\cite{zhu2017LIP} can be considered a special case of \RPT with star schema. LIP constructs \BFs for each dimension table and uses them to pre-filter the large fact table before performing the joins. LIP focuses on techniques to reorder the \BFs dynamically and adaptively to reduce the computational overhead of the SIP process. These techniques are orthogonal to our work and can also be applied to \rpt.

In contrast to the existing SIP approaches, \RPT provides strong theoretical guarantees on query robustness by applying pre-filtering (with \BFs) systematically based on the \YannAlg, rather than focusing on particular joins locally.

\subsection{Robust Query Processing}

Previous studies~\cite{2019tutorial_robust, robustoptimization} offer a comprehensive survey of robust query optimization methods. These methods target mitigating the impact of inaccurate cardinality estimations, and they can classified into two categories: robust plans~\cite{2002LEC, 2005RCE, 2007plan_diagram, 2008strict_plan_diagram, Abhirama2010BDSH, AlyoubiHW15, Wolf2018RobustMetric} and re-optimization~\cite{1998reopt, 1999reopt_shared_nothing, 2000eddies, 2004pop, 2007pop_parallel, 2016planbouquets, Perron19, 2023reopt_zhao, justen2024polar}.

Robust plans, such as Least Expected Cost~\cite{2002LEC, 2005RCE}, estimate the distributions of the filter/join selectivities. In contrast, the Cost-Greedy approach reduces the search space by low-cardinality approximations to favor the choices of performance-stable plans~\cite{2007plan_diagram}. Similarly, SEER applies low-cardinality approximations to accommodate arbitrary estimation errors~\cite{2008strict_plan_diagram}, while~\cite{Abhirama2010BDSH, AlyoubiHW15, Wolf2018RobustMetric} propose metrics to quantify the robustness of execution plans during query optimization.

ReOpt~\cite{1998reopt, 1999reopt_shared_nothing} introduces mid-query re-optimization, where the query engine detects cardinality estimation errors at execution time and re-invokes the optimizer to refine the remaining query plan. Eddies routes data tuples adaptively through a network of query operators during execution~\cite{2000eddies}. The POP algorithm introduces the concept of a "validity range" for selected plans, triggering re-optimization when the actual parameter values fall outside this range~\cite{2004pop, 2007pop_parallel}. Plan Bouquet eliminates the need for estimating operator selectivities by identifying a set of "switchable plans" that can accommodate runtime selectivity variations~\cite{2016planbouquets}. Experiments in~\cite{Perron19} demonstrate that query re-optimization achieves excellent performance on PostgreSQL with the Join Order Benchmark. QuerySplit~\cite{2023reopt_zhao} introduces a novel re-optimization technique to minimize the probability of explosive intermediate results during re-optimization. POLAR~\cite{justen2024polar} avoids intertwining query optimization and execution by inserting a multiplexer operator into the physical plan.

A few recent works~\cite{birler2024robust, treetrackerjoin} developed algorithms fundamentally equivalent to the \YannAlg. They focused on avoiding performance regression when applying semi-join reductions even on worst-case input (i.e., input where pre-filtering is ineffective).

Compared to \rpt, most existing robust query processing approaches lack theoretical guarantees on join-order robustness. Nevertheless, some of the techniques related to physical operator selections and operators beyond join are orthogonal to \rpt and can complement our approach to boost query performance further.

\subsection{Worst-Case Optimal Join}

While the \YannAlg performs acyclic joins in optimal time (linear in the input and output size), answering general cyclic queries in polynomial time in terms of input, output, and query size is impossible unless $\textsf{P}=\textsf{NP}$.

A tractable extension for the cyclic case is near-acyclic queries, whose intricacy can be measured by different notions of width, such as treewidth~\cite{ROBERTSON1986309}, 
query width~\cite{chandra1997}, hypertree width~\cite{gottlob1999}, and submodular width~\cite{Marx10}. Generally speaking, a query with a width of $k$ has an upper bound $O(N^k+OUT)$ on the time complexity.
The hierarchy of bounds is summarized in a survey~\cite{suciu2023} and a recent result~\cite{lpnorm2024}.

Worst-case optimal join~(WCOJ) algorithms are developed to guarantee the above bounds on the running time. Binary joins are ubiquitous in relational DBMS but fail short on certain database instances compared to WCOJ algorithms. NPRR~\cite{nprr12} is the first algorithm that achieves the AGM bound~\cite{agm08}, and then an existing algorithm LFTJ is also proved to be running in the AGM bound~\cite{2014leapfrog}. These algorithms are unified as the Generic Join~\cite{ngo2014SIGMOD,ngo2018}, which determines one variable at a time using tries. The PANDA algorithm~\cite{panda2017,panda2024} eliminates one inequality at a time using horizontal partitioning and achieves the polymatroid bound. Variants of WCOJ algorithms have been adopted in distributed query processing~\cite{chu2015theory, koutris2016worst, ammar2018distributed}, graph  processing~\cite{zhang2014evaluating, aberger2017emptyheaded, ammar2018distributed, hogan2019worst, mhedhbi2019optimizing, zhu2019hymj}, and general-purpose query processing~\cite{aref2015design, aberger2018levelheaded,2020hashtrie}. WCOJ algorithms are becoming practical as their performance surpasses traditional binary joins for certain queries~\cite{freejoin2023}.

Unlike WCOJ algorithms, \RPT only provides theoretical guarantees on the runtime for acyclic queries. However, it is strictly better than WCOJ algorithms because it bounds the runtime to the instance-specific output size rather than a more generalized upper bound.

\section{Conclusion}
\label{Conclusion}

We proposed the \RPT algorithm that is provably robust against arbitrary join orders of an acyclic query. Our evaluation in \duckdb shows that \rpt ensures a small variation in the execution time between random join orders for acyclic queries while improving their end-to-end performance at the same time. We hope that our results advance the state-of-the-art of robust SQL analytics and will simplify the join optimization logic in future query optimizers.

\bibliography{ref}

\newpage

\appendix

\section{RPT Performance with Optimizer's Plan}

In Appendix A, we present the full performance results of \RPT using the optimizer's plan, compared to our baseline methods: DuckDB, Bloom Join, and \PT. 

\Cref{fig:tpch-perf} shows the execution time with the optimizer's plan for each query in \tpch. Note that we exclude Q1 and Q6, as they only involve scanning and filtering a single table. On average (geometric mean), \RPT outperforms vanilla DuckDB by $1.53\times$ and Bloom Join by $1.33\times$. Additionally, \RPT achieves the same performance as \PT. This is because \tpch queries are relatively simple, and the transfer scheduling of \PT and \RPT does not differ significantly.

\Cref{fig:job-perf} displays the execution time with the optimizer’s plan for one result from each of the 33 query templates in the \job. On average (geometric mean), \RPT outperforms vanilla DuckDB by $1.46\times$ and Bloom Join by $1.29\times$, while matching the performance of \PT.

\Cref{fig:tpcds-perf} presents the execution time with the optimizer's plan for each query in \tpcds. On average (geometric mean), \RPT outperforms DuckDB by $1.56\times$, Bloom Join by $1.48\times$, and \PT by $1.23\times$. However, for certain queries (e.g., Q16, Q61, and Q69), \RPT performs poorly compared to vanilla DuckDB and Bloom Join. This is due to the result being empty for these queries, causing \RPT to scan more tables than vanilla DuckDB and Bloom Join, as query execution stops upon encountering empty intermediate results.

In \Cref{fig:dsb-perf}, we show the execution time with the optimizer's plan for each query in \dsb.  On average (geometric mean), \RPT outperforms vanilla DuckDB by $1.54\times$, Bloom Join by $1.45\times$, and \PT by $1.23\times$. Similar to \tpcds, for some specific queries, \RPT exhibits poor performance for certain queries due to empty intermediate results, resulting in additional table scans compared to vanilla DuckDB and Bloom Join.

\begin{figure*}[t!]
    \centering
    \includegraphics[width=\linewidth]{./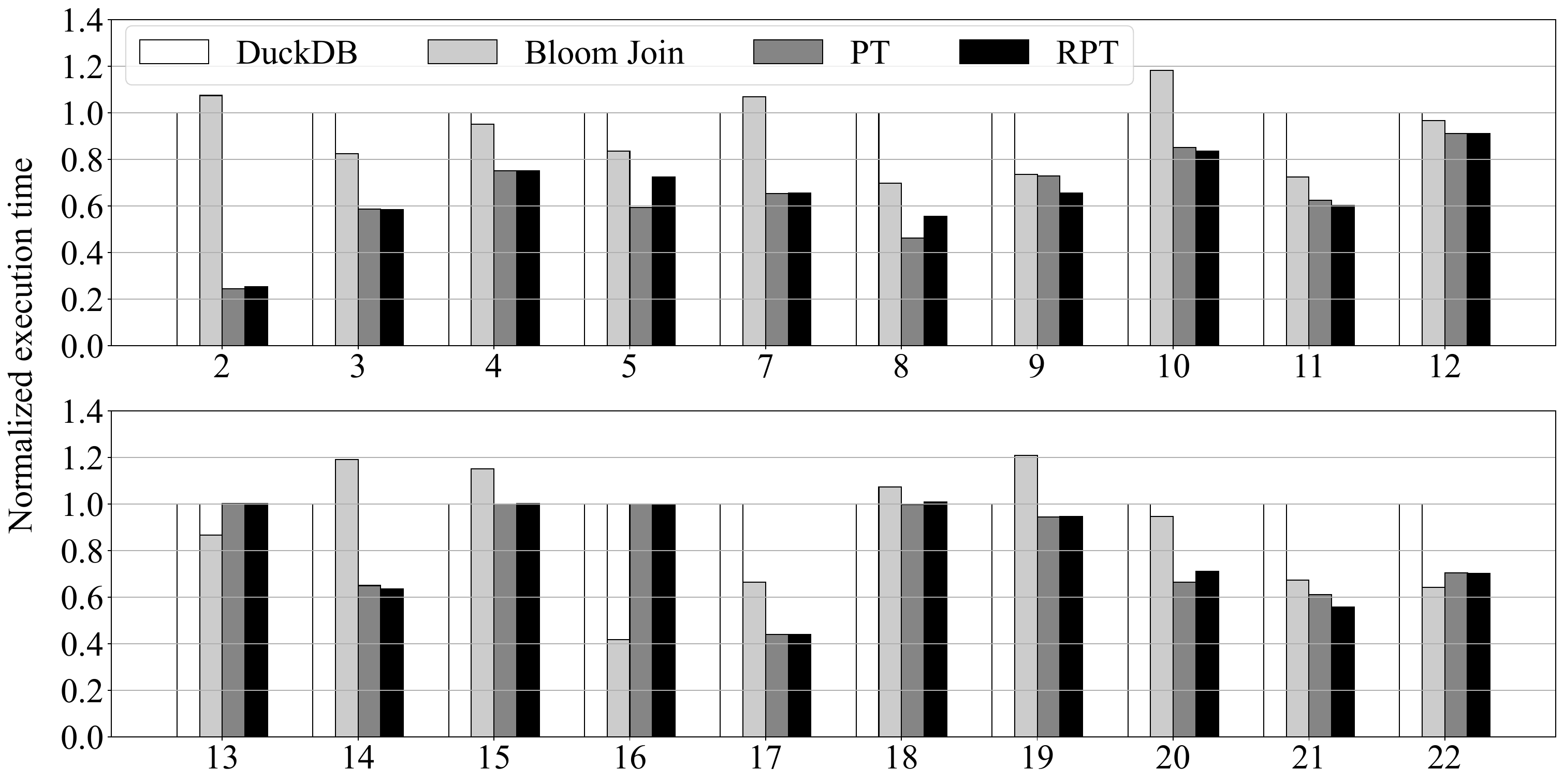}
    \caption{The execution time with optimizer's plans for each query in \tpch \textnormal{-- Normalized by the execution time of default \duckdb. We omit Q1 and Q6 as they are only the table scan and filtering.}}
    \label{fig:tpch-perf}
\end{figure*}

\begin{figure*}[t!]
    \centering
    \includegraphics[width=\linewidth]{./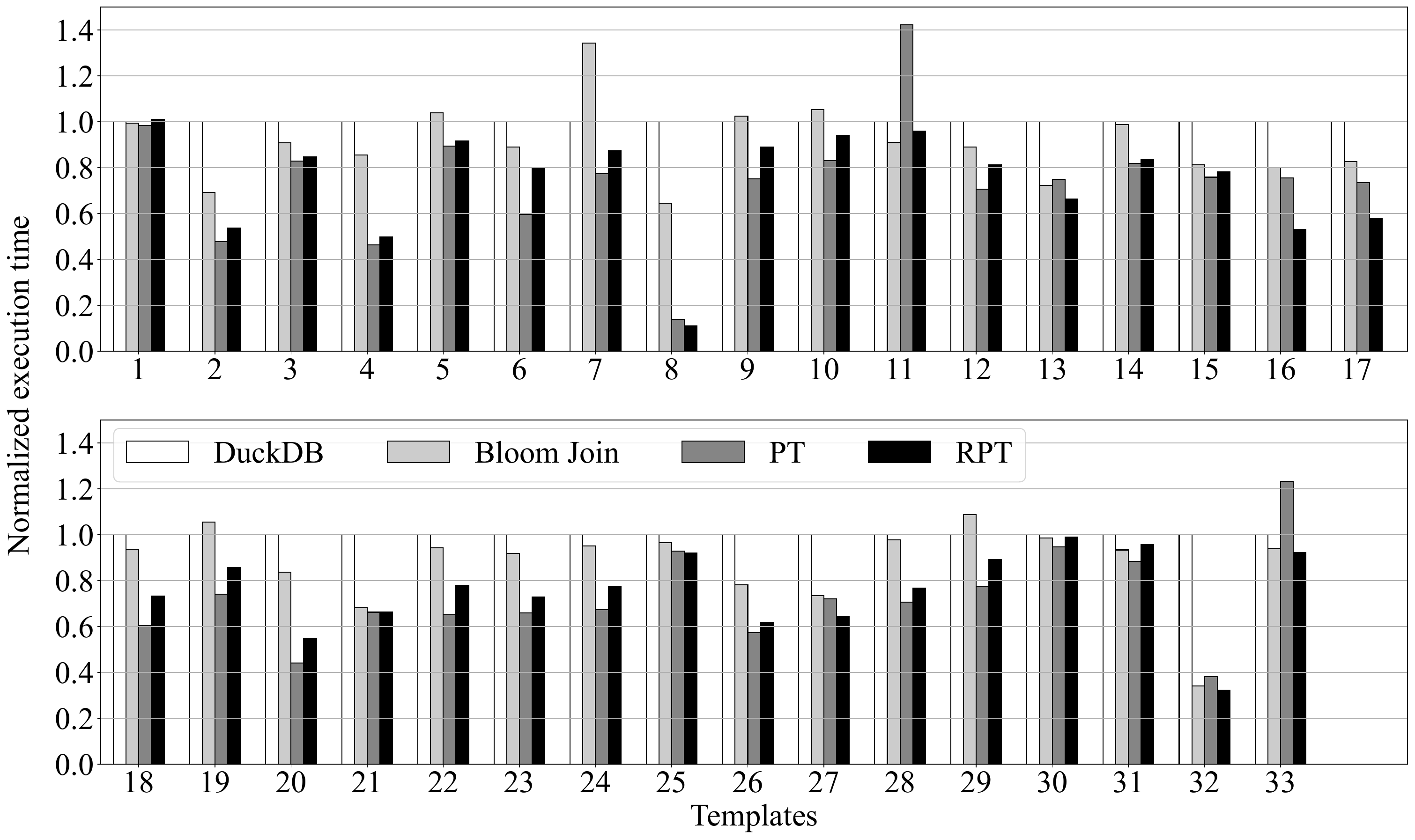}
    \caption{The execution time with optimizer's plans for each query in \job \textnormal{-- Normalized by the execution time of default \duckdb.}}
    \label{fig:job-perf}
\end{figure*}

\begin{figure*}[t!]
    \centering
    \includegraphics[width=\linewidth]{./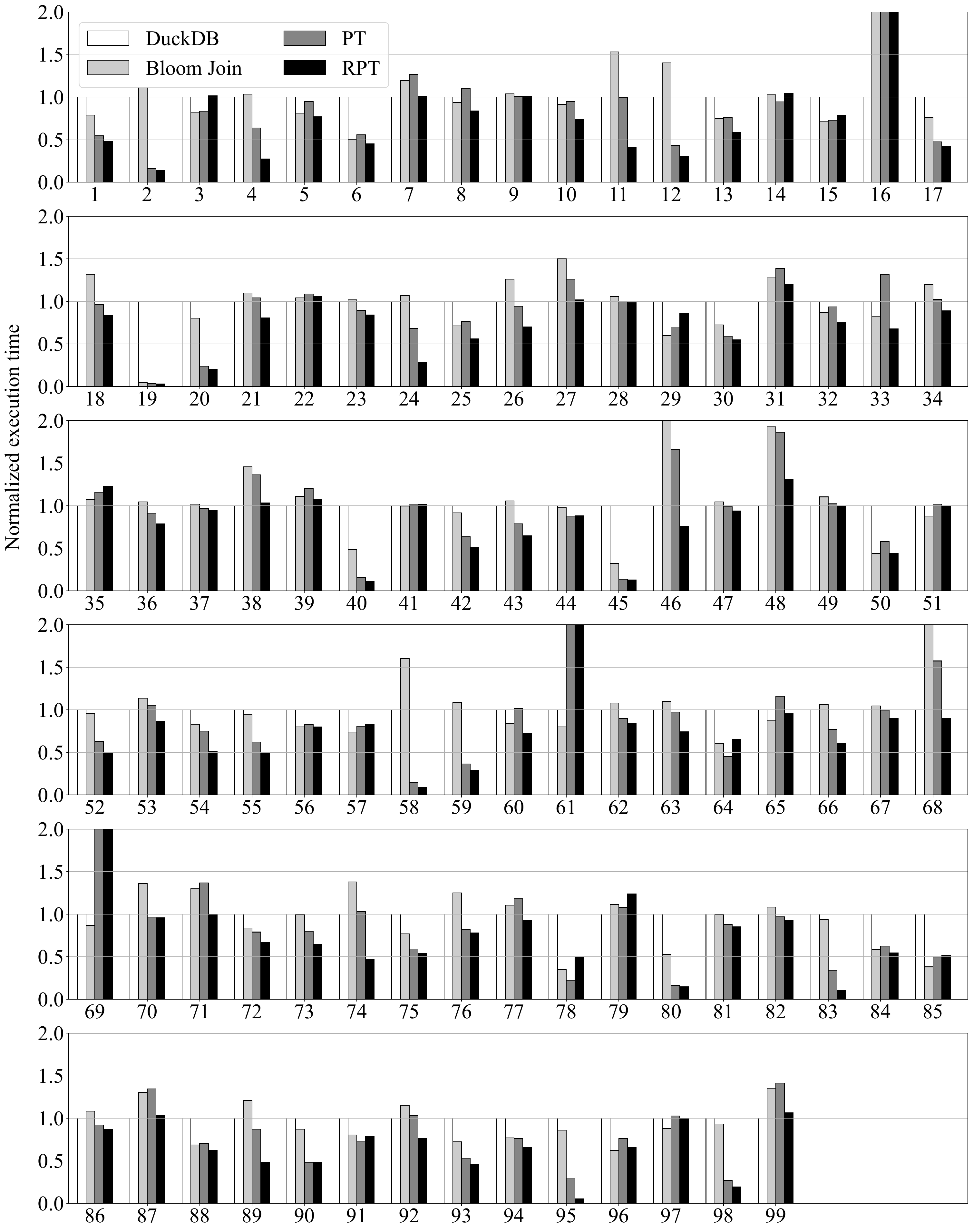}
    \caption{The execution time with optimizer's plans for each query in \tpcds \textnormal{-- Normalized by the execution time of default \duckdb.}}
    \label{fig:tpcds-perf}
\end{figure*}

\begin{figure*}[t!]
    \centering
    \includegraphics[width=\linewidth]{./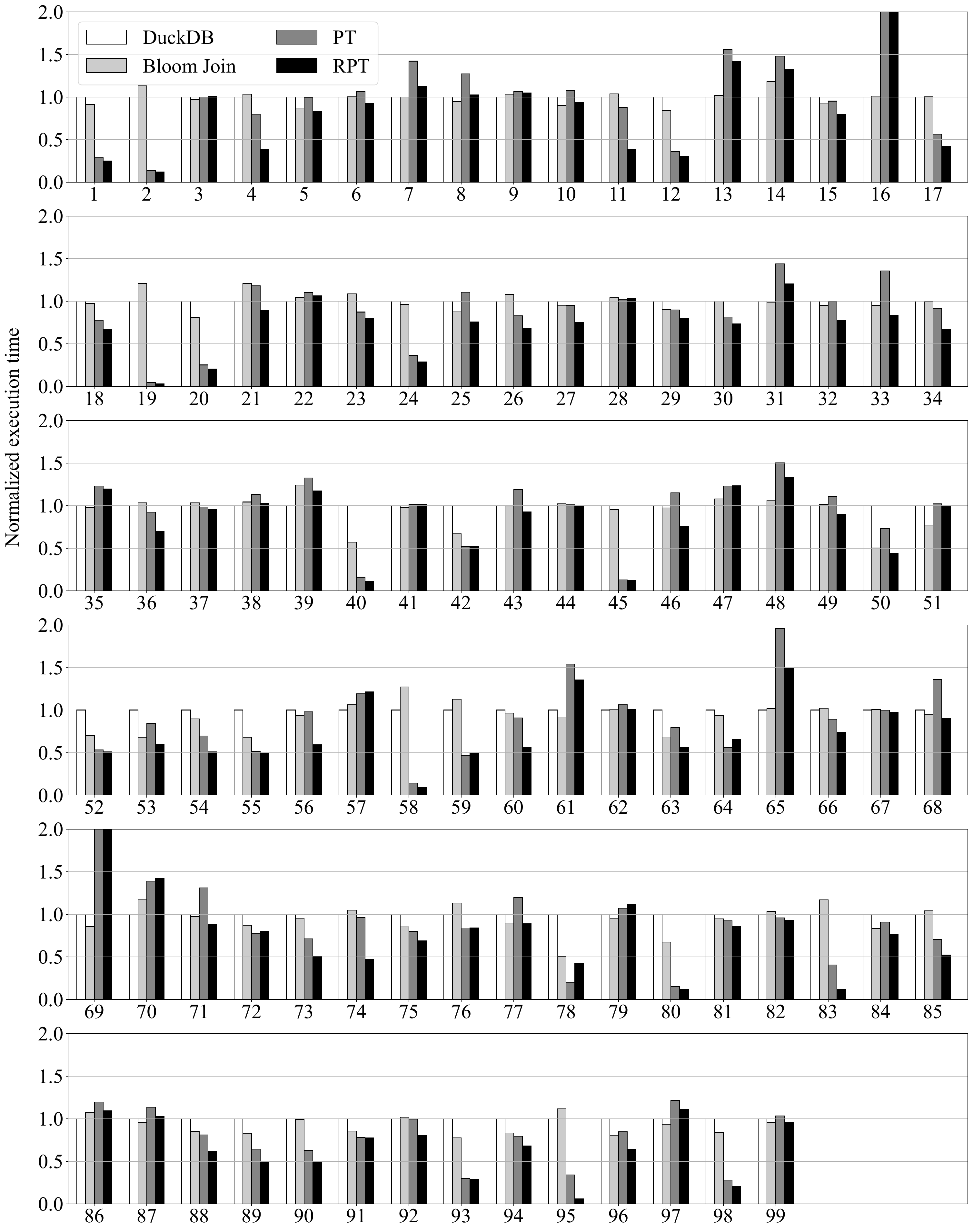}
    \caption{The execution time with optimizer's plans for each query in \dsb \textnormal{-- Normalized by the execution time of default \duckdb.}}
    \label{fig:dsb-perf}
\end{figure*}
\section{Additional Robustness Results (Left Deep)}

In Appendix B, we present the distribution of execution times with random left-deep plans for each query of \RPT, compared to our baseline methods: DuckDB, Bloom Join, and \PT. These results are shown in \Cref{fig:tpch-left} (\tpch), \Cref{fig:job-left} (\job), \Cref{fig:tpcds-left-a} (\tpcds query 1-52), \Cref{fig:tpcds-left-b} (\tpcds query 53-99), \Cref{fig:dsb-left-a} (\dsb query 1-52) and \Cref{fig:dsb-left-b} (\dsb query 53-99).

For most acyclic queries, both \RPT and \PT outperform vanilla DuckDB and Bloom Join in terms of robustness. However, for specific acyclic queries (e.g., JOB 32a and 32b, TPC-DS 54 and 83, DSB 54 and 83), \PT is also not robust. This is because the \StoL transfer algorithm used by \PT lacks a theoretical guarantee.

Even for cyclic queries, \RPT can improve robustness to some extent. However, due to the absence of the theoretical guarantee for cyclic queries, \RPT fails to constrain their maximum execution time.

\begin{figure*}[t!]
    \centering
    \begin{subfigure}{0.44\linewidth}
        \includegraphics[width=\linewidth]{./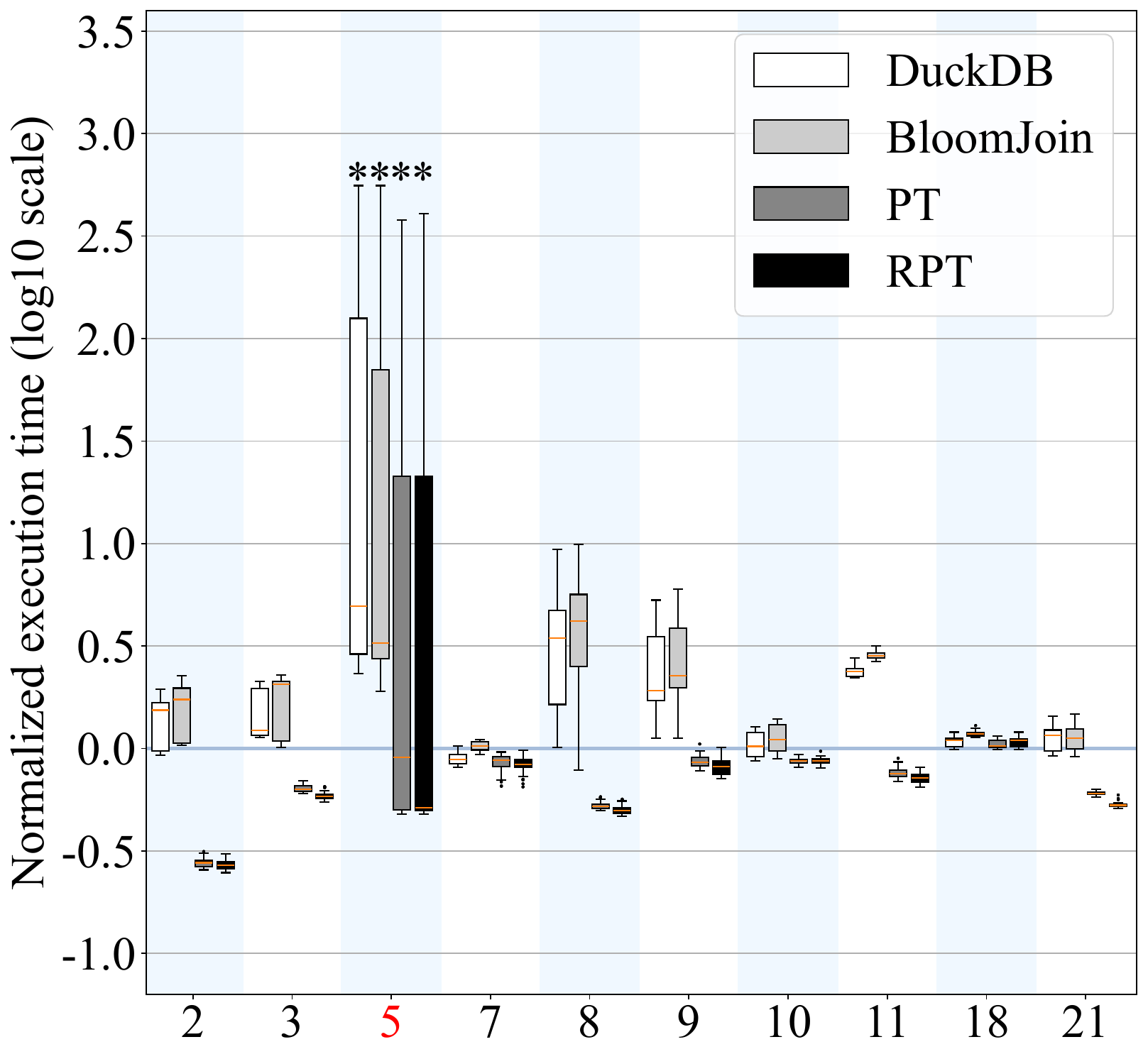}
        \caption{Only left deep}
        \label{fig:tpch-left}
    \end{subfigure}
    \begin{subfigure}{0.44\linewidth}
        \includegraphics[width=\linewidth]{./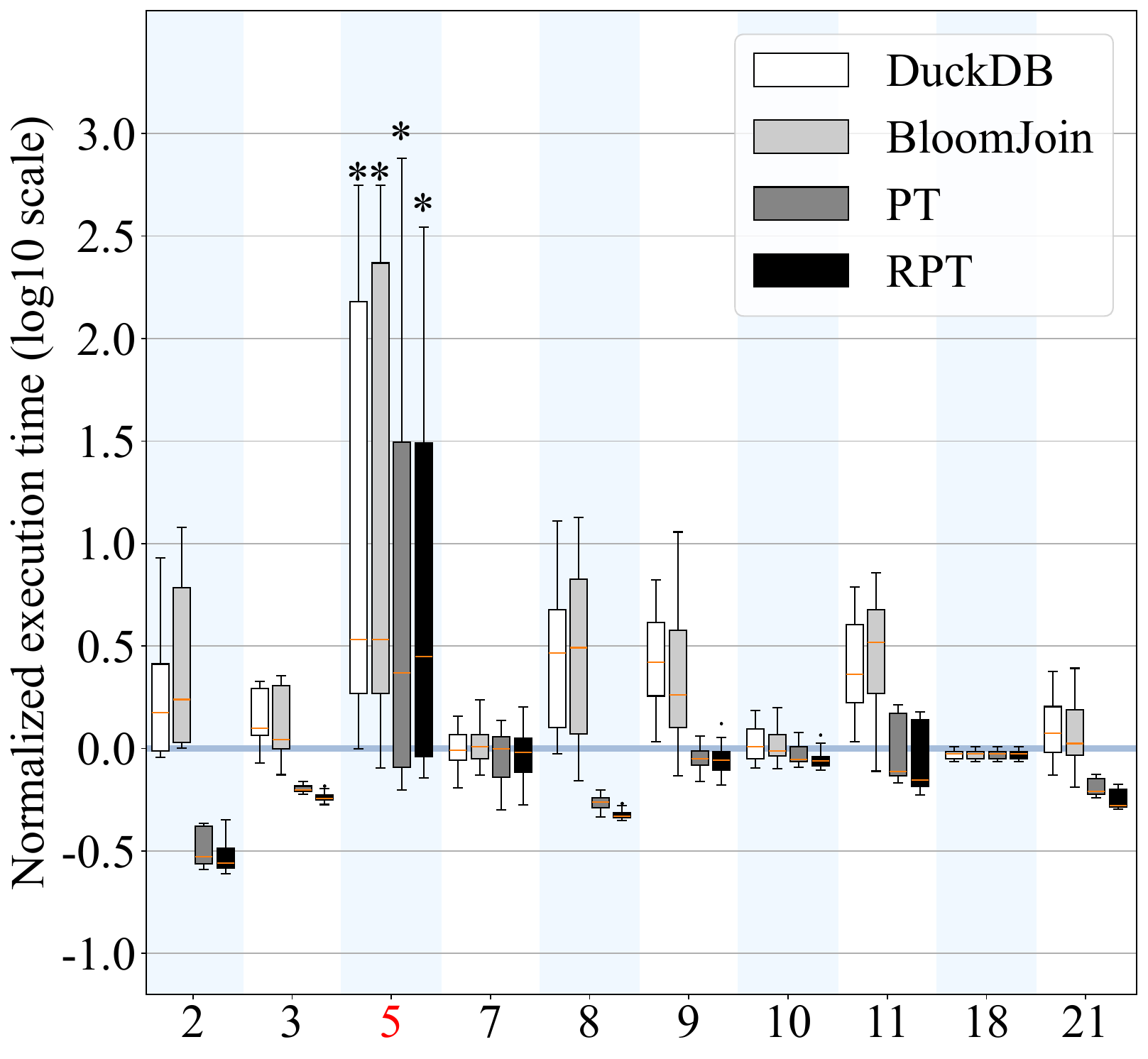}
        \caption{Bushy}
        \label{fig:tpch-bushy}
    \end{subfigure}
    \centering
    \caption{The distribution of execution time with random left deep plans for each query in \tpch \textnormal{-- Normalized by the execution time of default \duckdb. The figure is log-scaled. The box denotes 25- to 75-percentile (with the orange line as the median), while the horizontal lines denote min and max (excluding outliers). `*' indicates timeouts. Cyclic queries are in red.}}
\end{figure*}

\begin{figure*}[t!]
    \centering
    \includegraphics[width=0.9\linewidth]{./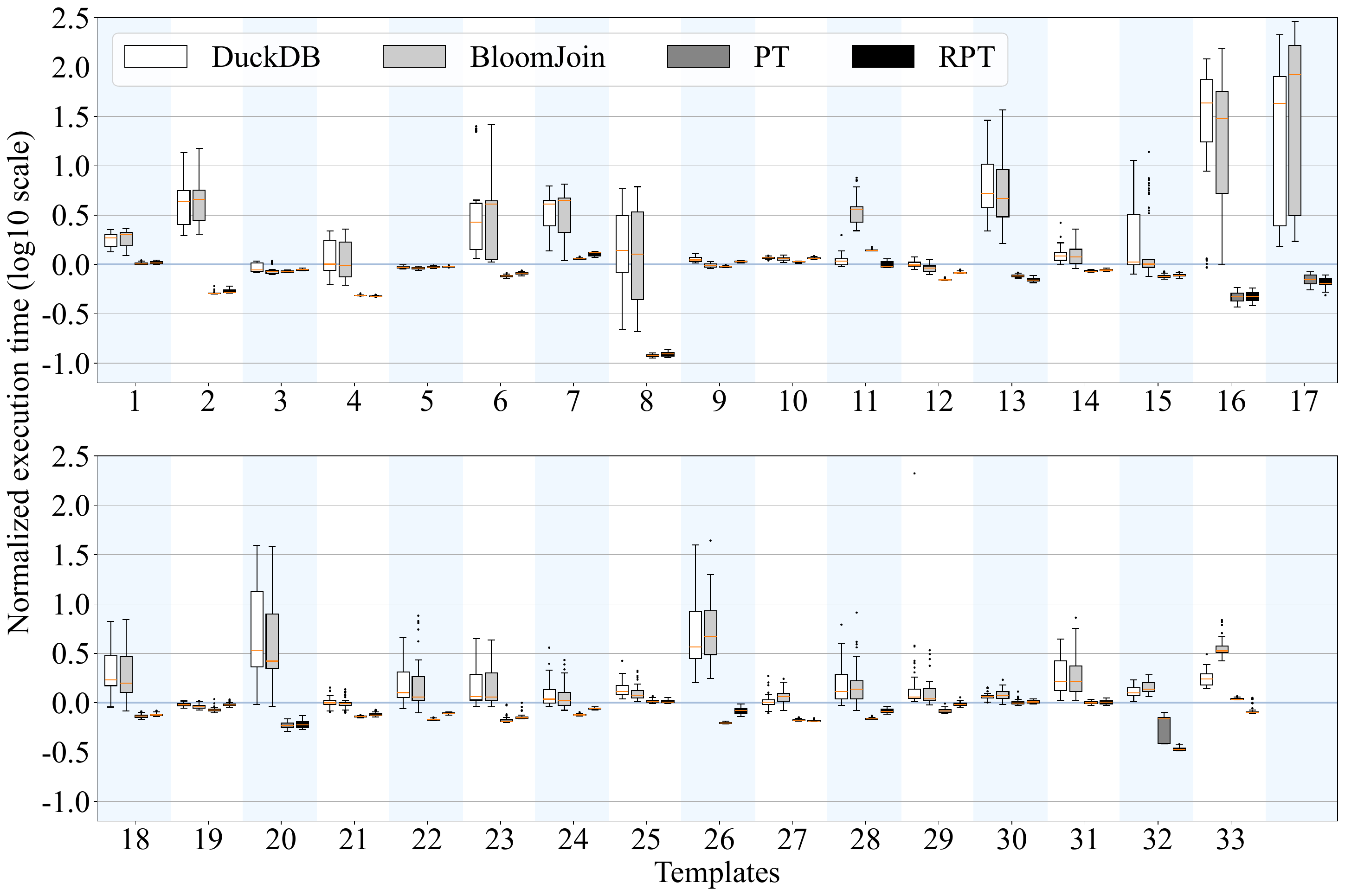}
    \caption{The distribution of execution time with random left deep plans for each template in \job \textnormal{-- Normalized by the execution time of default \duckdb. The figure is log-scaled. The box denotes 25- to 75-percentile (with the orange line as the median), while the horizontal lines denote min and max (excluding outliers).}}
    \label{fig:job-left}
\end{figure*}

\begin{figure*}[t!]
    \centering
    \includegraphics[width=\linewidth]{./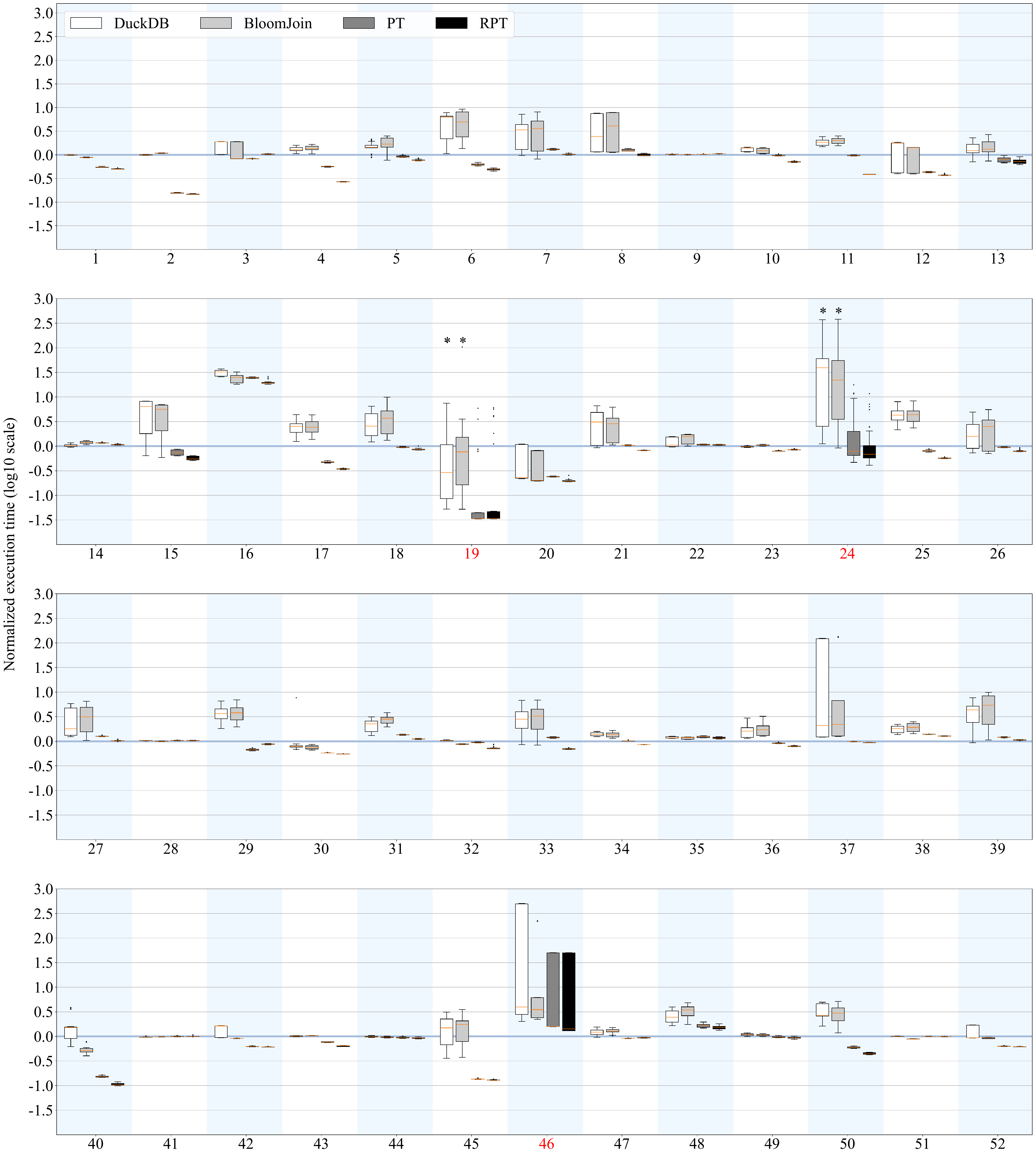}
    \caption{The distribution of execution time with random left deep plans for each query (1 - 52) in \tpcds \textnormal{-- Normalized by the execution time of default \duckdb. The figure is log-scaled. The box denotes 25- to 75-percentile (with the orange line as the median), while the horizontal lines denote min and max (excluding outliers). `*' indicates timeouts. Cyclic queries are in red.}}
    \label{fig:tpcds-left-a}
\end{figure*}

\begin{figure*}[t!]
    \centering
    \includegraphics[width=\linewidth]{./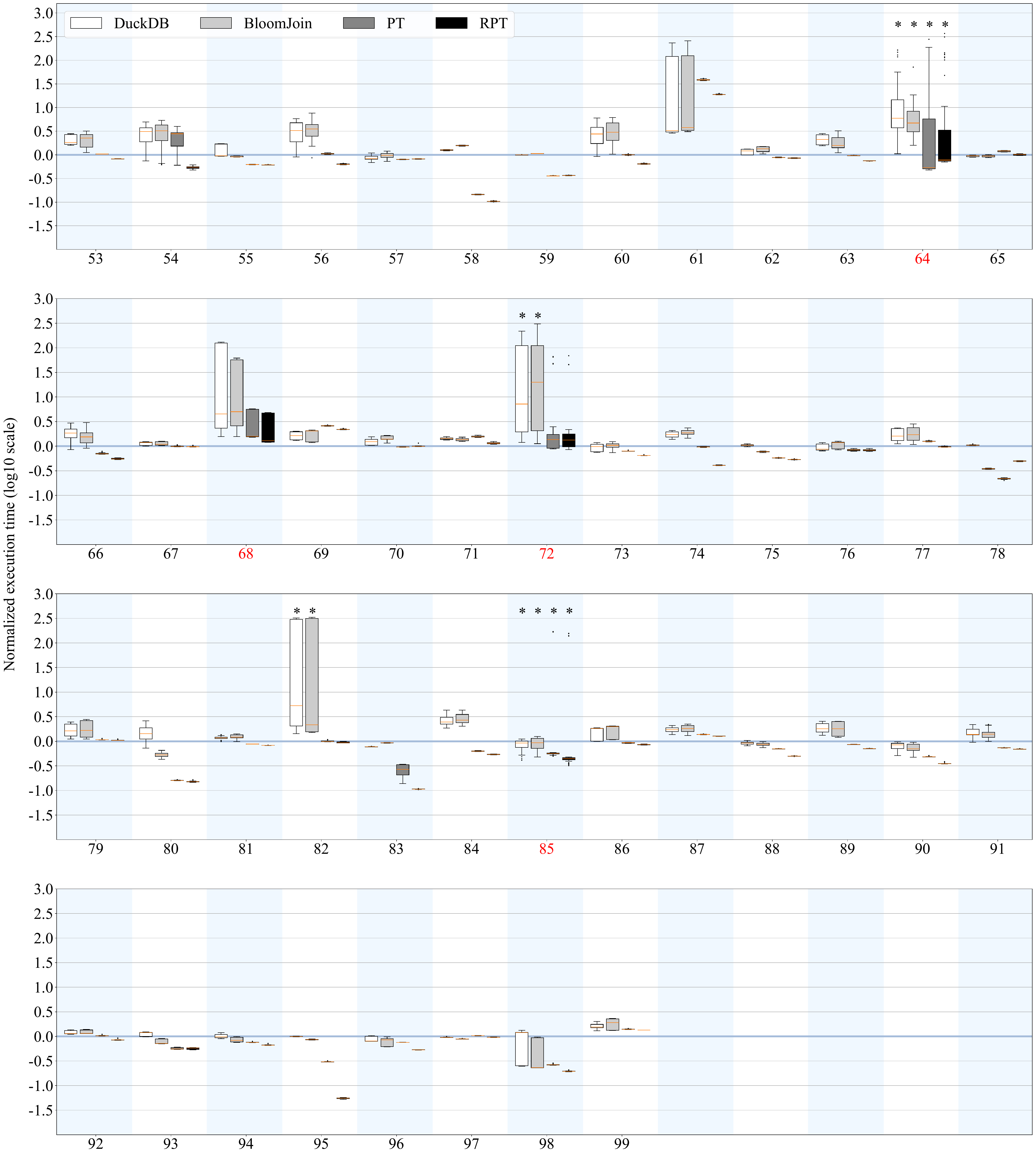}
    \caption{The distribution of execution time with random left deep plans for each query (53 - 99) in \tpcds \textnormal{-- Normalized by the execution time of default \duckdb. The figure is log-scaled. The box denotes 25- to 75-percentile (with the orange line as the median), while the horizontal lines denote min and max (excluding outliers). `*' indicates timeouts. Cyclic queries are in red.}}
    \label{fig:tpcds-left-b}
\end{figure*}

\begin{figure*}[t!]
    \centering
    \includegraphics[width=\linewidth]{./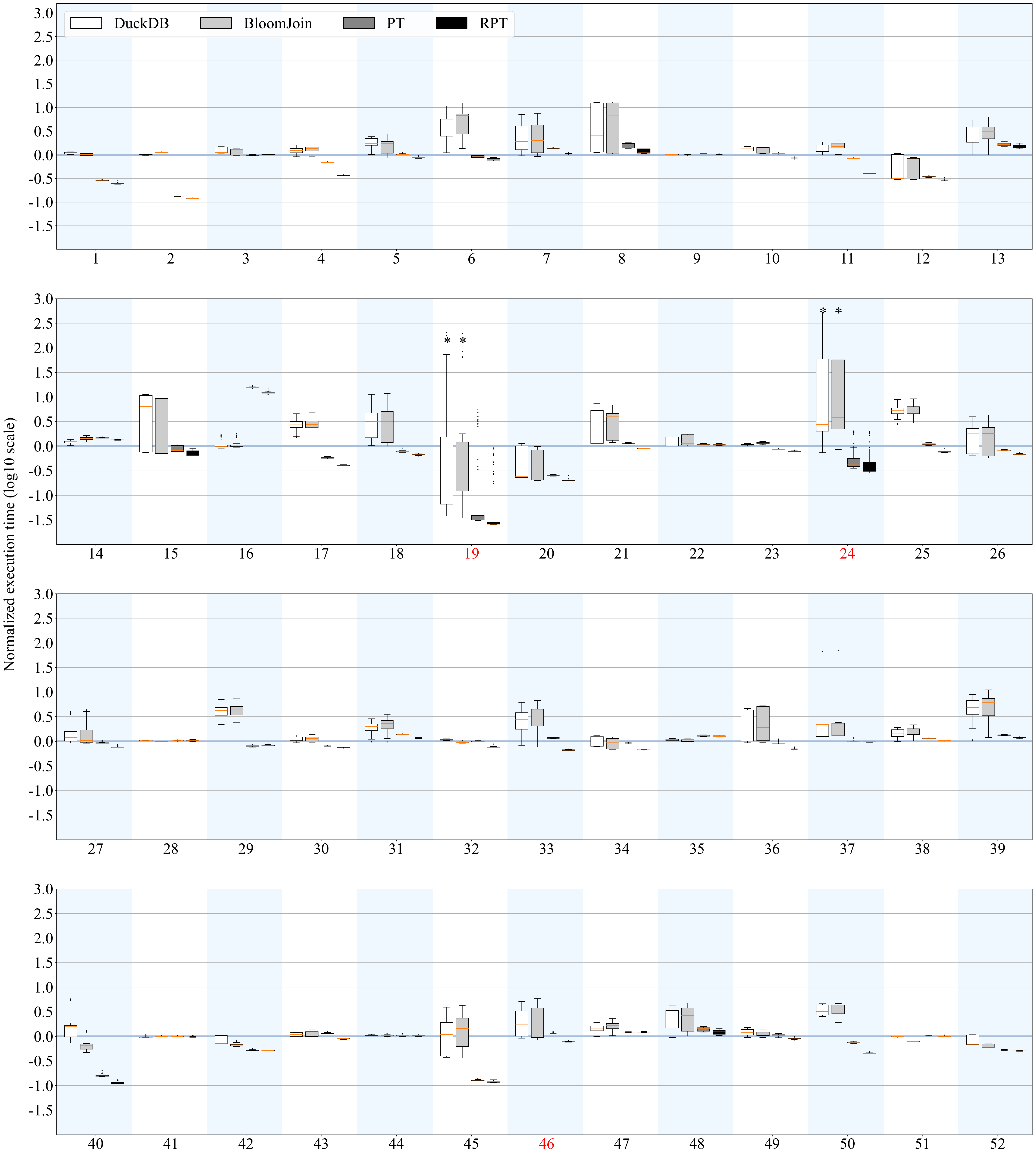}
    \caption{The distribution of execution time with random left deep plans for each query (1 - 52) in \dsb \textnormal{-- Normalized by the execution time of default \duckdb. The figure is log-scaled. The box denotes 25- to 75-percentile (with the orange line as the median), while the horizontal lines denote min and max (excluding outliers). `*' indicates timeouts. Cyclic queries are in red.}}
    \label{fig:dsb-left-a}
\end{figure*}

\begin{figure*}[t!]
    \centering
    \includegraphics[width=\linewidth]{./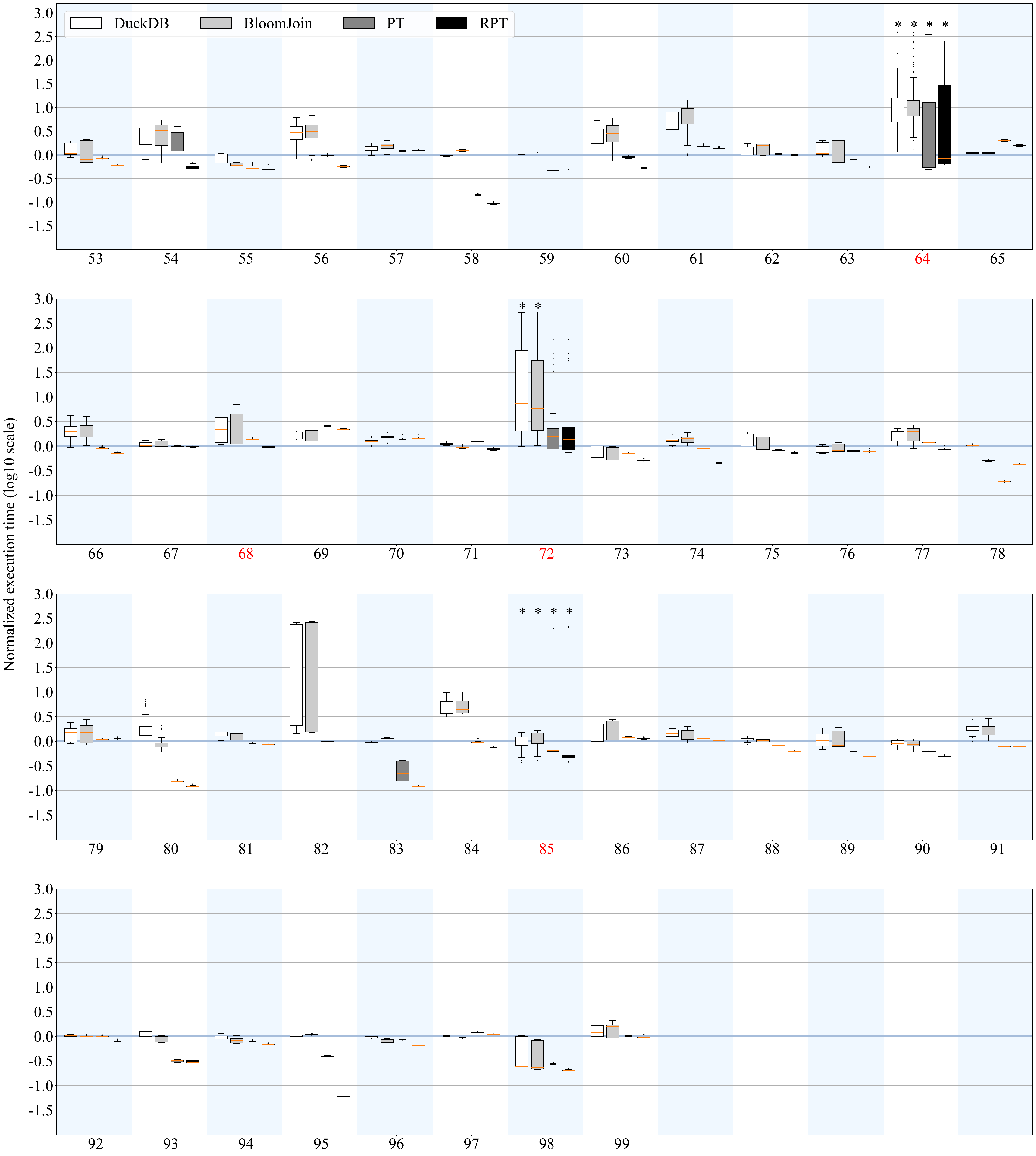}
    \caption{The distribution of execution time with random left deep plans for each query (53 - 99) in \dsb \textnormal{-- Normalized by the execution time of default \duckdb. The figure is log-scaled. The box denotes 25- to 75-percentile (with the orange line as the median), while the horizontal lines denote min and max (excluding outliers). `*' indicates timeouts. Cyclic queries are in red.}}
    \label{fig:dsb-left-b}
\end{figure*}
\section{Additional Robustness Results (Bushy)}

In Appendix C, we present the distribution of execution time with random bushy plans for each query of \RPT, compared to our baseline methods: vanilla DuckDB, Bloom Join, and \PT. These results are shown in \Cref{fig:tpch-bushy} (\tpch) and \Cref{fig:job-bushy} (\job), \Cref{fig:tpcds-bushy-a} (\tpcds query 1-52), \Cref{fig:tpcds-bushy-b} (\tpcds query 53-99), \Cref{fig:dsb-bushy-a} (\dsb query 1-52) and \Cref{fig:dsb-bushy-b} (\dsb query 53-99).

The conclusions are consistent with the left-deep results, but we observe a deterioration in robustness. As discussed in the paper, this can be attributed to incorrect probe-build side selection during the hash join.

\begin{figure*}[t!]
    \centering
    \includegraphics[width=\linewidth]{./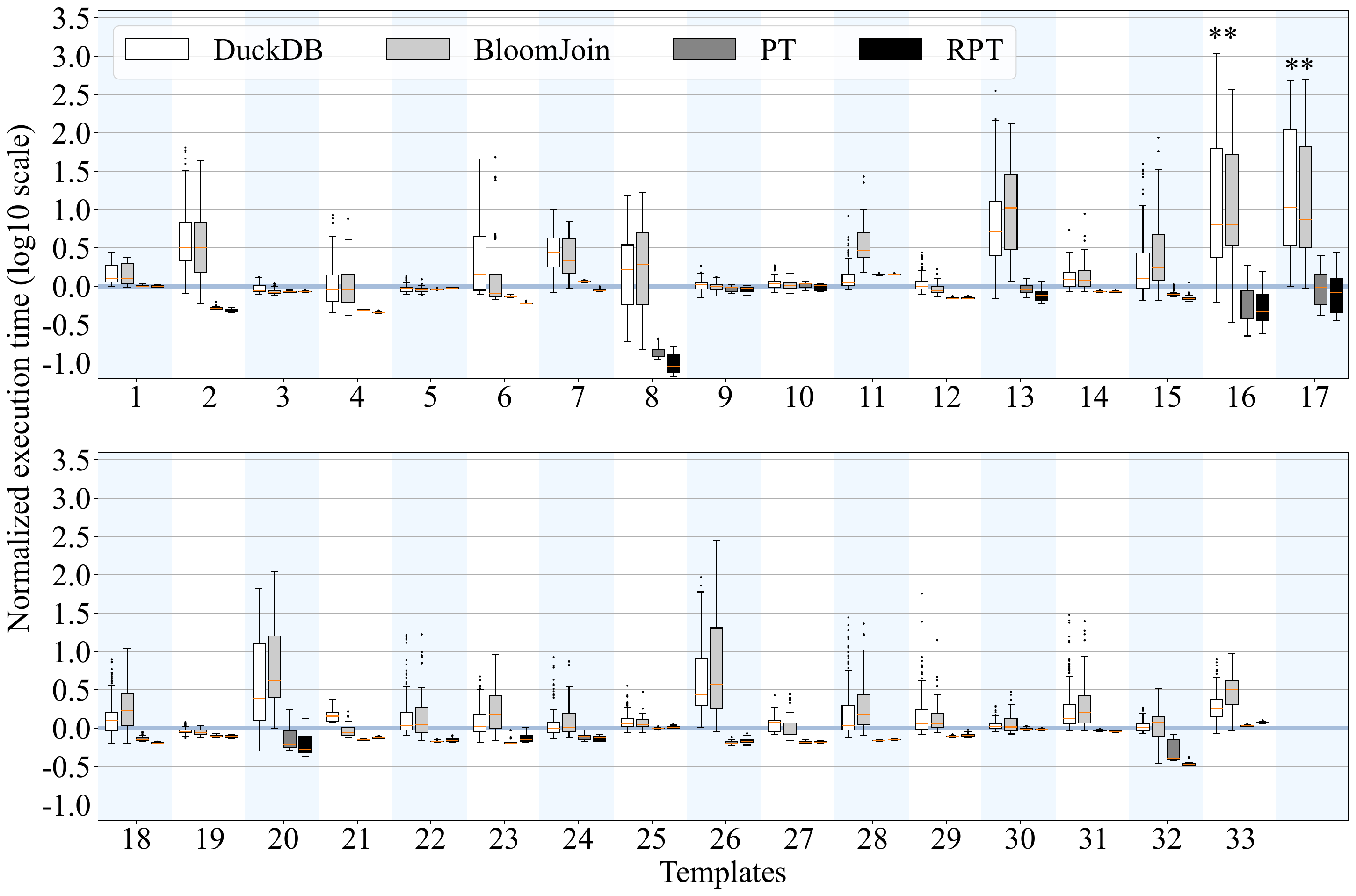}
    \caption{The distribution of execution time with random bushy plans for each query in \job \textnormal{-- Normalized by the execution time of default \duckdb. The figure is log-scaled. The box denotes 25- to 75-percentile (with the orange line as the median), while the horizontal lines denote min and max (excluding outliers).}}
    \label{fig:job-bushy}
\end{figure*}

\begin{figure*}[t!]
    \centering
    \includegraphics[width=\linewidth]{./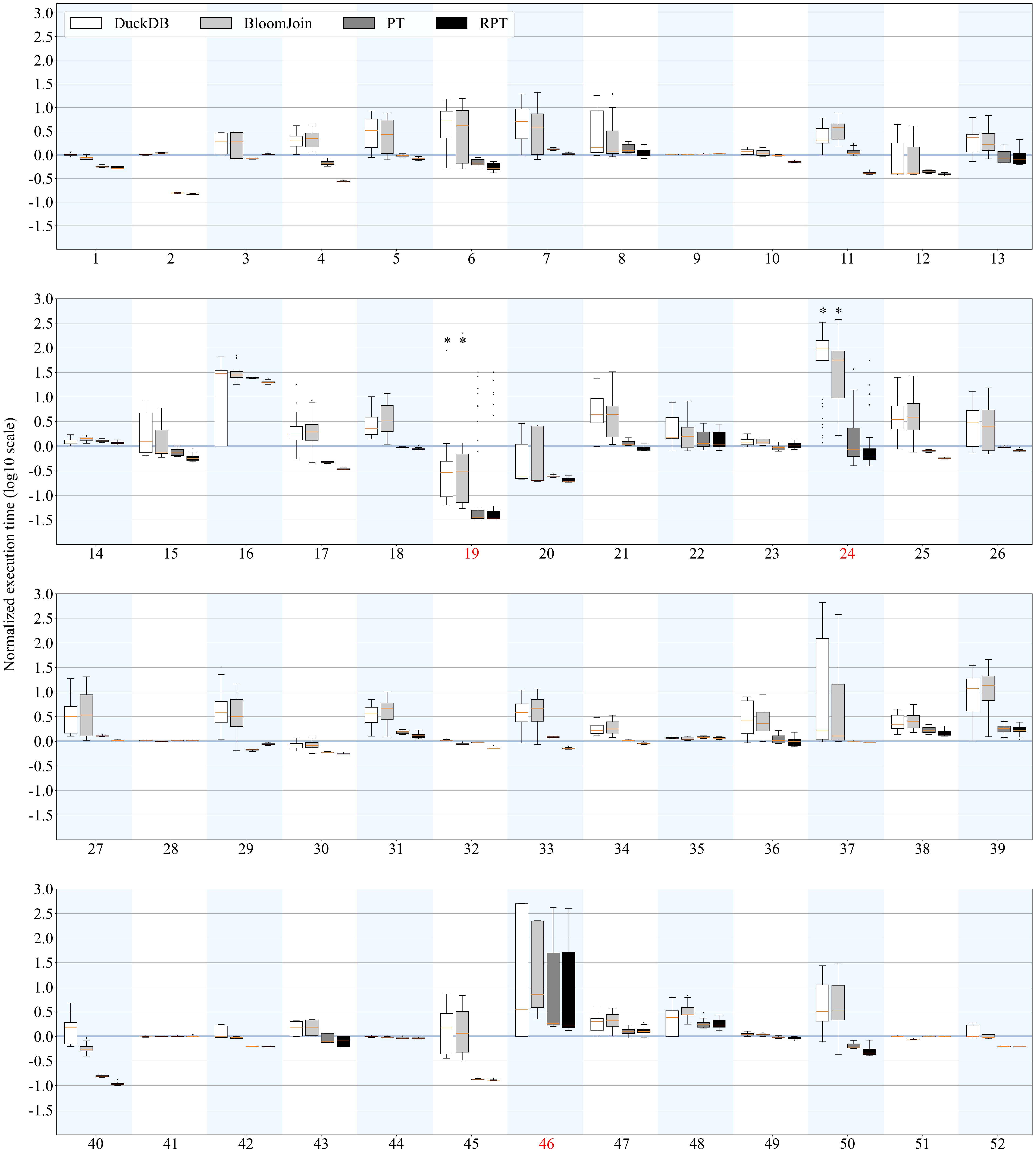}
    \caption{The distribution of execution time with random bushy plans for each query (1-52) in \tpcds \textnormal{-- Normalized by the execution time of default \duckdb. The figure is log-scaled. The box denotes 25- to 75-percentile (with the orange line as the median), while the horizontal lines denote min and max (excluding outliers). `*' indicates timeouts. Cyclic queries are in red.}}
    \label{fig:tpcds-bushy-a}
\end{figure*}

\begin{figure*}[t!]
    \centering
    \includegraphics[width=\linewidth]{./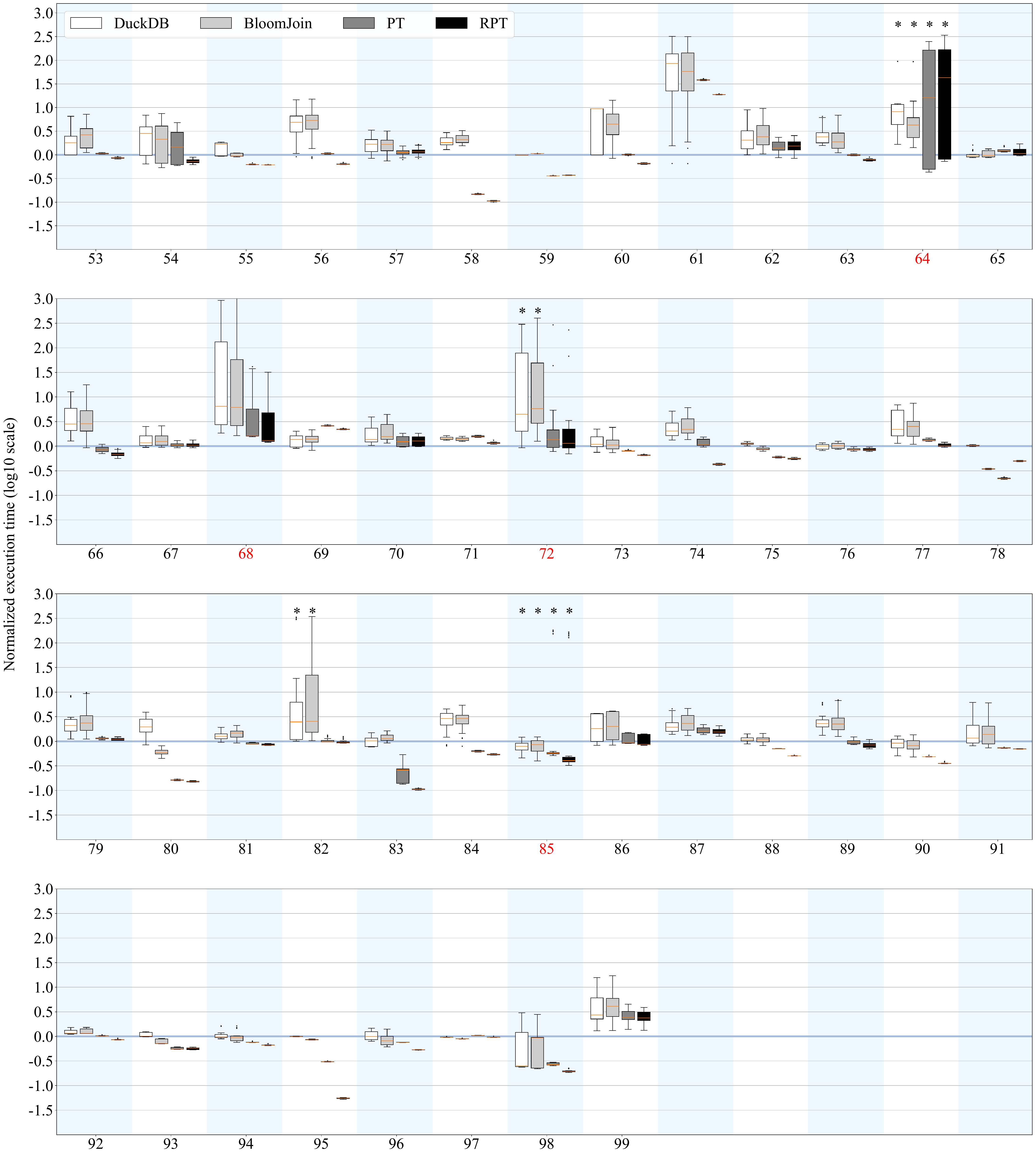}
    \caption{The distribution of execution time with random bushy plans for each query (53-99) in \tpcds \textnormal{-- Normalized by the execution time of default \duckdb. The figure is log-scaled. The box denotes 25- to 75-percentile (with the orange line as the median), while the horizontal lines denote min and max (excluding outliers). `*' indicates timeouts. Cyclic queries are in red.}}
    \label{fig:tpcds-bushy-b}
\end{figure*}

\begin{figure*}[t!]
    \centering
    \includegraphics[width=\linewidth]{./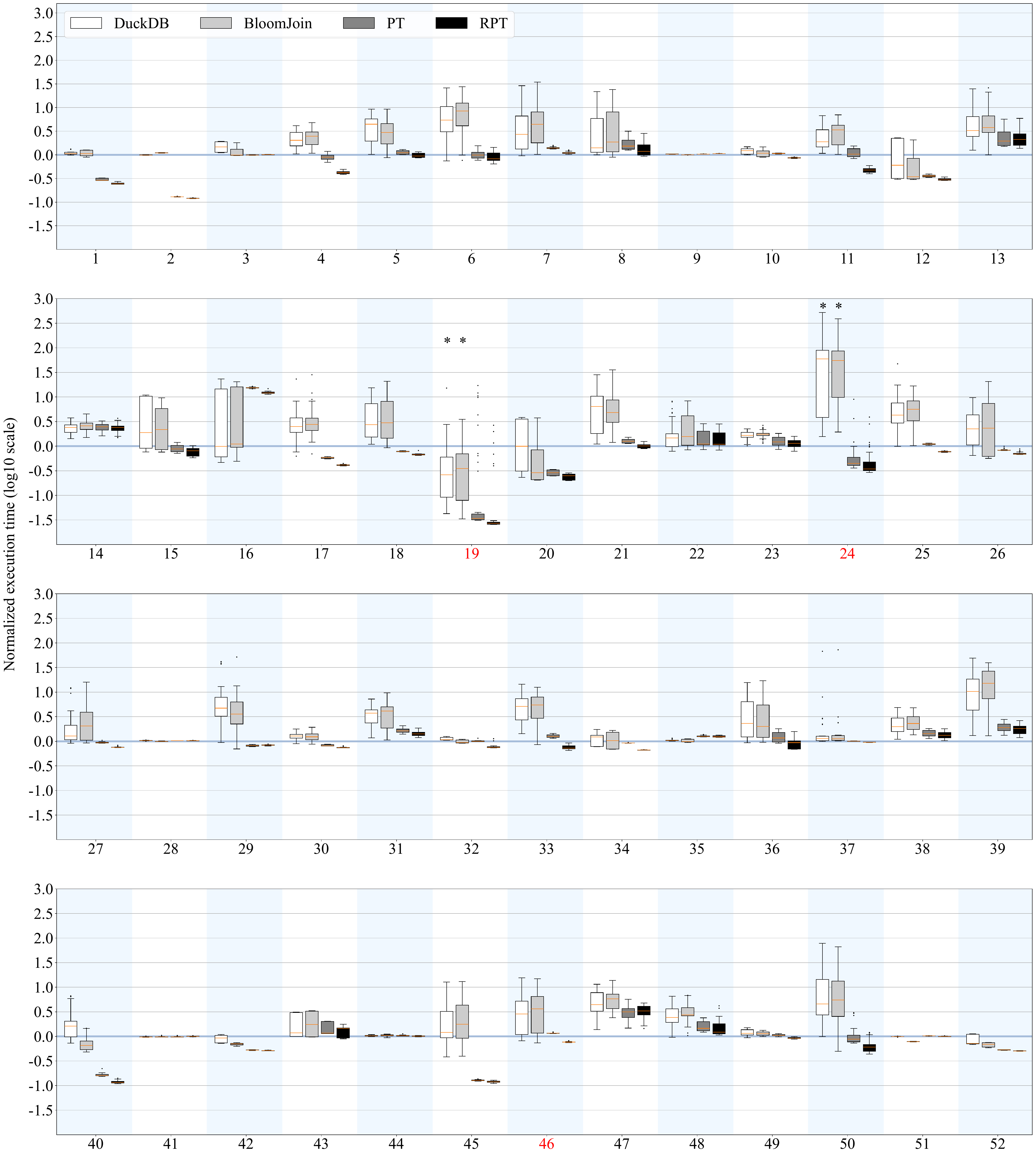}
    \caption{The distribution of execution time with random bushy plans for each query (1 - 52) in \dsb \textnormal{-- Normalized by the execution time of default \duckdb. The figure is log-scaled. The box denotes 25- to 75-percentile (with the orange line as the median), while the horizontal lines denote min and max (excluding outliers). `*' indicates timeouts. Cyclic queries are in red.}}
    \label{fig:dsb-bushy-a}
\end{figure*}

\begin{figure*}[t!]
    \centering
    \includegraphics[width=\linewidth]{./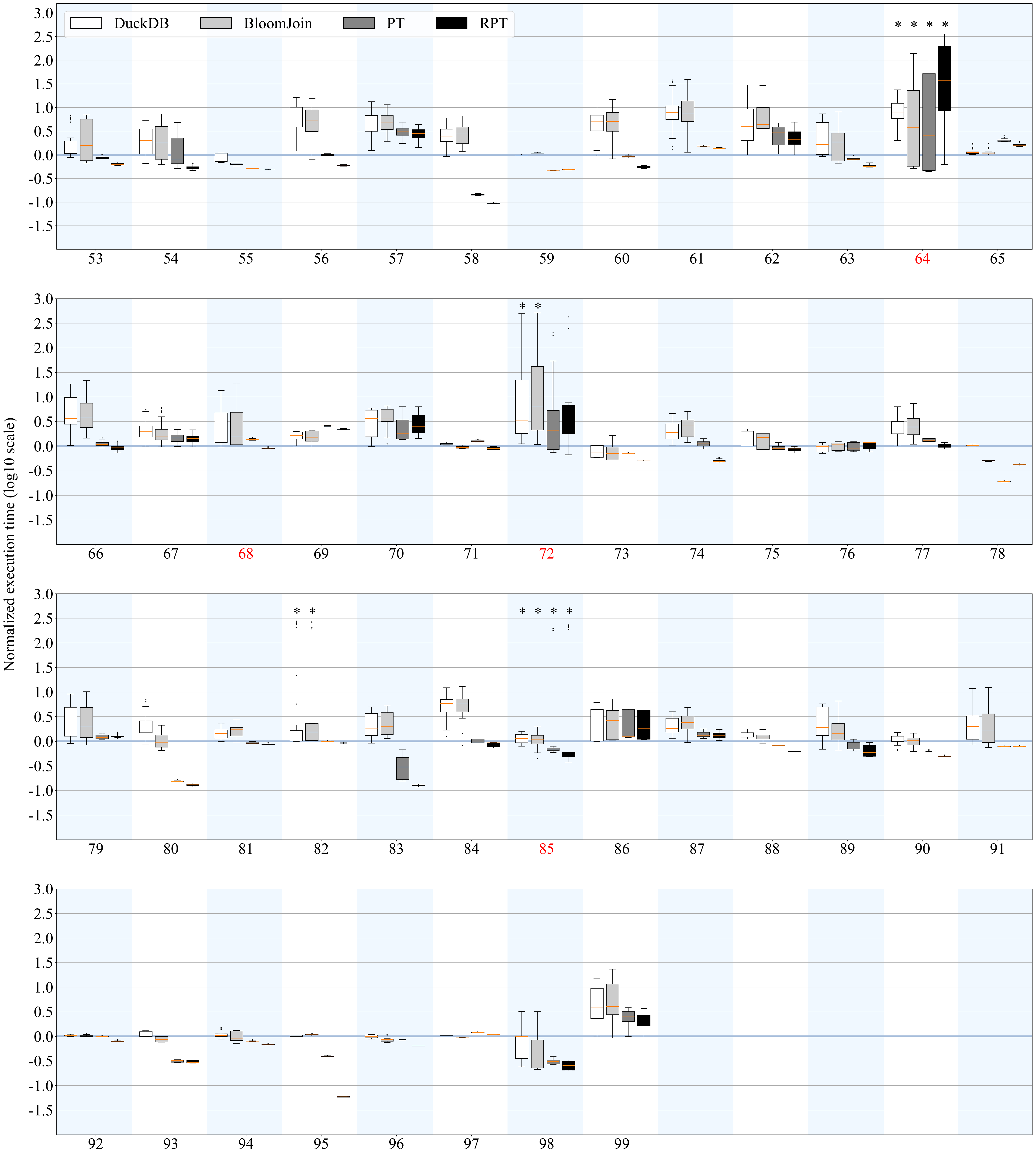}
    \caption{The distribution of execution time with random bushy plans for each query (53 - 99) in \dsb \textnormal{-- Normalized by the execution time of default \duckdb. The figure is log-scaled. The box denotes 25- to 75-percentile (with the orange line as the median), while the horizontal lines denote min and max (excluding outliers). `*' indicates timeouts. Cyclic queries are in red.}}
    \label{fig:dsb-bushy-b}
\end{figure*}

\end{document}